\newcommand{\assign}{\longleftarrow}
\pgfplotsset{compat=1.9}
\definecolor{col1}         {rgb}{0.27,0.39,0.67}
\definecolor{col2}         {rgb}{0.0, 0.59,0.51}
\definecolor{col3}         {rgb}{0.64,0.13,0.14}
\definecolor{col4}         {rgb}{0.87,0.61,0.11}
\definecolor{col5}         {rgb}{0.0, 0.59,0.51}
\definecolor{plotColor1}         {rgb}{0.27,0.39,0.67}
\definecolor{plotColor2}         {rgb}{0.0, 0.59,0.51}
\definecolor{plotColor3}         {rgb}{0.64,0.13,0.14}
\definecolor{plotColor4}         {rgb}{0.87,0.61,0.11}
\definecolor{plotColor5}         {rgb}{0.0, 0.59,0.51}
\newcommand{\eg}{e.\,g.\xspace}
\newcommand{\ie}{i.\,e.\xspace}
\newcommand{\wrt}{wrt.\xspace}
\newcommand{\cf}{cf.\xspace}
\newcommand{\case}[1]{\emph{#1:}}
\DeclareRobustCommand{\ubar}[1]{\ensuremath{\underaccent{\bar}{#1}}}
\newcommand{\obar}[1]{\ensuremath{\bar{#1}}}
\newcommand*{\minx}{\mathmakebox[\widthof{$\max$}][r]{\min}}
\newcommand{\mytilde}{\ensuremath{\operatorname{\hspace{2pt}\widetilde{\protect\textcolor{white}{\protect\raisebox{1.0pt}{c}}}}\hspace{-9pt}}}
\newcommand{\inverse}{{\hspace{0.5pt}\protect\raisebox{0.5pt}{\protect\scalebox{0.6}[1.0]{$\scriptstyle-$}\hspace{-0.6pt}$\scriptscriptstyle1$}\hspace{-1pt}}}
\newcommand{\graph}{\ensuremath{G}}
\newcommand{\vertices}{\ensuremath{V}}
\newcommand{\arcs}{\ensuremath{A}}
\newcommand{\vertex}{\ensuremath{v}}
\newcommand{\vertexa}{\ensuremath{u}}
\newcommand{\vertexb}{\ensuremath{v}}
\newcommand{\vertexc}{\ensuremath{w}}
\newcommand{\vertexd}{\ensuremath{x}}
\newcommand{\vertexe}{\ensuremath{y}}
\newcommand{\vertexf}{\ensuremath{z}}
\newcommand{\arc}{\ensuremath{a}}
\newcommand{\reals}{\ensuremath{\mathbb{R}}}
\newcommand{\posreals}{\ensuremath{\reals_{\ge 0}}}
\newcommand{\strictposreals}{\ensuremath{\reals_{>0}}}
\newcommand{\naturals}{\ensuremath{\mathbb{N}}}
\newcommand{\apath}{\ensuremath{P}}
\newcommand{\patha}{\ensuremath{P}}
\newcommand{\pathb}{\ensuremath{Q}}
\newcommand{\source}{\ensuremath{s}}
\newcommand{\target}{\ensuremath{t}}
\newcommand{\NP}{\ensuremath{\mathcal{NP}}}
\newcommand{\queue}{\ensuremath{Q}}
\newcommand{\drivingtimefunction}{\ensuremath{d}\xspace}
\newcommand{\consumptionfunction}{\ensuremath{c}\xspace}
\DeclareMathOperator{\profileCost}{cost}
\DeclareMathOperator{\profileMinBat}{in}
\DeclareMathOperator{\profileMaxBat}{out}
\DeclareMathOperator{\mergeop}{merge}
\DeclareMathOperator{\linkop}{link}
\DeclareMathOperator{\shiftop}{shift}
\DeclareMathOperator{\extendop}{extend}
\DeclareMathOperator{\distance}{dist}
\newcommand{\atime}{\ensuremath{\tau}}
\newcommand{\soc}{\ensuremath{\beta}}
\newcommand{\labelset}{\ensuremath{L}}
\newcommand{\alabel}{\ensuremath{\ell}}
\newcommand{\chargingstations}{\ensuremath{S}}
\newcommand{\chargingfunction}{\ensuremath{\operatorname{cf}}}
\newcommand{\simplechargingfunction}{\ensuremath{\mytilde{}\operatorname{cf}}}
\newcommand{\simplechargingfunctionInv}{\ensuremath{\mytilde{}\operatorname{cf}^{\inverse}}}
\newcommand{\maxbattery}{\ensuremath{M}}
\newcommand{\potential}{\ensuremath{\pi}}
\newcommand{\maxchargingspeed}{\chargingfunction_{\max}}
\newcommand{\omegaweightfunction}{\omega}
\newcommand{\omegapotential}{\potential_\omegaweightfunction}
\newcommand{\convexweightfunction}{f}
\newcommand{\convexpotential}{\potential_\convexweightfunction}
\newcommand{\convexfunctionbreakpoints}{\ensuremath{F}}
\newcommand{\heuristic}{\operatorname{H}}
\newcommand{\queuekey}{\ensuremath{k}}
\newcommand{\switchingsequence}{\ensuremath{T}}
\newcommand{\reduceddrivingtimefunction}{\ensuremath{\underline{\drivingtimefunction}}}
\newcommand{\convexlowerboundfunction}{\ensuremath{\ubar{f}}}
\newcommand{\functionbreakpoints}{\ensuremath{F}}
\newcommand{\convexfunctiona}{\ensuremath{{f_1}}}
\newcommand{\convexfunctionb}{\ensuremath{{f_2}}}
\newcommand{\convexfunctionc}{\ensuremath{{f}}}
\newcommand{\convexfunctionbreakpointsa}{\ensuremath{{F_1}}}
\newcommand{\convexfunctionbreakpointsb}{\ensuremath{{F_2}}}
\newcommand{\convexfunctionbreakpointsc}{\ensuremath{{F}}}
\newcommand{\functionasubscript}{\ensuremath{1}}
\newcommand{\functionbsubscript}{\ensuremath{2}}
\newcommand{\convexfunctionslope}{\ensuremath{\sigma}}
\newcommand{\derivative}{\ensuremath{\partial}}
\newcommand{\afunction}{\ensuremath{f}}
\newcommand{\algoheuconvex}{\ensuremath{\heuristic_\convexweightfunction}\xspace}
\newcommand{\algoheuomega}{\ensuremath{\heuristic_\omegaweightfunction}\xspace}
\newcommand{\algoheuomegaaggressive}{\ensuremath{\heuristic_\omegaweightfunction^\text{A}}\xspace}
\newcommand{\socprofile}{\ensuremath{b}}
\newcommand{\departure}{\ensuremath{\text{dep}}}
\newcommand{\arrival}{\ensuremath{\text{arr}}}
\newcommand{\departuresoc}{\ensuremath{\soc^\departure}}
\newcommand{\arrivalsoc}{\ensuremath{\soc^\arrival}}
\newcommand{\mincharge}{\ensuremath{\soc^{\text{min}}}}
\newcommand{\maxcharge}{\ensuremath{\soc^{\text{max}}}}
\newcommand{\slopeOld}[1]{\ensuremath{\angle^{#1}_{\text{old}}}\xspace}
\newcommand{\slopeNew}[1]{\ensuremath{\angle^{#1}_{\text{new}}}\xspace}
\newcommand{\triptimesubscr}{\operatorname{trip}}
\newcommand{\triptime}{\ensuremath{\atime_{\triptimesubscr}}}
\newcommand{\arrangementtime}{\ensuremath{\atime_{\operatorname{init}}}}
\newcommand{\chargingtime}{\ensuremath{\atime_{\operatorname{charge}}}}
\newcommand{\maxtime}{\ensuremath{\atime^{\operatorname{max}}}}
\newcommand{\socfunction}[1]{\ensuremath{f\langle#1\rangle}}
\newcommand{\minfeasibletime}[1]{\ensuremath{\atime_{\operatorname{min}}(#1)}}
\newcommand{\charginggap}{\ensuremath{\Delta_{\operatorname{charge}}}}
\newcommand{\timegap}{\ensuremath{\Delta_{\operatorname{trip}}}}
\newcommand{\distancelabelWithChargingTime}[1]{\ensuremath{{#1\!\to\alabel}}}
\newcommand{\vertical}[1]{\begin{rotate}{90}#1\end{rotate}}
\definecolor{thesisblue}         {rgb}{0.337, 0.592, 0.773}
\definecolor{thesisblue-dark}    {rgb}{0.212, 0.471, 0.655}
\definecolor{thesisblue-light}   {rgb}{0.490, 0.710, 0.867}
\definecolor{thesisblue-vlight}  {rgb}{0.871, 0.937, 0.988}
\definecolor{thesisred}          {rgb}{0.776, 0.357, 0.396}
\definecolor{thesisred-dark}     {rgb}{0.663, 0.235, 0.271}
\definecolor{thesisred-light}    {rgb}{0.867, 0.502, 0.533}
\definecolor{thesisred-vlight}   {rgb}{0.984, 0.871, 0.878}
\definecolor{thesisgreen}        {rgb}{0.337, 0.765, 0.235}
\definecolor{thesisgreen-dark}   {rgb}{0.267, 0.612, 0.184}
\definecolor{thesisgreen-light}  {rgb}{0.443, 0.871, 0.341}
\definecolor{thesisgreen-vlight} {rgb}{0.710, 0.984, 0.643}
\definecolor{thesisyellow}       {rgb}{0.808, 0.659, 0.263}
\definecolor{thesisyellow-dark}  {rgb}{0.667, 0.529, 0.180}
\definecolor{thesisyellow-light} {rgb}{0.898, 0.765, 0.412}
\definecolor{thesisyellow-vlight}{rgb}{0.992, 0.933, 0.776}
\definecolor{black}              {rgb}{0,0,0}
\definecolor{black70}            {rgb}{0.30,0.30,0.30}
\definecolor{black50}            {rgb}{0.50,0.50,0.50}
\definecolor{black30}            {rgb}{0.70,0.70,0.70}
\definecolor{black25}            {rgb}{0.75,0.75,0.75}
\definecolor{black15}            {rgb}{0.85,0.85,0.85}
\definecolor{black7}             {rgb}{0.93,0.93,0.93}
\colorlet{primarycolor-dark}      {thesisblue-dark}
\colorlet{primarycolor}           {thesisblue}
\colorlet{primarycolor-light}     {thesisblue-light}
\colorlet{primarycolor-vlight}    {thesisblue-vlight}
\colorlet{secondarycolor-dark}    {thesisred-dark}
\colorlet{secondarycolor}         {thesisred}
\colorlet{secondarycolor-light}   {thesisred-light}
\colorlet{secondarycolor-vlight}  {thesisred-vlight}
\tikzset{>=stealth'}
\tikzstyle{figure}=[font=\footnotesize]
\tikzstyle{descrarrow}=[->,shorten <=4pt, shorten >=4pt]
\tikzstyle{functionlabel}=[pos=0.5,above=2pt,edgelabel,sloped]
\tikzstyle{vertex}=[circle, inner sep=0pt, minimum size=6pt, draw=black, fill=primarycolor-vlight]
\tikzstyle{bigvertex}=[vertex, minimum size=11pt]
\tikzstyle{textvertex}=[bigvertex, font=\footnotesize]
\tikzstyle{highlightedvertex}=[vertex, fill=secondarycolor-light]
\tikzstyle{chargingstationvertex}=[double distance=1pt, fill=secondarycolor-light]
\tikzstyle{vertexhull}=[line width=14pt, rounded corners=2pt]
\tikzstyle{edgelabel}=[inner sep=1pt, fill=white]
\tikzstyle{edges}=[shorten >=1pt, shorten <=1pt]
\tikzstyle{function}=[draw=primarycolor,very thick,line cap=rect]
\tikzstyle{functionterm}=[text=primarycolor,edgelabel]
\tikzstyle{discontinuity}=[circle, inner sep=0pt, minimum size=3.5pt,very thick]
\tikzstyle{discontinuityblank}=[discontinuity,draw=primarycolor,fill=white]
\tikzstyle{discontinuityfilled}=[discontinuity,draw=primarycolor,fill=primarycolor]
\tikzstyle{plotgrid}=[color=black30,very thin,dash pattern=on 2pt off 1pt]
\tikzstyle{chargingfunction}=[function,draw=secondarycolor]
\tikzstyle{helperline}=[black]
\tikzstyle{breakpoint}=[fill,primarycolor,circle,inner sep=0pt,minimum size=3.5pt,very thick]
\tikzstyle{plotpoint}=[fill=black,circle,inner sep=0pt,minimum size=3.5pt]
\tikzstyle{tangentline}=[shorten >=-20pt,shorten <= -15pt]
\tikzstyle{markSign} = [mark=*]
\tikzstyle{heuristicMarkSign} = [mark=o]
\tikzstyle{exactMarkSign} = [mark=*]
\tikzstyle{shortenLines} = [shorten <= 3.5pt,shorten >= 3.5pt]
\pgfplotsset{
  /pgfplots/legend line with mark/.style={
    legend image code/.code={
        \draw[##1,no markers,shorten >= 3.5pt,line width=1pt]
         plot coordinates {
         (0cm,0cm)
         (0.3cm,0cm)
        };
        \draw[##1,only marks,markSign,line width=1pt]
         plot coordinates {
         (0.3cm,0cm)
        };
        \draw[##1,no markers,shorten <= 3.5pt,line width=1pt]
         plot coordinates {
         (0.3cm,0cm)
         (0.6cm,0cm)
        };
    }
  }
}
\pgfplotsset{
  /pgfplots/legend line with heuristic mark/.style={
    legend image code/.code={
        \draw[##1,no markers,shorten >= 3.5pt,line width=1pt]
         plot coordinates {
         (0cm,0cm)
         (0.3cm,0cm)
        };
        \draw[##1,only marks,heuristicMarkSign,line width=1pt]
         plot coordinates {
         (0.3cm,0cm)
        };
        \draw[##1,no markers,shorten <= 3.5pt,line width=1pt]
         plot coordinates {
         (0.3cm,0cm)
         (0.6cm,0cm)
        };
    }
  }
}
\pgfplotsset{
  /pgfplots/legend line with exact mark/.style={
    legend image code/.code={
        \draw[##1,no markers,shorten >= 3.5pt,line width=1pt]
         plot coordinates {
         (0cm,0cm)
         (0.3cm,0cm)
        };
        \draw[##1,only marks,exactMarkSign,line width=1pt]
         plot coordinates {
         (0.3cm,0cm)
        };
        \draw[##1,no markers,shorten <= 3.5pt,line width=1pt]
         plot coordinates {
         (0.3cm,0cm)
         (0.6cm,0cm)
        };
    }
  }
}
\pgfplotsset{
  /pgfplots/custom box/.style={
    legend image code/.code={
       \draw[##1,/tikz/.cd,yshift=-0.37em,xshift=-0.35em,line width=0pt,fill opacity=0.5,draw opacity=0]
        (0cm,0cm) rectangle (13pt,0.8em);
    }
  }
}
\newcommand{\ExampleInftyOffset}{1}
\newcommand{\SoCProfileExampleInftyOffset}{1}
\newcommand{\PlotXAxisName}[3]
{
 \node[edgelabel,above=4pt] at (#1,#2) {#3};
}
\newcommand{\PlotYAxisName}[3]
{
 \node[edgelabel,right=2pt] at (#1,#2) {#3};
}
\newcommand{\DrawDiscontinuitySymbolWithLines}[1]
{
 \pgfgettransformentries{\mya}{\myb}{\myc}{\myd}{\mys}{\myt}
 \pgfmathsetmacro{\grow}{0.75/\mya}
 \begin{scope}[scale=\grow,rotate=#1]
  \clip (-0.15,-0.5/\grow) rectangle (0.15,0.5/\grow);
   
  \draw (0,0.06) -- (0,0.5/\grow);
  \draw (0,-0.06) -- (0,-0.5/\grow);

  \draw[line cap=rect,decorate,decoration={snake,segment length=6pt, amplitude=1.25pt}] (-0.435,0.06) -- (0.5,0.06);
  \draw[line cap=rect,decorate,decoration={snake,segment length=6pt, amplitude=1.25pt}] (-0.435,-0.06) -- (0.5,-0.06);  
 \end{scope}
} 
\newcommand{\SoCProfileExampleDrawCoordinateSystem}[2]
{
\draw [color=black30,very thin,dash pattern=on 2pt off 1pt] (0,0) grid (#1,#2);


\begin{pgfonlayer}{foreground}
 \draw[color=black] (0,0) -- (#1,0);
 \draw[color=black] (0pt,2pt) -- (0pt,-2pt);
 \foreach \x in {1,2,...,#1}
   \draw[shift={(\x,0)}] (0pt,2pt) -- (0pt,-2pt) node[below] {$\x$};

 \draw[color=black] (0,0) -- (0,#2);
 \foreach \y in {0,1,...,#2}
   \draw[shift={(0,\y)}] (2pt,0pt) -- (-2pt,0pt) node[left] {$\y$};
 \draw[shift={(0,-\SoCProfileExampleInftyOffset)}] (2pt,0pt) -- (-2pt,0pt) node[left] {$-\infty$};
 \draw[shift={(0,-\SoCProfileExampleInftyOffset)}] (0pt,2pt) -- (0pt,-2pt);
 
 \begin{scope}[shift={(0,-0.5*\SoCProfileExampleInftyOffset)}]
  \DrawDiscontinuitySymbolWithLines{0}
 \end{scope}

\end{pgfonlayer}
}
\newcommand{\SocProfileExampleClipAtPlotBorders}[1]
{
 \begin{pgfinterruptboundingbox}
  \clip (0,-\SoCProfileExampleInftyOffset-1) rectangle (#1,#1+1);
 \end{pgfinterruptboundingbox}
}
\newcommand{\SoCProfileExampleDrawFunction}[5]
{
\begin{scope}
  \SocProfileExampleClipAtPlotBorders{#1}
  \draw [function] (0,-\SoCProfileExampleInftyOffset) -- (#2,-\SoCProfileExampleInftyOffset);
  \draw [function] (#2,#3) -- (#4,#5) -- (#1,#5);
\end{scope}

\begin{pgfonlayer}{foreground}
 \node[discontinuityblank] (d1) at (#2,-\SoCProfileExampleInftyOffset) {};
 \node[discontinuityfilled] (d2) at (#2,#3) {};
\end{pgfonlayer}
}
\newcommand{\ExampleDrawCoordinateSystem}[6]
{
 \begin{scope}
  \begin{pgfinterruptboundingbox}
   \clip (#5,#3-1) rectangle (#2+1,#4+1);
   \draw[plotgrid] (#1-1,#3) grid (#2,#4);
  \end{pgfinterruptboundingbox}
 \end{scope}
 
 \begin{pgfonlayer}{foreground}
  \begin{scope}
   \begin{pgfinterruptboundingbox}
    \clip (#5,#3-1) rectangle (#2+1,#4+1);
    \draw[color=black] (#1-1,#6) -- (#2,#6);
    \draw[shift={(#1-1,#6)},color=black] (0pt,2pt) -- (0pt,-2pt);  
   \end{pgfinterruptboundingbox}
  \end{scope}

  \draw[color=black] (#5,#3) -- (#5,#4);
 \end{pgfonlayer}
 
 \node at (#2,#4) {};
}
\newcommand{\ExampleDrawCoordinateSystemWithTicks}[6]
{
 \ExampleDrawCoordinateSystem{#1}{#2}{#3}{#4}{#5}{#6}

 \begin{pgfonlayer}{foreground}
  \foreach \x in {#1,...,#2}
   \draw[shift={(\x,#6)}] (0pt,2pt) -- (0pt,-2pt) node[below=2pt,inner sep=1pt,fill=white] {\vphantom{$\x$}};
  \foreach \x in {#1,...,#2}
   \draw[shift={(\x,#6)}] (0pt,2pt) -- (0pt,-2pt) node[below=2pt,inner sep=1pt] {$\x$};
   
  \foreach \y in {#3,...,#4}
   \draw[shift={(#5,\y)}] (2pt,0pt) -- (-2pt,0pt) node[left] {$\y$};
 \end{pgfonlayer}
}
\newcommand{\ExampleDrawCoordinateSystemWithPositiveInfty}[6]
{
 \ExampleDrawCoordinateSystemWithTicks{#1}{#2}{#3}{#4}{#5}{#6}

 \begin{pgfonlayer}{foreground}
  \draw[shift={(#5,#4+\ExampleInftyOffset)}] (2pt,0pt) -- (-2pt,0pt) node[left] {$\infty$};
  \draw[shift={(#5,#4+\ExampleInftyOffset)}] (0pt,2pt) -- (0pt,-2pt);
 
  \begin{scope}[shift={(#5,#4+0.5*\ExampleInftyOffset)}]
   \DrawDiscontinuitySymbolWithLines{0}
  \end{scope}
 \end{pgfonlayer}
}
\newcommand{\ExampleDrawCoordinateSystemWithNegativeInfty}[6]
{
 \ExampleDrawCoordinateSystemWithTicks{#1}{#2}{#3}{#4}{#5}{#6}

 \begin{pgfonlayer}{foreground}
  \draw[shift={(#5,#3-\ExampleInftyOffset)}] (2pt,0pt) -- (-2pt,0pt) node[left] {$-\infty$};
  \draw[shift={(#5,#3-\ExampleInftyOffset)}] (0pt,2pt) -- (0pt,-2pt);
 
  \begin{scope}[shift={(#5,#3-0.5*\ExampleInftyOffset)}]
   \DrawDiscontinuitySymbolWithLines{0}
  \end{scope}
 \end{pgfonlayer}
}
\newcommand{\SocProfileExampleScale}{0.65}
\newcommand{\ChargingFunctionExampleScale}{0.65}
\newcommand{\ChargingFunctionBoundLinkScale}{0.65}
\newacronym[description={Battery Electric Vehicle}]     {ev}{EV}{battery electric vehicle}
\newacronym[description={State of Charge}]              {soc}{SoC}{state of charge}
\newacronym{csp}{CSP}{Constrained Shortest Path}
\newacronym{alt}{ALT}{A*, Landmarks, Triangle Inequality}
\newacronym{ch}{CH}{Contraction Hierarchies}
\newacronym{cch}{CCH}{Customizable Contraction Hierarchies}
\newacronym{srtm}{SRTM}{Shuttle Radar Topography Mission}
\newacronym{phem}{PHEM}{Passenger Car and Heavy Duty Emission Model}
\newacronym{hbefa}{HBEFA}{Handbook on Emission Factors for Road Traffic}
\newacronym{osm}{OSM}{OpenStreetMap}
\newacronym[description={First-In-First-Out}]           {fifo}{FIFO}{first-in-first-out}
\newacronym{ed}{ED}{Edge Difference}
\newacronym{cq}{CQ}{Cost of Queries}
\newacronym{sc}{SC}{Shortcut Complexity}
\newacronym{dn}{DN}{Deleted Neighbors}
\newacronym[description={Bicriteria Shortest Path (Algorithm)}]      {bsp}{BSP}{bicriteria shortest path}
\newacronym[description={Charging Function Propagating (Algorithm)}] {cfp}{CFP}{charging function propagating}
\newacronym                                             {charge}{CHArge}{CH, A*, Charging Stops}
\newacronym[description={Battery Swapping Station}]     {bss}{BSS}{battery swapping stations}
\newacronym[description={Fully Polynomial-Time Approximation Scheme}] {fptas}{FPTAS}{fully polynomial-time approximation scheme}
\newtheorem{theorem}{Theorem}
\newtheorem{lemma}{Lemma}
\newcommand{\email}[1]{\texttt{#1}}
\title{Shortest Feasible Paths with Charging Stops for Battery Electric Vehicles\thanks{This paper is the full version of an extended abstract~\cite{Bau15} published at the 23rd ACM SIGSPATIAL International Conference on Advances in Geographic Information Systems.
It builds upon theses of two of the authors~\cite{Bau18b,Zue15}.
Our work was supported by DFG Research Grant WA\,654/16--2 and DFG Research Grant WA\,654/23--1.}}
\author[1]{Moritz Baum}
\author[2]{Julian Dibbelt}
\author[1]{Andreas Gemsa}
\author[1]{Dorothea Wagner}
\author[1]{Tobias Zündorf}
\affil[1]{Karlsruhe Institute of Technology, Germany\\\email{moritz@ira.uka.de,gemsa@ira.uka.de,dorothea.wagner@kit.edu,tobias.zuendorf@kit.edu}}
\affil[2]{Sunnyvale, CA, United States\\\email{algo@dibbelt.de}}
\date{}
\begin{document}

\maketitle

\begin{abstract}
We study the problem of minimizing \emph{overall} trip time for battery electric vehicles in road networks. As battery capacity is limited, stops at charging stations may be inevitable. Careful route planning is crucial, since charging stations are scarce and recharging is time-consuming.
We extend the Constrained Shortest Path problem for electric vehicles with \emph{realistic} models of charging stops, including varying charging power and battery swapping stations.
While the resulting problem is \NP-hard, we propose a combination of algorithmic techniques to achieve good performance in practice.
Extensive experimental evaluation shows that our approach~(CHArge) enables computation of \emph{optimal} solutions on realistic inputs, even of continental scale.
Finally, we investigate heuristic variants of CHArge that derive high-quality routes in well below a second on sensible instances.
\end{abstract}

\section{Introduction}\label{sec:introduction}

Electromobility promises independence of fossil fuels, zero (local) emissions, and higher energy-efficiency, especially for city traffic. Yet, most of the significant algorithmic progress on route planning in road networks has focused on conventional combustion-engine cars; see Bast~et~al.~\cite{Bas14} for an overview.
\Glspl*{ev}, however, are different: Most importantly, for the foreseeable future, they have limited driving range, charging stations are still much rarer than gas stations, and recharging is time-consuming.
Thus, routes can become infeasible~(the battery runs empty), and fast routes may be less favorable when accounting for longer recharging time.
In this work, we aim to find the overall fastest~(capacity-constrained) route, considering charging stops when necessary.

\paragraph{Related Work.}
Classic route planning approaches apply Dijkstra's algorithm~\cite{Dij59} to a graph representation of the transportation network, using fixed scalar arc weights corresponding to, \eg, driving time. 
For faster queries, \emph{speedup techniques} have been proposed, with different benefits in terms of preprocessing time and space, query speed, and simplicity~\cite{Bas14}. A*~search~\cite{Har68} uses vertex potentials to guide the search towards the target.
A successful variant, \acrshort*{alt}\glsunset{alt} (\acrlong*{alt})~\cite{Gol05b}, obtains good potentials from precomputed distances to selected landmark vertices.
\Gls*{ch}~\cite{Gei12b}, on the other hand, iteratively \emph{contract} vertices in increasing order of (heuristic) importance during preprocessing, maintaining distances between all remaining vertices by adding \emph{shortcut} arcs where necessary. The \gls*{ch} query is then bidirectional, starting from source and target, and proceeds only from less important to more important vertices.
Combining both techniques, Core-\gls*{alt}~\cite{Bau08b} contracts all but the most important vertices~(\eg, the top 1\,\%), performing \gls*{alt} on the remaining \emph{core} graph. This approach can also be extended to more complex scenarios, such as edge constraints to model,~\eg,~maximum allowed vehicle height or weight~\cite{Gei10b}.
More recently, techniques (including variants of \gls*{ch} and \gls*{alt}) were introduced that allow an additional \emph{customization} after preprocessing, to account for dynamic or user-dependent metrics~\cite{Del13b,Dib16,Efe13a}. Also, approaches towards extended scenarios exist, such as batched shortest paths~\cite{Del11a} or shortest via paths~\cite{Abr12b,Del13a}. Here, a common approach is to make use of a relatively fast \emph{target selection} phase, precomputing distances to relevant points of interest to enable faster queries. However, these techniques were only evaluated for single-criterion search, where the distance between two vertices is always a unique scalar value. For multi-criteria scenarios, on the other hand, problem complexity and solution sizes increase significantly, and practical approaches are only known for basic problem variants~\cite{Fun13,Gei10b,Sto12d}. For a more complete overview of techniques and combinations, see Bast~et~al.~\cite{Bas14}.

Regarding route planning for \glspl*{ev}, the battery must not fully deplete during a ride, but energy is recuperated when braking~(\eg, going downhill), up to the maximum capacity. These \emph{battery constraints} must be checked during route computation.
This can be done implicitly, by weighting each road segment with a \emph{profile}, mapping current \emph{\gls*{soc}} to actual consumption \wrt battery constraints~\cite{Bau17b,Eis11,Sch14a}.
If battery constraints have to be obeyed, both preprocessing and queries of speedup techniques are slower compared to the standard shortest path problem~\cite{Bau13a,Eis11,Sto12b}.
Several natural problem formulations specific to EV routing exist.
Some optimize energy consumption~\cite{Bau13a,Bau17b,Eis11,Sac11,Sch14a}, often, however, resulting in routes resorting to minor~(\ie, slow) roads to save energy.
\emph{\Gls*{csp}} formulations~\cite{Han80a} ask to find the most energy-efficient route without exceeding a certain driving time---or finding the fastest route that does not violate battery constraints~\cite{Sto12c}.
More recent works additionally take speed planning into account to trade driving speed for energy consumption~\cite{Bau17a,Har14}. These works derive exact and heuristic algorithms, which achieve good query times after preprocessing the input network, but do not include charging stops in route optimization.
Another approach separates queries into two phases, optimizing driving time and energy consumption, respectively~\cite{Goo14}.
This allows for reasonably fast queries (even without preprocessing), but results are not optimal in general, \ie, there might be paths with lower trip time that respect battery constraints.

Without recharging, large parts of the road network are simply not reachable by an~\gls*{ev}, rendering long-distance trips impossible. For conventional cars, broad availability of gas stations and short refuel duration allow to neglect this issue in route optimization.
Strehler~et~al.~\cite{Mer15} give theoretical insights into the \gls*{csp} problem for~\glspl*{ev} including charging stops. Most importantly, they develop a \gls*{fptas} for this problem. Unfortunately, the algorithm is slow in practice.

Often, previous works dealing with the \gls*{csp} problem for \glspl*{ev} have considered charging stations under the simplifying assumptions that the charging process takes \emph{constant} time~(independent of the initial \gls*{soc}) and always results in a \emph{fully} recharged battery~\cite{Goo14,Sto12b,Sto12c}.
Then, feasible paths between charging stations are independent of source and target, hence easily precomputed. Routes with a minimum number of intermediate charging stops can then be computed in less than a second on subcountry-scale graphs~\cite{Sto12b,Sto12c}.
Nevertheless, the simple model used in these works only applies to battery swapping stations, which are still an unproven technology and business model.

For regular charging stations, charging time depends on both the current \gls*{soc} and the desired \gls*{soc}, as well as the charging power provided by the station.
Kobayashi~et~al.~\cite{Kob11} distinguish slow and fast charging stations, but assume that the battery is always fully recharged. They propose a heuristic search based on preselected candidates of charging stations for a given pair of source and target. Their approach takes several seconds for suboptimal results on metropolitan-scale routes.

In reality, while nearly linear for low~\gls*{soc}, the charging rate decreases when approaching the battery's limit. Thus, it can be reasonable to only charge up to a \emph{fraction} of the limit.
Sweda~et~al.~\cite{Swe14} reflect this behavior by combining a linear with an exponential function for high \gls*{soc}. Liu~et~al.~\cite{Liu14} model charging between 0\,\% and 80\,\%~\gls*{soc} with a linear function, but recharging above 80\,\%~\gls*{soc} is suppressed altogether.
Neither approach was shown to scale to road networks of realistic size.
Also, even though omitting the possibility of charging beyond 80\% might be appropriate for regions well covered with charging stations, it drastically deteriorates reachability in regions with few stations, where recharging to a full battery can be inevitable~\cite{Mon15}.

Other works discretize possible charging durations to model different options~\cite{Hub15,Mer15,Wan13}. This enables search algorithms that closely resemble the well-known bicriteria shortest path algorithm~\cite{Han80b}.

Some recent works only optimize energy consumption along routes with charging stops, while ignoring travel time entirely to simplify the problem setting~\cite{Bau17b,Sto12b}.
A technique based on dynamic programming to optimize a certain cost function on a route with charging stops is given by Sweda and Klabjan~\cite{Swe12}, extending previous approaches on refueling strategies for conventional cars~\cite{Lin07,Khu11}.
Similarly, Liao~et~al.~\cite{Lia16} use dynamic programming to solve different routing problems for~\glspl*{ev}, including the possibility of battery swaps along the way.

The publications mentioned so far deal with shortest path problems for a \emph{single}~\gls*{ev}.  In addition to that, charging stops are also considered in several works dealing with \emph{vehicle (fleet) routing problems} for \glspl*{ev}~\cite{Des16,Goe15,Lia16,Mon15,Mur16,Pou16,Sch18,Sch14b,Yan15}. These problems are handled by means of mathematical programming or heuristics. Aiming at complex scenarios (often involving optimization for \emph{multiple} vehicles and variants of the Traveling Salesman Problem), the approaches do not scale to large input instances.
Furthermore, most of them rely on the simplified charging models discussed above, \eg, allowing only battery swaps or (uniform) linear charging. A notable exception is the work of Montoya~et~al.~\cite{Mon15}, which explicitly models nonlinear charging processes to achieve more realistic results when computing charging stops for a vehicle along a given route. In fact, it builds upon the model we propose (and generalize) in this work, available from previous publications~\cite{Bau15,Zue15}.

\paragraph{Contribution and Outline.}
In this work, we extend the \gls*{csp} problem to planning routes that, while respecting battery constraints, minimize overall trip time, including time spent at charging stations.
Unlike previous works, our solution handles \emph{all} types of charging stations accurately: battery swapping stations, regular charging stations with various charging powers, as well as superchargers.
In particular, charging times are \emph{not} independent of the current \gls*{soc} when arriving at a charging station. Additionally, the charging process can be interrupted as soon as further charging would delay the arrival at the target.
This results in a challenging, \NP-hard problem, for which even the construction of basic exponential-time algorithms is nontrivial. We propose a label-setting search, which is capable of propagating \emph{continuous} tradeoffs induced by charging stops using labels of \emph{constant} size.
Since the problem is \NP-hard, we do not guarantee polynomial running times. However, carefully incorporating recharging models in speedup techniques, our approach is the first to solve a realistic setting \emph{optimally} and within seconds, for road networks of country-scale and beyond.
Key ingredients of our speedup techniques are a bicriteria extension of \gls*{ch} and generalizations of A*~search, where we incorporate \gls*{soc} and required charging time into the derived bounds.
For even faster queries, we propose heuristic approaches that offer high (empirical) quality. Extensive experiments on detailed and realistic data show that our approach produces sensible results and, even though it is designed to solve a more complex and realistic problem, clearly outperforms and broadens the state-of-the-art.

The remainder of this work is organized as follows. Section~\ref{sec:problem} sets necessary notation and specifies our model and the problem. Section~\ref{sec:approach} presents our basic label-propagating approach.
It turns out that as its most crucial ingredient, a limited number of new labels must be generated at charging stations to ensure termination and correctness of the approach. We derive methods to construct such labels in Section~\ref{sec:spawning}. To improve practical performance of our exponential-time algorithm, we propose tuning based on A*~search in Section~\ref{sec:astar} and the speedup technique \gls*{ch} in Section~\ref{sec:ch}. We also discuss how both techniques can be combined to further reduce practical running times in Section~\ref{sec:charge}.
Section~\ref{sec:heuristics} introduces heuristics, which drop correctness for faster queries. Section~\ref{sec:experiments} experimentally evaluates all approaches on large, realistic input. We close with final remarks in Section~\ref{sec:conclusion}.

\section{Model and Problem Statement}\label{sec:problem}

We consider \emph{directed, weighted}~graphs~$\graph = (\vertices, \arcs)$ with vertices~$\vertices$ and arcs~$\arcs \subseteq \vertices \times \vertices$, together with two arc weight functions $\drivingtimefunction \colon \arcs \to \posreals$ and $\consumptionfunction \colon \arcs \to \reals$, representing driving time and energy consumption on an arc, respectively.
Vertices are \emph{neighbors} if they are connected by an arc.
We simplify notation by defining $\drivingtimefunction(\vertexa,\vertexb):=\drivingtimefunction((\vertexa,\vertexb))$ and $\consumptionfunction(\vertexa,\vertexb):=\consumptionfunction((\vertexa,\vertexb))$ for an arc~$(\vertexa,\vertexb)\in\arcs$.

An $\source$--$\target$ \emph{path} in a graph~$\graph$ is defined as a sequence $\apath~=~[\source=\vertex_1,\vertex_2\dots,\vertex_k=\target]$ of vertices, such that~$(\vertex_i,\vertex_{i+1})\in\arcs$ for~$1\le i < k$. If $\source = \target$, we call $\apath$ a \emph{cycle}.
Given two paths $\patha = [\vertex_1, \dots, \vertex_i]$ and $\pathb = [\vertex_i, \dots, \vertex_k]$, we denote by $\patha \circ \pathb := [\vertex_1, \dots, \vertex_i, \dots, \vertex_k]$ their concatenation.
The \emph{driving time} on a path~$\apath$ is $\drivingtimefunction(\apath) = \sum_{i=1}^{k-1} \drivingtimefunction(\vertex_i, \vertex_{i+1})$. For energy consumption, this is more involved:
First of all, note that consumption can be negative due to recuperation, though cycles with negative consumption are physically ruled out.
Moreover, the battery has a limited capacity~$\maxbattery \in \posreals$, and the \gls*{soc} can neither exceed this limit nor drop below~$0$. Therefore, we must take \emph{battery constraints} into account.
For an~$\source$--$\target$~path~$\apath$ to be \emph{feasible}, the battery's \gls*{soc} must be within the interval~$[0,\maxbattery]$ at every vertex of~$\apath$.
To reflect battery constraints on an~$\source$--$\target$~path~$\apath$, we define its \emph{\gls*{soc} profile} $\socprofile_\apath\colon[0,\maxbattery]\to[0,\maxbattery]\cup\{-\infty\}$ that maps \gls*{soc}~$\soc_\source$ at the source~$\source$ to the resulting \gls*{soc}~$\soc_\target$ at the target~$\target$. Hence, the \gls*{soc} at~$\target$ is~$\soc_\target = \socprofile_\apath(\soc_\source)$; see Figure~\ref{fig:socprofile-linking} for an example.  
We use the value~$-\infty$ to indicate that the \gls*{soc} at~$\source$ is not sufficient to reach~$\target$, \ie, $\apath$ is not feasible for the corresponding \gls*{soc} at~\source.
As shown by Eisner~et~al.~\cite{Eis11}, $\socprofile_\apath$ can be represented using only three values, namely, the minimum \gls*{soc}~$\profileMinBat_\apath\in [0, \maxbattery]$ required to traverse~$\apath$, its energy consumption~$\profileCost_\apath \in [-\maxbattery, \maxbattery]$  (which can be less than~$\profileMinBat_\apath$ due to recuperation), and the maximum possible \gls*{soc} after traversing~$\apath$, denoted~$\profileMaxBat_\apath\in [0, \maxbattery]$.
We then have
\begin{align*}
 \socprofile_\apath(\soc) :=
 \begin{cases}
 -\infty                    & \mbox{if } \soc < \profileMinBat_\apath, \\
 \profileMaxBat_\apath      & \mbox{if } \soc - \profileCost_\apath > \profileMaxBat_\apath, \\
 \soc - \profileCost_\apath & \mbox{otherwise.}
 \end{cases}
\end{align*}
The \gls*{soc} profile~$\socprofile_{[\vertex]}$ of a path consisting of a single vertex~$\vertex\in\vertices$ is given by~$\profileMinBat_{[\vertex]}:=0$, $\profileCost_{[\vertex]}:=0$, and~$\profileMaxBat_{[\vertex]}:=\maxbattery$, which yields $\socprofile_{[\vertex]}(\soc)=\soc$ for all~$\soc\in[0,\maxbattery]$.
For an arc~$\arc=(\vertexa, \vertexb) \in \arcs$, the profile $\socprofile_{\arc} = \socprofile_{[\vertexa, \vertexb]}$ is given by $\profileCost_{\arc}:=\consumptionfunction(\arc)$, $\profileMinBat_{\arc}:=\max\{0,\consumptionfunction(\arc)\}$, and $\profileMaxBat_{\arc}:=\min\{\maxbattery,\maxbattery-\consumptionfunction(\arc)\}$.
For two \gls*{soc} profiles~$\socprofile_\patha$ and $\socprofile_\pathb$ of paths~$\patha$ and~$\pathb$, we obtain the \emph{linked} profile~$\socprofile_{\patha \circ \pathb}$ by setting
\begin{align*}
 \profileMinBat_{\patha \circ \pathb} & := \max \{ \profileMinBat_\patha, \profileCost_\patha + \profileMinBat_\pathb \}                                    \\
 \profileMaxBat_{\patha \circ \pathb} & := \minx \{ \profileMaxBat_\pathb, \profileMaxBat_\patha - \profileCost_\pathb \}                                   \\
 \profileCost_{\patha \circ \pathb}   & := \max \{ \profileCost_\patha + \profileCost_\pathb, \profileMinBat_\patha - \profileMaxBat_\pathb \}\text{,}
\end{align*}
provided that $\profileMaxBat_\patha \ge \profileMinBat_\pathb$. Otherwise, the path is infeasible for arbitrary \gls*{soc}, and thus $\socprofile_{\patha \circ \pathb}\equiv-\infty$.
An example of two \gls*{soc} profiles as well as the result of linking them is shown in Figure~\ref{fig:socprofile-linking}.
Finally, given two \gls*{soc} profiles $\socprofile_1$ and~$\socprofile_2$, we say that $\socprofile_1$ \emph{dominates} $\socprofile_2$ if $\socprofile_1(\soc)\ge\socprofile_2(\soc)$ holds for all~$\soc\in[0,\maxbattery]$.

\begin{figure}[t]
 \centering%
 \begin{subfigure}[b]{.14\textwidth}%
 \centering%
 \begin{tikzpicture}[figure,yscale=1.0]
  \node[textvertex] (a) at (0,0) {$\source$};
  \node[textvertex] (b) at (0,-1) {$\vertexa$};
  \node[textvertex] (c) at (0,-2) {$\vertexb$};
  \node[textvertex] (d) at (0,-3) {$\vertexc$};
  \node[textvertex] (e) at (0,-4) {$\target$};
  
  \path[edges] (a) edge node[edgelabel] {$2$} (b)
               (b) edge node[edgelabel] {$-3$} (c)
               (c) edge node[edgelabel] {$-2$} (d)
               (d) edge node[edgelabel] {$3$} (e);
               
  \draw[<->] (0.5,0.1) -- (0.5,-1.95) node [midway,right] {$\patha$};
  \draw[<->] (0.5,-2.05) -- (0.5,-4.1) node [midway,right] {$\pathb$};
 \end{tikzpicture}%
 \caption{}%
 \label{fig:socprofile-linking:graph}%
 \end{subfigure}%
 \begin{subfigure}[b]{.28\textwidth}%
 \centering%
 \begin{tikzpicture}[figure,scale=\SocProfileExampleScale]
  \SoCProfileExampleDrawCoordinateSystem{4.0}{4.0}
  \begin{scope}
   \clip (0,0) rectangle (4,4);
   \draw[color=black30,line cap=rect] (0,0) -- (4,4);
  \end{scope}
  \SoCProfileExampleDrawFunction{4.0}{2.0}{3.0}{3.0}{4.0}
  
  \PlotXAxisName{4.4}{-0.5}{$\soc$}
  \PlotYAxisName{0}{4}{$\socprofile_{\patha}(\soc)$}
  
  \draw[<->] (0.2,3) -- (1.8,3) node [midway,above,edgelabel] {$\profileMinBat_{\patha}$};
  \draw[<->] (4,0.2) -- (4,3.8) node [midway,left,edgelabel] {$\profileMaxBat_{\patha}$};
  \draw[<->] (2,2.2) -- (2,2.8) node [midway,left=2pt,edgelabel] {$\profileCost_{\patha}$};
 \end{tikzpicture}%
 \caption{}%
 \label{fig:socprofile-linking:firstpath}%
 \end{subfigure}%
 \begin{subfigure}[b]{.28\textwidth}%
 \centering%
 \begin{tikzpicture}[figure,scale=\SocProfileExampleScale]
  \SoCProfileExampleDrawCoordinateSystem{4.0}{4.0}
  \begin{scope}
   \clip (0,0) rectangle (4,4);
   \draw[color=black30,line cap=rect] (0,0) -- (4,4);
  \end{scope}
  \SoCProfileExampleDrawFunction{4.0}{1.0}{0.0}{2.0}{1.0}
  
  \PlotXAxisName{4.4}{-0.5}{$\soc$}
  \PlotYAxisName{0}{4}{$\socprofile_{\pathb}(\soc)$}
  
  \draw[<->] (0.2,0.2) -- (0.8,0.2) node [midway,above=2pt,edgelabel] {$\profileMinBat_{\patha}$};
  \draw[<->] (4,0.2) -- (4,0.8) node [midway,left=2pt,edgelabel] {$\profileMaxBat_{\pathb}$};
  \draw[<->] (2,1.2) -- (2,1.8) node [midway,right=2pt,edgelabel] {$\profileCost_{\pathb}$};
 \end{tikzpicture}%
 \caption{}%
 \label{fig:socprofile-linking:secondpath}%
 \end{subfigure}%
 \begin{subfigure}[b]{.28\textwidth}%
 \centering%
 \begin{tikzpicture}[figure,scale=\SocProfileExampleScale]
  \SoCProfileExampleDrawCoordinateSystem{4.0}{4.0}
  \begin{scope}
   \clip (0,0) rectangle (4,4);
   \draw[color=black30,line cap=rect] (0,0) -- (4,4);
  \end{scope}
  \SoCProfileExampleDrawFunction{4.0}{2.0}{1.0}{4.0}{1.0}
  
  \PlotXAxisName{4.4}{-0.5}{$\soc$}
  \PlotYAxisName{0}{4}{$\socprofile_{\patha\circ\pathb}(\soc)$}
  
  \draw[<->] (0.2,1) -- (1.8,1) node [midway,above=2pt,edgelabel] {$\profileMinBat_{\patha\circ\pathb}$};
  \draw[<->] (4,0.2) -- (4,0.8) node [midway,left=2pt,edgelabel] {$\profileMaxBat_{\patha\circ\pathb}$};
  \draw[<->] (2,1.2) -- (2,1.8) node [midway,right=2pt,edgelabel] {$\profileCost_{\patha\circ\pathb}$};
 \end{tikzpicture}
 \caption{}%
 \label{fig:socprofile-linking:resultpath}%
 \end{subfigure}%
 \caption{Two \gls*{soc} profiles and the result after linking them. The battery capacity is~$\maxbattery=4$. (a)~The underlying paths~$\patha=[\source,\vertexa,\vertexb]$ and~$\pathb=[\vertexb,\vertexc,\target]$, with indicated arc costs. (b)~The \gls*{soc}~profile~$\socprofile_{\patha}$ is represented by~$\profileMinBat_{\patha}=2$,~$\profileCost_{\patha}=-1$, and~$\profileMaxBat_{\patha}=4$. (c)~The \gls*{soc}~profile~$\socprofile_{\pathb}$ is given by~$\profileMinBat_{\pathb}=1$,~$\profileCost_{\pathb}=1$, and~$\profileMaxBat_{\pathb}=1$. (d)~Linking the profiles $\socprofile_\patha$ and $\socprofile_\pathb$ yields the profile $\socprofile_{\patha\circ\pathb}$ with~$\profileMinBat_{\patha\circ\pathb}=2$,~$\profileCost_{\patha\circ\pathb}=1$, and~$\profileMaxBat_{\patha\circ\pathb}=1$. Observe that due to a subpath of length $-5<-\maxbattery$, the value~$\profileCost_{\patha\circ\pathb}=1$ is greater than the sum~$\profileCost_{\patha}+\profileCost_{\pathb}=\consumptionfunction(\patha\circ\pathb)=0$.
 }%
\label{fig:socprofile-linking}%
\end{figure}

\paragraph{Constrained Shortest Paths.}
Given a graph, two (nonnegative) functions representing length (driving time in our case) and consumption on its arcs, and vertices~$\source\in\vertices$ and~$\target\in\vertices$, the \gls*{csp} Problem asks for a path of minimum length such that its consumption does not exceed a certain bound~$\soc_\source\in\posreals$. Being \NP-hard~\cite{Gar79,Han80a}, \gls*{csp} can be solved using an exponential-time bicriteria variant of Dijkstra's algorithm~\cite{Mar84}, which we refer to as \emph{\gls*{bsp}} algorithm.
It maintains a \emph{label set}~$\labelset(\cdot)$ of \emph{labels} for each vertex~$\vertex \in \vertices$.
In our scenario, each label is a tuple of driving time and~\gls*{soc}.
A label \emph{(Pareto)~dominates} another label of the same vertex if it is better in one criterion and not worse in the other. Initially, all label sets are empty, except for the label~$(0,\soc_\source)$ at the source~$\source$, which is also inserted into a priority queue.
In each step, the algorithm \emph{settles} the minimum label~$\alabel$ of the queue. This is done by extracting the label~$\alabel=(\atime,\soc)$ assigned to some vertex $\vertexa\in\vertices$ and scanning all arcs~$(\vertexa, \vertexb)\in\arcs$ outgoing from $\vertexa$.
If the new label~$\alabel' := (\atime + \drivingtimefunction(\vertexa, \vertexb), \soc - \consumptionfunction(\vertexa, \vertexb))$ is not dominated by any label in $\labelset(\vertexb)$, it is added to $\labelset(\vertexb)$ and the queue, removing labels dominated by $\alabel'$ from~$\labelset(\vertexb)$ and the queue. Using driving time as key of tuples in the priority queue (breaking ties by~\gls*{soc}, \ie, giving preference to the label with highest \gls*{soc} if two or more labels have the same driving time), the algorithm is \emph{label setting}, \ie, extracted labels are never dominated. An optimal (constrained) path is then found once a label with nonnegative \gls*{soc} is extracted at~$\target$.
Battery constraints can be incorporated on-the-fly by additional checks during the algorithm: When scanning an arc~$(\vertexa,\vertexb)\in\arcs$, we set the \gls*{soc} of the new label~$\alabel'$ to~$\min\{\maxbattery,\soc-\consumptionfunction(\vertexa,\vertexb)\}$. If this \gls*{soc} is negative, we discard the label~$\alabel'$.

\paragraph{Speedup Techniques.}
A \emph{potential function}~$\potential\colon\vertices\to\posreals$ on the vertices is~\emph{consistent} (\wrt driving time) if the \emph{reduced arc costs} $\reduceddrivingtimefunction(\vertexa, \vertexb):=\drivingtimefunction(\vertexa, \vertexb) - \potential(\vertexa) + \potential(\vertexb)$ are nonnegative for all~$(\vertexa, \vertexb) \in \arcs$.
The A*~algorithm~\cite{Har68} adds the potential of a vertex to the keys of its labels, changing the order in which they are extracted from the queue. The A*~algorithm was extended to the multicriteria case by Mandow and P{\'e}rez-de-la-Cruz~\cite{Man10}.

In \gls*{ch}~\cite{Gei12b}, extended to multicriteria scenarios by Geisberger~et~al.~\cite{Gei10b} and Funke and Storandt~\cite{Fun13}, vertices are iteratively contracted during preprocessing according to a given vertex order, while introducing \emph{shortcuts} between their neighbors to maintain distances (\wrt $\drivingtimefunction$ and $\consumptionfunction$). To avoid unnecessary shortcuts, \emph{witness searches} are run between neighbors to identify existing paths that dominate a shortcut candidate. Adding shortcuts may lead to nondominated multi-arcs.
To answer~$\source$--$\target$-queries, the original graph is enriched with all shortcuts added during the contraction phase. The \gls*{ch} query then runs a bidirectional variant of the \gls*{bsp} algorithm, starting from both~$\source$ and~$\target$, but scanning only arcs leading to vertices of higher importance (\wrt the vertex order).

\paragraph{Modeling Charging Stops.}
In this work, we consider stops at charging stations to recharge the battery (while spending charging time).
In our model, a subset~$\chargingstations\subseteq\vertices$ of the vertices represents charging stations.
Each vertex~$\vertex\in\chargingstations$ has a designated \emph{charging function}~$\chargingfunction_\vertex\colon[0,\maxbattery]\times\posreals\to[0,\maxbattery]$, which maps \emph{arrival \gls*{soc}} and the spent charging time to the resulting \emph{departure \gls*{soc}}.
We presume that charging functions are continuous and monotonically increasing \wrt charging time (\ie, charging for a longer time never decreases the~\gls*{soc}).
Further, we assume that for arbitrary charging times~$\atime_1\in\posreals$, $\atime_2\in\posreals$, and \gls*{soc} values~$\soc\in[0,\maxbattery]$, the \emph{shifting property} $\chargingfunction_\vertex(\chargingfunction_\vertex(\soc,\atime_1),\atime_2)=\chargingfunction_\vertex(\soc,\atime_1+\atime_2)$ holds.
Hence, charging speed only depends on the current~\gls*{soc}, but not on the arrival~\gls*{soc}.
These conditions are met by realistic physical models of charging stations~\cite{Mon15,Pel15,Uhr15}.
Moreover, exploiting the shifting property, it is possible to represent the~(bivariate) charging function~$\chargingfunction_\vertex$ using a univariate function~${\simplechargingfunction_\vertex\colon\posreals\to[0,\maxbattery]}$ with~$\simplechargingfunction_\vertex(\atime):=\chargingfunction_\vertex(0,\atime)$; see Figure~\ref{fig:charging-function} and our explanation further below.

\begin{figure}[t]
 \centering%
 \begin{tikzpicture}[figure,scale=\ChargingFunctionExampleScale]
  \ExampleDrawCoordinateSystemWithTicks{1}{8}{0}{6}{0}{0}
 
  \PlotXAxisName{8}{0}{$\atime$}
  \PlotYAxisName{0}{6}{$\simplechargingfunction_\vertex(\atime)$}
 
  \draw[<->] (0.2,0.2) -- (1.8,0.2) node [pos=0.68,above=2pt,edgelabel] {$\simplechargingfunctionInv_\vertex(3)$};
  \draw[<->] (2.2,0.2) -- (3.8,0.2) node [midway,above=2pt,edgelabel] {$\chargingtime$};
 
  \draw[helperline] (0,3) -- node [midway,above,edgelabel] {$\soc^{\text{arr}}$} (2,3) -- (2,0);
  \draw[helperline] (0,5) -- node [pos=0.25,below=2pt,edgelabel] {$\soc^{\text{dep}}$} (4,5) -- (4,0);
 
  \begin{scope}
   \begin{pgfinterruptboundingbox}
    \clip (0,0) rectangle (8,6+\ExampleInftyOffset);
    \draw[chargingfunction] (0,0) -- (3,4.5) -- (5,5.5) -- (7,6) -- (8,6);
   \end{pgfinterruptboundingbox}
  \end{scope}
 \end{tikzpicture}%
 \caption{A univariate charging function $\simplechargingfunction_\vertex$ of a charging station $\vertex\in\chargingstations$ with minimum \gls*{soc} value $\mincharge_\vertex=0$ and maximum \gls*{soc} value~$\maxcharge_\vertex=6$. Reaching $\vertex$ with an arrival \gls*{soc} of $\arrivalsoc=3$ and spending a charging time of $\chargingtime=2$ yields a departure \gls*{soc} $\departuresoc=\chargingfunction_\vertex(\arrivalsoc,\chargingtime)$, which we obtain by evaluating $\simplechargingfunction_\vertex(\simplechargingfunctionInv_\vertex(\arrivalsoc)+\chargingtime)=5$.}%
\label{fig:charging-function}%
\end{figure}

Given a vertex $\vertex\in\chargingstations$ with a charging function $\chargingfunction_\vertex$ that has the above properties, we further presume that there is a finite value $\maxtime_\vertex\in\posreals$, such that $\chargingfunction_\vertex(\soc,\atime)=\chargingfunction_\vertex(\soc,\maxtime_\vertex)$ for all $\soc\in[0,\maxbattery]$ and~$\atime\ge\maxtime_\vertex$.
In other words, some maximum \gls*{soc} is reached after a finite charging time (the charging function does not converge to some~\gls*{soc} without reaching it eventually).
Then, the minimum \gls*{soc} value $\mincharge_\vertex:=\min_{\atime\in\posreals}\chargingfunction_\vertex(0,\atime)=\chargingfunction_\vertex(0,0)$ and the maximum \gls*{soc} value $\maxcharge_\vertex:=\max_{\atime\in\posreals}\chargingfunction_\vertex(0,\atime)=\chargingfunction_\vertex(0,\maxtime_\vertex)$ of a charging function induce a range $[\mincharge_\vertex,\maxcharge_\vertex]$ of possible \gls*{soc} values after charging at~$\vertex$.
We allow the cases $\mincharge_\vertex>0$ and $\maxcharge_\vertex<\maxbattery$ to model certain restrictions of charging stations. Thereby, our notion of charging functions is flexible enough to capture features of realistic charging stations. For example, we include swapping stations by setting $\chargingfunction_\vertex(\atime,\soc)=\maxbattery$ for all values $\atime\in\posreals$ and~$\soc\in[0,\maxbattery]$. Hence, we obtain~$\mincharge_\vertex=\maxcharge_\vertex=\maxbattery$.
Finally, we assign to every charging station~$\vertex\in\chargingstations$ a constant \emph{initialization time}~$\arrangementtime(\vertex)$ that is spent when charging at~$\vertex$. This enables us to model time overhead at a charging station for, \eg, parking the car or swapping the battery.

As mentioned above, we represent the bivariate charging function $\chargingfunction_\vertex$ of a vertex $\vertex\in\chargingstations$ with a univariate function $\simplechargingfunction_\vertex$, as follows.
Consider the \emph{inverse function} $\chargingfunction_\vertex^\inverse$ mapping a desired departure \gls*{soc} $\soc\in[\mincharge_\vertex,\maxcharge_\vertex)$ to the required charging time $\atime\in\posreals$ when the arrival \gls*{soc} is~$0$, \ie, $\chargingfunction_\vertex^\inverse(\soc)=\atime$ implies that~$\simplechargingfunction_\vertex(\atime)=\chargingfunction_\vertex(0,\atime)=\soc$. Since $\simplechargingfunction_\vertex$ is strictly increasing on the interval $[\mincharge_\vertex,\maxcharge_\vertex)$ by definition, the function $\chargingfunction_\vertex^\inverse$ is well-defined on the domain~$[\mincharge_\vertex,\maxcharge_\vertex)$.
Given the minimum charging time $\maxtime_\vertex\in\posreals$ required to charge to an \gls*{soc} $\maxcharge_\vertex$ at~$\vertex$ from an arrival \gls*{soc} of $0$, we define the \emph{expanded inverse function} $\simplechargingfunctionInv_\vertex\colon[0,\maxbattery]\to\posreals$ by setting
\begin{align*}
 \simplechargingfunctionInv_\vertex(\soc):=
 \begin{cases}
  0                                        & \mbox{if }\soc<\mincharge_\vertex\text{,}\\
  \maxtime_\vertex                         & \mbox{if }\soc\ge\maxcharge_\vertex\text{,}\\
  \chargingfunction_\vertex^\inverse(\soc) & \mbox{otherwise.}
 \end{cases}
\end{align*}
This yields the equivalence~$\chargingfunction_\vertex(\soc,\atime)=\simplechargingfunction_\vertex(\simplechargingfunctionInv_\vertex(\soc)+\atime)$ for arbitrary values $\soc\in[0,\maxbattery]$ with $\soc\le\maxcharge_\vertex$ and~$\atime\in\posreals$; see Figure~\ref{fig:charging-function}.
Further, we denote by~$\chargingfunction_\vertex^\inverse(\soc_1,\soc_2):=\simplechargingfunctionInv_\vertex(\soc_2)-\simplechargingfunctionInv_\vertex(\soc_1)$ the time to charge the battery from some arrival \gls*{soc} $\soc_1\in[0,\maxbattery]$ to a desired departure \gls*{soc}~$\soc_2\in[0,\maxbattery]$ with~$\soc_1\le\soc_2$.

Existing models of charging functions use linear, polynomial, and exponential functions, or piecewise combinations thereof~\cite{Swe14,Mon15,Pel15}. Typically, these functions are also \emph{concave} \wrt charging time (\ie, charging speed only decreases as the battery's \gls*{soc} increases). However, charging functions in our model are not limited to such functions per se. Section~\ref{sec:spawning} discusses necessary conditions for charging functions besides those mentioned above (continuity, monotonicity, and the shifting property) to ensure that our algorithms terminate.
For the sake of simplicity and motivated by data input in our experimental evaluation (see Section~\ref{sec:experiments}), examples in subsequent sections use piecewise linear, concave charging functions.

\paragraph{Problem Statement.}
We consider the following objective: For a given source~$\source\in\vertices$, a target~$\target\in\vertices$, and an initial \gls*{soc}~$\soc_\source\in[0,\maxbattery]$, we want to find a feasible $\source$--$\target$~path that minimizes overall \emph{trip time}, \ie, the sum of driving time and total time spent at charging stations.
Note that if the input graph contains no charging stations ($\chargingstations=\emptyset$) and consumption values are nonnegative for all arcs, we have an instance of~\gls*{csp}, hence the considered problem is \NP-hard, too.

\section{Basic Approach}\label{sec:approach}

Adapting the bicriteria algorithm described in Section~\ref{sec:problem} to our setting is difficult, for several reasons: When reaching a charging station, we do not know how much energy should be recharged, since it depends on the remaining route to the target and the charging stations available on this route. To overcome this issue, a key idea of our algorithm is to \emph{delay} this decision until we reach the target or the next charging station.
However, since charging functions are \emph{continuous}, there is no straightforward way to apply the bicriteria algorithm in this case, as it might require an infinite number of nondominated labels after settling a charging station with a continuous charging function.
In this section, we show how the algorithm can be to generalized to our setting.
The \emph{\gls*{cfp}} algorithm extends labels to maintain infinite, continuous sets of solutions.
The core idea is that a label represents all possible tradeoffs between charging time and resulting \gls*{soc} induced by the last visited charging station (if it exists), but still has constant size.

Before we describe the \gls*{cfp} algorithm in detail, we illustrate its behavior in a simple example based on charging functions that are piecewise linear and concave.
Afterwards, we show how to represent paths containing charging stations with labels of \emph{constant} size and describe the \gls*{cfp} algorithm more formally. In Section~\ref{sec:spawning}, we discuss the implementation of its crucial part, namely, generating new labels at charging stations (for general charging functions).

\begin{figure}[t]
 \centering
 \begin{tikzpicture}[figure,scale=1.7]
 \node[textvertex] (s) at (0,0) {$\source$};
 \node[textvertex] (v1) at (1,0) {$\vertexa$};
 \node[textvertex] (v2) at (2,0) {$\vertexb$};
 \node[textvertex] (v3) at (3,0) {$\vertexc$};
 \node[textvertex] (v4) at (4,0) {$\vertexd$};
 \node[textvertex] (v5) at (5,0) {$\vertexe$};
 \node[textvertex] (v6) at (6,0) {$\vertexf$};
 \node[textvertex] (t) at (7,0) {$\target$};

 \node (c1) at (2.05,0.3) {\includegraphics[scale=0.5]{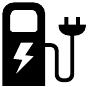}};
 \node (c1) at (5.05,0.3) {\includegraphics[scale=0.5]{fig/charging-station}};

 \path[edges,->] (s) edge node[below] {$-2$} (v1)
                 (v1) edge node[below] {$4$} (v2)
                 (v2) edge node[below] {$-1$} (v3)
                 (v3) edge node[below] {$3$} (v4)
                 (v4) edge node[below] {$-1$} (v5)
                 (v5) edge node[below] {$2$} (v6)
                 (v6) edge node[below] {$1$} (t);

 \begin{scope}[xshift=0.9cm,yshift=0.8cm,scale=0.35]
  \ExampleDrawCoordinateSystemWithTicks{1}{6}{0}{5}{0}{0}
  \PlotXAxisName{6}{0}{$\atime$}
  \PlotYAxisName{0}{5}{$\simplechargingfunction_\vertexb(\atime)$}
  \begin{scope}
   \begin{pgfinterruptboundingbox}
    \clip (0,0) rectangle (6,7);
    \draw[chargingfunction] (0,0) -- (1.5,3) -- (5.5,5) -- (6,5);
   \end{pgfinterruptboundingbox}
  \end{scope}
 \end{scope}
 
 \begin{scope}[xshift=3.9cm,yshift=0.8cm,scale=0.35]
  \ExampleDrawCoordinateSystemWithTicks{1}{6}{0}{5}{0}{0}
  \PlotXAxisName{6}{0}{$\atime$}
  \PlotYAxisName{0}{5}{$\simplechargingfunction_\vertexe(\atime)$}
  \begin{scope}
   \begin{pgfinterruptboundingbox}
    \clip (0,0) rectangle (6,7);
    \draw[chargingfunction] (0,0) -- (5,5) -- (6,5);
   \end{pgfinterruptboundingbox}
  \end{scope}
 \end{scope}
 
 \end{tikzpicture}
 \caption{Example graph with indicated charging stations $\vertexb$ and $\vertexe$ and their respective charging functions $\simplechargingfunction_\vertexb$ and~$\simplechargingfunction_\vertexe$. The initialization time is 0 for both stations. Arc labels correspond to their energy consumption, while we presume that the driving time on every arc is~$1$. On the (unique) path from $\source$ to~$\target$, we seek charging times at $\vertexb$ and $\vertexe$ that minimize the overall travel time (assuming $\soc_\source=4$ and~$\maxbattery=5$.)}
 \label{fig:cfp-example-path}
\end{figure}

\paragraph{Running \gls*{cfp} on an Example Graph.}
To get a basic understanding of the challenges described above and how they are handled by our algorithm, consider the simple example graph depicted in Figure~\ref{fig:cfp-example-path}.
(We note that neither consumption values nor charging functions in our examples were chosen to be very realistic, but rather to illustrate different cases that may occur both in theory and in practice.)
There is only one path from the source vertex $\source$ to the target vertex~$\target$, but it contains two charging stations. Hence, we essentially have to determine the charging time at each station that minimizes the overall trip time. Let the initial \gls*{soc} at $\source$ be $\soc_\source=4$ and assume that the battery capacity is~$\maxbattery=5$.

\begin{figure}[p]
 \centering
 \begin{subfigure}[b]{.32\textwidth}%
  \centering%
  \colorlet{primaryfunctioncolor}{thesisgreen}
  \colorlet{secondaryfunctioncolor}{thesisgreen-light}
  \colorlet{backgroundfunctioncolor}{black30}
  \begin{tikzpicture}[figure,scale=\ChargingFunctionExampleScale]
  \ExampleDrawCoordinateSystemWithNegativeInfty{1}{6}{0}{5}{0}{0}
  \PlotXAxisName{6}{0}{$\atime$}
  \draw[descrarrow] (0,4) -- node (arrmid) {} (1,5);
  \draw[descrarrow] (1,5) -- node[edgelabel,right=2pt] {$(\vertexa,\vertexb)$} (2,1);
  \node[edgelabel] (desc) at (2.5,4.5) {$(\source,\vertexa)$};
  \path[draw,black50,bend right=10] (arrmid) edge (desc);
  \begin{scope}
   \begin{pgfinterruptboundingbox}
   \clip (0,-\ExampleInftyOffset-1) rectangle (6,6);
   \draw[function,primaryfunctioncolor] (0,-\ExampleInftyOffset) -- (2,-\ExampleInftyOffset);
   \draw[function,secondaryfunctioncolor] (0,-\ExampleInftyOffset) -- (1,-\ExampleInftyOffset);
   \draw[function,secondaryfunctioncolor] (0,4) -- (6,4);
   \draw[function,secondaryfunctioncolor] (1,5) -- (6,5);
   \draw[function,primaryfunctioncolor] (2,1) -- node[functionlabel] {$\socfunction{\alabel'_1}(\atime)$} (6,1);
   \end{pgfinterruptboundingbox}
  \end{scope}
  \begin{pgfonlayer}{foreground}
   \node[discontinuityblank,draw=secondaryfunctioncolor] (d1) at (1,-\ExampleInftyOffset) {};
   \node[discontinuityblank,draw=primaryfunctioncolor] (d2) at (2,-\ExampleInftyOffset) {};
   \node[discontinuityfilled,secondaryfunctioncolor] (d3) at (1,5) {};
   \node[discontinuityfilled,primaryfunctioncolor] (d4) at (2,1) {};
  \end{pgfonlayer}
 \end{tikzpicture}
  \caption{}%
  \label{fig:cfp-example-path-labels:a}%
 \end{subfigure}%
 \begin{subfigure}[b]{.32\textwidth}%
  \centering%
  \colorlet{primaryfunctioncolor}{thesisblue}
  \colorlet{secondaryfunctioncolor}{thesisgreen-light}
  \colorlet{backgroundfunctioncolor}{thesisgreen-light}
  \begin{tikzpicture}[figure,scale=\ChargingFunctionExampleScale]
  \ExampleDrawCoordinateSystemWithNegativeInfty{2}{7.5}{0}{5}{1.5}{0}
  \PlotXAxisName{7}{0}{$\atime$}
  \begin{scope}
   \begin{pgfinterruptboundingbox}
    \clip (1.5,-\ExampleInftyOffset-1) rectangle (7.5,6);
    \draw[function,primaryfunctioncolor] (0,-\ExampleInftyOffset) -- (2,-\ExampleInftyOffset);
    \draw[function,draw=backgroundfunctioncolor] (2,1) -- (10,1);
    \draw[function,primaryfunctioncolor] (2,1) -- (3,3) -- node [functionlabel,pos=0.4] {$\socfunction{\alabel_2}(\atime)$} (7,5) -- (10,5);
   \end{pgfinterruptboundingbox}
  \end{scope}
  \begin{pgfonlayer}{foreground}
   \node[discontinuityblank,draw=primaryfunctioncolor] (d1) at (2,-\ExampleInftyOffset) {};
   \node[discontinuityfilled,primaryfunctioncolor] (d2) at (2,1) {};
  \end{pgfonlayer}
 \end{tikzpicture} 
 \caption{}%
 \label{fig:cfp-example-path-labels:b}%
 \end{subfigure}%
 \begin{subfigure}[b]{.36\textwidth}%
  \centering%
  \colorlet{primaryfunctioncolor}{thesisblue}
  \colorlet{secondaryfunctioncolor}{thesisblue-light}
  \colorlet{backgroundfunctioncolor}{black30}
  \begin{tikzpicture}[figure,scale=\ChargingFunctionExampleScale]
  \ExampleDrawCoordinateSystemWithNegativeInfty{2}{8.5}{0}{5}{1.5}{0}
  \PlotXAxisName{8}{0}{$\atime$}
  \draw[descrarrow] (2,1) -- node[edgelabel,sloped,pos=0.4,below=3pt] {$(\vertexb,\vertexc)$} (3,2);
  \draw[descrarrow] (3,2) -- node[edgelabel,sloped,above=2pt] {$(\vertexc,\vertexd)$} (4.5,0);
  \draw[descrarrow] (4.5,0) -- node[edgelabel,pos=0.4,right=6pt] {$(\vertexd,\vertexe)$} (5.5,1);
  \begin{scope}
   \begin{pgfinterruptboundingbox}
    \clip (1.5,-\ExampleInftyOffset-1) rectangle (8.5,6);
    \draw[function,secondaryfunctioncolor] (0,-\ExampleInftyOffset) -- (4.5,-\ExampleInftyOffset);
    \draw[function,primaryfunctioncolor] (4.5,-\ExampleInftyOffset) -- (5.5,-\ExampleInftyOffset);
    \draw[function,secondaryfunctioncolor] (2,1) -- (3,3) -- (7,5) -- (10,5);
    \draw[function,secondaryfunctioncolor] (3,2) -- (4,4) -- (6,5) -- (10,5);
    \draw[function,secondaryfunctioncolor] (4.5,0) -- (5,1) -- (7,2) -- (10,2);
    \draw[function,primaryfunctioncolor] (5.5,1) -- (6,2) -- node[functionlabel] {$\socfunction{\alabel'_2}(\atime)$} (8,3) -- (10,3);
   \end{pgfinterruptboundingbox}
  \end{scope}
  \begin{pgfonlayer}{foreground}
   \node[discontinuityblank,draw=secondaryfunctioncolor] (d1) at (2,-\ExampleInftyOffset) {};
   \node[discontinuityfilled,secondaryfunctioncolor] (d2) at (2,1) {};
   \node[discontinuityblank,draw=secondaryfunctioncolor] (d3) at (3,-\ExampleInftyOffset) {};
   \node[discontinuityfilled,secondaryfunctioncolor] (d4) at (3,2) {};
   \node[discontinuityblank,draw=secondaryfunctioncolor] (d5) at (4.5,-\ExampleInftyOffset) {};
   \node[discontinuityfilled,secondaryfunctioncolor] (d6) at (4.5,0) {};
   \node[discontinuityblank,draw=primaryfunctioncolor] (d7) at (5.5,-\ExampleInftyOffset) {};
   \node[discontinuityfilled,primaryfunctioncolor] (d8) at (5.5,1) {};
  \end{pgfonlayer}
 \end{tikzpicture}
 \caption{}%
 \label{fig:cfp-example-path-labels:c}%
 \end{subfigure}%
 
 \begin{subfigure}[b]{.32\textwidth}%
  \centering%
  \colorlet{primaryfunctioncolor}{thesisred}
  \colorlet{secondaryfunctioncolor}{thesisyellow}
  \colorlet{backgroundfunctioncolor}{thesisblue}
  \begin{tikzpicture}[figure,scale=\ChargingFunctionExampleScale]
  \ExampleDrawCoordinateSystemWithNegativeInfty{5}{10.5}{0}{5}{4.5}{0}
  \PlotXAxisName{10}{0}{$\atime$}
  \begin{scope}
   \begin{pgfinterruptboundingbox}
    \clip (4.5,-\ExampleInftyOffset-1) rectangle (10.5,6);
    \draw[function,secondaryfunctioncolor] (0,-\ExampleInftyOffset) -- (5.5,-\ExampleInftyOffset);
    \draw[function,primaryfunctioncolor] (5.5,-\ExampleInftyOffset) -- (6,-\ExampleInftyOffset);
    \draw[function,secondaryfunctioncolor] (6,-\ExampleInftyOffset) -- (8,-\ExampleInftyOffset);
    \draw[function,backgroundfunctioncolor] (5.5,1) -- (6,2) -- (8,3) -- node[functionlabel,below=4pt] {$\socfunction{\alabel'_2}(\atime)$} (11,3);
    \draw[function,secondaryfunctioncolor] (5.5,1) -- node[functionlabel,pos=0.25,below=4pt] {$\socfunction{\alabel_3}(\atime)$} (9.5,5) -- (11,5);
    \draw[function,primaryfunctioncolor] (6,2) -- node[functionlabel] {$\socfunction{\alabel_4}(\atime)$} (9,5) -- (11,5);
    \draw[function,secondaryfunctioncolor] (8,3) -- node[functionlabel,pos=0.6,below=4pt] {$\socfunction{\alabel_5}(\atime)$} (10,5) -- (11,5);
   \end{pgfinterruptboundingbox}
  \end{scope}
  \begin{pgfonlayer}{foreground}
   \node[discontinuityblank,draw=secondaryfunctioncolor] (d1) at (5.5,-\ExampleInftyOffset) {};
   \node[discontinuityfilled,secondaryfunctioncolor] (d2) at (5.5,1) {};
   \node[discontinuityblank,draw=primaryfunctioncolor] (d3) at (6,-\ExampleInftyOffset) {};
   \node[discontinuityfilled,primaryfunctioncolor] (d4) at (6,2) {};
   \node[discontinuityblank,draw=secondaryfunctioncolor] (d5) at (8,-\ExampleInftyOffset) {};
   \node[discontinuityfilled,secondaryfunctioncolor] (d6) at (8,3) {};
  \end{pgfonlayer}
  \end{tikzpicture} 
 \caption{}%
 \label{fig:cfp-example-path-labels:d}%
 \end{subfigure}%
 \begin{subfigure}[b]{.32\textwidth}%
  \centering%
  \colorlet{primaryfunctioncolor}{thesisred}
  \colorlet{secondaryfunctioncolor}{thesisblue}
  \colorlet{backgroundfunctioncolor}{black30}
  \begin{tikzpicture}[figure,scale=\ChargingFunctionExampleScale]
  \ExampleDrawCoordinateSystemWithNegativeInfty{5}{10.5}{0}{5}{4.5}{0}
  \PlotXAxisName{10}{0}{$\atime$}
  \begin{scope}
   \begin{pgfinterruptboundingbox}
    \clip (4.5,-\ExampleInftyOffset-1) rectangle (10.5,6);
    \draw[function,secondaryfunctioncolor] (0,-\ExampleInftyOffset) -- (5.5,-\ExampleInftyOffset);
    \draw[function,primaryfunctioncolor] (5.5,-\ExampleInftyOffset) -- (6,-\ExampleInftyOffset);
    \draw[function,secondaryfunctioncolor] (5.5,1) -- (6,2) -- node[functionlabel,below=4pt] {$\socfunction{\alabel'_2}(\atime)$} (8,3) -- (11,3);
    \draw[function,primaryfunctioncolor] (6,2) -- node[functionlabel] {$\socfunction{\alabel_4}(\atime)$} (9,5) -- (11,5);
   \end{pgfinterruptboundingbox}
  \end{scope}
  \begin{pgfonlayer}{foreground}
   \node[discontinuityblank,draw=secondaryfunctioncolor] (d1) at (5.5,-\ExampleInftyOffset) {};
   \node[discontinuityfilled,secondaryfunctioncolor] (d2) at (5.5,1) {};
   \node[discontinuityblank,draw=primaryfunctioncolor] (d3) at (6,-\ExampleInftyOffset) {};
   \node[discontinuityfilled,primaryfunctioncolor] (d4) at (6,2) {};
  \end{pgfonlayer}
  \end{tikzpicture}
 \caption{}%
 \label{fig:cfp-example-path-labels:e}%
 \end{subfigure}%
 \begin{subfigure}[b]{.36\textwidth}%
  \centering%
  \colorlet{primaryfunctioncolorA}{thesisred-light}
  \colorlet{primaryfunctioncolorB}{thesisred}
  \colorlet{secondaryfunctioncolorA}{thesisblue-light}
  \colorlet{secondaryfunctioncolorB}{thesisblue}
  \colorlet{backgroundfunctioncolor}{black30}
  \begin{tikzpicture}[figure,scale=\ChargingFunctionExampleScale]
  \ExampleDrawCoordinateSystemWithNegativeInfty{6}{12}{0}{5}{5}{0}
  \PlotXAxisName{12.25}{-0.4}{$\atime$}
  \begin{scope}
   \begin{pgfinterruptboundingbox}
    \clip (5,-\ExampleInftyOffset-1) rectangle (12,6);
    \draw[function,secondaryfunctioncolorA] (0,-\ExampleInftyOffset) -- (5.5,-\ExampleInftyOffset);
    \draw[function,primaryfunctioncolorA] (5.5,-\ExampleInftyOffset) -- (6,-\ExampleInftyOffset);
    \draw[function,primaryfunctioncolorB] (6,-\ExampleInftyOffset) -- (9,-\ExampleInftyOffset);
    \draw[function,secondaryfunctioncolorB] (9,-\ExampleInftyOffset) -- (10,-\ExampleInftyOffset);
    \draw[function,secondaryfunctioncolorA] (5.5,1) -- (6,2) -- (8,3) -- (12,3);
    \draw[function,primaryfunctioncolorA] (6,2) -- (9,5) -- (12,5);
    \draw[function,secondaryfunctioncolorB] (10,0) -- node[functionlabel] {$\socfunction{\alabel'_2}(\atime)$} (12,0);
    \draw[function,primaryfunctioncolorB] (9,0) -- node[functionlabel,pos=0.6] {$\socfunction{\alabel'_4}(\atime)$} (11,2) -- (12,2);
   \end{pgfinterruptboundingbox}
  \end{scope}
  \begin{pgfonlayer}{foreground}
   \node[discontinuityblank,draw=secondaryfunctioncolorA] (d1) at (5.5,-\ExampleInftyOffset) {};
   \node[discontinuityfilled,secondaryfunctioncolorA] (d2) at (5.5,1) {};
   \node[discontinuityblank,draw=primaryfunctioncolorA] (d3) at (6,-\ExampleInftyOffset) {};
   \node[discontinuityfilled,primaryfunctioncolorA] (d4) at (6,2) {};
   \node[discontinuityblank,draw=primaryfunctioncolorB] (d5) at (9,-\ExampleInftyOffset) {};
   \node[discontinuityfilled,primaryfunctioncolorB] (d6) at (9,0) {};
   \node[discontinuityblank,draw=secondaryfunctioncolorB] (d7) at (10,-\ExampleInftyOffset) {};
   \node[discontinuityfilled,secondaryfunctioncolorB] (d8) at (10,0) {};
  \end{pgfonlayer}
  \path[draw,descrarrow,bend right=10] (d4) edge (d6);
  \path[draw,descrarrow,bend left=10] (d2) edge (d8);
  \node[edgelabel,rotate=-28] (descr) at (7.75,1.4) {$(\vertexe,\vertexf),(\vertexe,\target)$};
  \end{tikzpicture}
  \caption{}%
  \label{fig:cfp-example-path-labels:f}%
 \end{subfigure}%
 \caption{Labels generated by the \gls*{cfp} algorithm on the graph shown in Figure~\ref{fig:cfp-example-path}. (a)~The functions (green) created at the vertices~$\source$,~$\vertexa$, and~$\vertexb$ (upon arrival). (b)~A new function (blue) corresponding to a label spawned at $\vertexb$ to model departure \gls*{soc} after charging at the station. (c)~The label is propagated to~$\vertexc$,~$\vertexd$, and~$\vertexe$. Note that certain function values are ``cropped'', due to battery constraints. (d)~New functions (red, yellow) spawned at the charging station~$\vertexe$, two of which are dominated (yellow). (e)~The resulting nondominated functions upon departure at~$\vertexe$. Note that in one of them, no energy is charged at~$\vertexe$. (f)~The functions are propagated to the target~$\target$. The minimum feasible trip time is 9 and requires charging at both stations: two units (departure \gls*{soc}~3) at $\vertexb$ and one unit (departure \gls*{soc}~3) at~$\vertexe$. }
 \label{fig:cfp-example-path-labels}
\end{figure}

Our algorithm propagates labels which represent continuous functions mapping the trip time to the resulting \gls*{soc} at some vertex, depending on the amount of energy charged at the previous station. Figure~\ref{fig:cfp-example-path-labels} shows these functions as they are generated by our algorithm when running on the instance from Figure~\ref{fig:cfp-example-path}. The algorithm is initialized with a label $\alabel_1$ representing a function denoted $\socfunction{\alabel_1}$ at the source vertex~$\source$; see Figure~\ref{fig:cfp-example-path-labels:a}. The function $\socfunction{\alabel_1}$ evaluates to the initial \gls*{soc} 4 for arbitrary trip times (since $\source$ is not a charging station, spending time at $\source$ cannot increase the~\gls*{soc}).

To propagate this function through the graph, the (single) outgoing arc $(\source,\vertexa)$ with driving time 1 and energy consumption -2 is scanned. Thus, the vertex $\vertexa$ is reached after a trip time of~1, which we model by setting the function of the (propagated) label to $-\infty$ for trip times below~1. Moreover, the \gls*{soc} increases to the maximum \gls*{soc}~5 (note battery constraints prevent the \gls*{soc} from exceeding this limit). Next, this label is used to scan the arc $(\vertexa,\vertexb)$ with trip time 1 and energy consumption~4, which results in a label $\alabel'_1$ at $\vertexb$ containing the function $\socfunction{\alabel'_1}$ shown in Figure~\ref{fig:cfp-example-path-labels:a}. It evaluates to an \gls*{soc} of 1 if the trip time is at least~2.

According to the label~$\alabel'_1$, we reach the first charging station $\vertexb$ with an \gls*{soc} of~1 and after a trip time of~2. Since $\vertexb$ is a charging station, we want to incorporate the possibility of charging into the label. A function that reflects tradeoffs between trip time after charging and the resulting \gls*{soc} is obtained after shifting the charging function $\simplechargingfunction_\vertexb$ along the x-axis such that it intersects the point with x-coordinate (\ie, trip time)~2 and y-coordinate (\ie, arrival~\gls*{soc})~1.
This results in a new label $\alabel_2$ which \emph{dominates} the old one, \ie, for each trip time it yields greater or equal \gls*{soc} compared to~$\alabel_1$; see Figure~\ref{fig:cfp-example-path-labels:b} (however, this is only the case because we presume that the initialization time is~0, whereas in general we might obtain labels that do not dominate each other).
The nondominated label $\alabel_2$ is then propagated to the vertices~$\vertexc$,~$\vertexd$, and~$\vertexe$. Again, we apply battery constraints, which in hindsight renders some charging times at $\vertexb$ infeasible (not enough energy is charged to traverse the following path) or unprofitable (charging too much energy wastes energy gained through recuperation). We ``crop'' the image of the resulting function $\socfunction{\alabel'_2}$ by setting corresponding function values to $-\infty$ or~$\maxbattery=5$, respectively; see Figure~\ref{fig:cfp-example-path-labels:c}.

The search reaches the next charging station at the vertex~$\vertexe$. Again, we may shift the function $\simplechargingfunction_\vertexe$ until its plot intersects a (nondominated) pair of trip time and \gls*{soc} of the label~$\alabel'_2$. However, unlike before, there no longer is a unique \emph{nondominated} pair. As before, we may pick the least feasible trip time~(5.5 in this case), which results in the label~$\alabel_3$. Observe that this yields a function $\socfunction{\alabel_3}$ with lower \gls*{soc} compared to the function $\socfunction{\alabel'_2}$ for some values of trip time; see Figure~\ref{fig:cfp-example-path-labels:d}. This is due to the fact that, for low \gls*{soc} values, the charging station at $\vertexb$ allows faster charging. Therefore, it pays off to charge more energy at $\vertexb$, up to the next breakpoint of the function~$\socfunction{\alabel'_2}$ (where its the slope falls below that of the charging function at~$\vertexe$). Spawning the label $\alabel_4$ after picking this breakpoint results in a nondominated function (see Figure~\ref{fig:cfp-example-path-labels:d}). Apparently, breakpoints of functions are good candidates for spawning new labels, since the charging rate changes in these points. Consequently, our algorithm spawns one new label for each breakpoint of the function $\socfunction{\alabel'_2}$. In Section~\ref{sec:spawning}, we show that this is actually sufficient to find an optimal solution. Two of these labels ($\alabel_3$ and~$\alabel_5$) are dominated, though, so they can be discarded. (Our algorithm actually only performs \emph{pairwise} dominance checks to reduce overhead, which would spare the label $\alabel_3$ as it is only dominated by the upper envelope of $\socfunction{\alabel'_2}$ and~$\socfunction{\alabel_4}$; this increases the number of propagated labels, but does not affect correctness of the approach).

The nondominated labels $\socfunction{\alabel_2}$ and~$\socfunction{\alabel_4}$ are shown in Figure~\ref{fig:cfp-example-path-labels:e}. They are both propagated to $\vertexf$ and finally, to the target~$\target$. The resulting labels are shown in Figure~\ref{fig:cfp-example-path-labels:f}. Note that both contain feasible pairs of trip time and \gls*{soc}. The least feasible trip time among both labels is~9. Note that this trip time requires charging at both stations, despite the fact that it is possible to reach the target with a single charging stop at~$\vertexb$. This is due to the fact that the charging speed at both station differs. At the same time, both stations lie on the (unique) route to the target and there is no initialization overhead, so switching between different charging stations comes at a low cost in this example.
In what follows, we formally define labels and their corresponding functions, before we describe the \gls*{cfp} algorithm in more detail.

\begin{figure}[t]
 \centering
 \begin{subfigure}[b]{.32\textwidth}%
  \centering%
  \begin{tikzpicture}[figure,scale=\ChargingFunctionExampleScale]
   \ExampleDrawCoordinateSystemWithTicks{1}{5.5}{0}{4}{0}{0}
 
   \node (dummy) at (0,-\ExampleInftyOffset) {\phantom{$-\infty$}};
 
   \PlotXAxisName{5}{0}{$\atime$}
   \PlotYAxisName{0}{4}{$\simplechargingfunction_\vertexa(\atime)$}
  
   \begin{scope}
    \begin{pgfinterruptboundingbox}
     \clip (0,0) rectangle (5.5,4+\ExampleInftyOffset);
     \draw[chargingfunction] (0,0) -- (1,2) -- (2,3) -- (5,4) -- (5.5,4);
    \end{pgfinterruptboundingbox}
   \end{scope}
  \end{tikzpicture}
  \caption{}%
  \label{fig:cfp-example-label:charging}%
 \end{subfigure}%
 \begin{subfigure}[b]{.3\textwidth}%
  \centering%
  \begin{tikzpicture}[figure,scale=\ChargingFunctionExampleScale]
   \SoCProfileExampleDrawCoordinateSystem{4.0}{4.0}
 
   \begin{scope}
    \begin{pgfinterruptboundingbox}
     \clip (0,0) rectangle (4,5);
     \draw[color=black30,line cap=rect] (0,0) -- (4,4);
     \draw[->] (1,1.2) -- (1,1.8);
    \end{pgfinterruptboundingbox}
   \end{scope}
 
   \SoCProfileExampleDrawFunction{4.0}{1.0}{2.0}{2.0}{3.0}
  
   \PlotXAxisName{4}{0}{$\soc$}
   \PlotYAxisName{0}{4}{$\socprofile_{[\vertexa,\dots,\vertexb]}(\soc)$}
  \end{tikzpicture}
  \caption{}%
  \label{fig:cfp-example-label:profile}%
 \end{subfigure}%
 \begin{subfigure}[b]{.38\textwidth}%
  \centering%
  \begin{tikzpicture}[figure,scale=\ChargingFunctionExampleScale]
   \ExampleDrawCoordinateSystemWithNegativeInfty{3}{8}{0}{4}{2}{0}
 
   \PlotXAxisName{8}{0}{$\atime$}
  
   \begin{scope}
    \begin{pgfinterruptboundingbox}
     \clip (2,-\ExampleInftyOffset-1) rectangle (8,5);
     \draw[chargingfunction,secondarycolor-light] (2.75,0) -- (3.75,2) -- (4.75,3) -- node [midway,above,edgelabel,sloped] {$\chargingfunction_\vertexa(\soc_\vertexa,\atime-\triptime)$} (7.75,4) -- (8,4);
     \draw[function] (0,-\ExampleInftyOffset) -- (3.25,-\ExampleInftyOffset);
     \draw[function] (3.25,2) -- (3.75,3) -- node [pos=0.65,below=2pt,edgelabel,sloped,text=primarycolor] {$\socfunction{\alabel}(\atime)$} (8,3);
    \end{pgfinterruptboundingbox}
   \end{scope}
 
   \draw[->] (3.25,1.2) -- (3.25,1.8);
 
   \begin{pgfonlayer}{foreground}
    \node[discontinuityblank] (d1) at (3.25,-\ExampleInftyOffset) {};
    \node[discontinuityfilled] (d2) at (3.25,2) {};
   \end{pgfonlayer}
  \end{tikzpicture}
  \caption{}%
  \label{fig:cfp-example-label:label}%
 \end{subfigure}%
 \caption{Constructing the \gls*{soc} function~$\socfunction{\alabel}$ of a label~$\alabel=(\triptime,\soc_\vertexa,\vertexa,\socprofile_{[\vertexa,\dots,\vertexb]})$, with $\triptime=3$ and~$\soc_\vertexa=0.5$. (a)~The function~$\simplechargingfunction_\vertexa$. Assume that the initialization time at $\vertexa$ is~$\arrangementtime(\vertexa)=0$. (b)~The \gls*{soc} profile~$\socprofile_{[\vertexa,\dots,\vertexb]}$ of the~$\vertexa$--$\vertexb$~subpath. Note that the path has negative consumption (the \gls*{soc} increases as indicated by the arrow). (c)~The \gls*{soc} function $\socfunction{\alabel}$. The function $\chargingfunction_\vertexa(\soc_\vertexa,\atime-\triptime)$ (red) reflects pairs of trip time and \gls*{soc} when charging at~$\vertexa$, but ignores consumption on the~$\vertexa\text{--}\vertexb$~subpath. It is equivalent to the function obtained after shifting $\simplechargingfunction_\vertexa$ to the right by $\triptime=3$ minus~$\chargingfunction_\vertexa^{\inverse}(0,\soc_\vertexa)=0.25$. We apply battery constraints \wrt the~$\vertexa$--$\vertexb$~path to this function and obtain the depicted \gls*{soc} function~$\socfunction{\alabel}$ (blue). Its minimum feasible trip time is~$\minfeasibletime{\alabel}=3.25$, because we must spend a charging time of at least $0.25$ at~$\vertexa$. Moreover, we obtain $\socfunction{\alabel}(\atime)=3$ for~$\atime\ge 3.75$ (charging beyond an \gls*{soc} of $2$ at $\vertexa$ never pays off, as it wastes energy gains from recuperation).}
 \label{fig:cfp-example-label}
\end{figure}

\paragraph{Labels and SoC Functions.}
More generally, assume we are given a path $\apath$ from the source $\source\in\vertices$ to some vertex~$\vertexb\in\vertices$, such that $\apath$ contains a charging station $\vertexa\in\chargingstations$ and the arrival \gls*{soc} at $\vertexa$ is~$\soc_\vertexa$.
Every possible charging time $\chargingtime\in[0,\chargingfunction_\vertexa^\inverse(\soc_\vertexa,\maxcharge_\vertexa)]$ at $\vertexa$ results in a certain trip time and an \gls*{soc} at~$\vertexb$.
In general, this yields an infinite amount of feasible, nondominated pairs of trip time and corresponding \gls*{soc} for the path.
We implicitly represent these pairs in one label by storing the charging station $\vertexa$ in the label. However, this no longer allows us to apply battery constraints on-the-fly: For vertices visited after~$\vertexa$, labels have no fixed \gls*{soc}, as it depends on how much energy is charged at~$\vertexa$.
Hence, we compute the \gls*{soc} profile $\socprofile_{[\vertexa,\dots,\vertexb]}$ of the subpath from~$\vertexa$ to~$\vertexb$; see Section~\ref{sec:problem}. The label~$\alabel=(\triptime,\soc_\vertexa,\vertexa,\socprofile_{[\vertexa,\dots,\vertexb]})$ at the vertex~$\vertexb$ then consists of the trip time~$\triptime$ of the path from~$\source$ to~$\vertexb$ (including charging time on every previous charging station except~$\vertexa$ on the path from~$\source$ to~$\vertexa$), the \gls*{soc}~$\soc_\vertexa$ when reaching~$\vertexa$, the last visited charging station~$\vertexa$, and the \gls*{soc} profile~$\socprofile_{[\vertexa,\dots,\vertexb]}$ of the subpath from~$\vertexa$ to~$\vertexb$.
Recall that this \gls*{soc} profile can be represented by three values; see Section~\ref{sec:problem}.
Consequently, even though charging functions can have arbitrary descriptive complexity, we propagate them using labels of constant size.
The trip time $\triptime$ excludes charging at~$\vertexa$, but includes its initialization time $\arrangementtime(\vertexa)$.
Thus, we can think of $\triptime$ as the least trip time to reach $\vertexb$ if we stop at~$\vertexa$ (and ignore battery constraints on the $\vertexa$--$\vertexb$~path).

Accordingly, we define the \emph{\gls*{soc} function} $\socfunction{\alabel}$ of a label $\alabel$ to represent all feasible pairs of trip time and \gls*{soc} associated with the label~$\alabel=(\triptime,\soc_\vertexa,\vertexa,\socprofile_{[\vertexa,\dots, \vertexb]})$.
The \gls*{soc} function~$\socfunction{\alabel}\colon\posreals\to[0,\maxbattery]\cup\{-\infty\}$ mapping trip time to \gls*{soc} is given as
\begin{align}
\label{eq:charge-socfunction}
\socfunction{\alabel}(\atime):=
\begin{cases}
 \socprofile_{[\vertexa,\dots,\vertexb]}(\chargingfunction_\vertexa(\soc_\vertexa, \atime-\triptime)) & \mbox{if } \atime\ge\triptime\text{,} \\
 -\infty                                                                                              & \mbox{otherwise.}
\end{cases} 
\end{align}
To obtain the value~$\socfunction{\alabel}(\atime)$, \ie, the (arrival) \gls*{soc} at~$\vertexb$ when allowing a trip time~$\atime\ge\triptime$, Equation~\ref{eq:charge-socfunction} first evaluates the \gls*{soc} $\chargingfunction_\vertexa(\soc_\vertexa,\atime-\triptime)$ after charging at~$\vertexa$ for a total time of $\atime-\triptime$ with an arrival \gls*{soc} of~$\soc_\vertexa$.
Afterwards, the \gls*{soc} profile~$\socprofile_{[\vertexa,\dots,\vertexb]}$ is applied, which takes account of energy consumption (respecting battery constraints) on the path from the charging station~$\vertexa$ to the current vertex~$\vertexb$. This yields the desired \gls*{soc} at~$\vertexb$.
Note that $\socfunction{\alabel}$ can evaluate to $-\infty$ for values greater than~$\triptime$, due to battery constraints applied by the \gls*{soc} profile~$\socprofile_{[\vertexa,\dots,\vertexb]}$.
We denote by
\begin{align*}
 \minfeasibletime{\socfunction{\alabel}}:=\min\{\atime\in\posreals\mid\socfunction{\alabel}(\atime)\neq-\infty\}
\end{align*}
the smallest value for which $\socfunction{\alabel}$ is greater than~$-\infty$, \ie, the minimum trip time required for the path represented by~$\alabel$ to be feasible.
Figure~\ref{fig:cfp-example-label} shows an example of an \gls*{soc} function of a label.
The definition of \gls*{soc} functions reflects the interpretation of our labels, which represent all tradeoffs between charging time and resulting \gls*{soc} induced by the charging function $\chargingfunction_\vertexa$ of the last station~$\vertexa$.

\begin{algorithm}[tp]
  \caption{Pseudocode of the \acrshort*{cfp} algorithm.}
  \label{alg:cfp}
  \BlankLine
  \tcp{initialize label sets}%
  \ForEach{$\vertex\in\vertices$}{\label{line:cfp:init}%
    $\labelset_{\operatorname{set}}(\vertex)\assign\emptyset$\;%
    $\labelset_{\operatorname{uns}}(\vertex)\assign\emptyset$\;%
  }%
  $\vertex^*\assign$ dummy vertex without incident edges that is (temporarily) added to $\vertices$\;%
  $\chargingstations\assign\chargingstations\cup\{\vertex^*\}$\;%
  $\simplechargingfunction_{\vertex^*}\assign[(0, \soc_\source)]$\;%
  $\labelset_{\operatorname{uns}}(\source)\assign\{(0,\soc_\source,\vertex^*,\socprofile_{[\source]})\}$\;%
  $\queue$.\queueInsert{$\source,0$}\label{line:cfp:initsource}%
  \BlankLine
  \tcp{run main loop}%
  \While{$\queue$.\queueIsNotEmpty{}}{%
    \BlankLine
    $\vertexb\assign\queue$.\queueMinElement{}\label{line:cfp:extractmin}\;%
    $\alabel=(\triptime,\soc_\vertexa,\vertexa,\socprofile_{[\vertexa,\dots,\vertexb]})\assign\labelset_{\operatorname{uns}}(\vertexb)$.\queueDeleteMin{}\;%
    $\labelset_{\operatorname{set}}(\vertexb)$.\queueInsert{$\alabel$}\label{line:cfp:settle}\;%
    \If{$\vertexb=\target$}{%
      \Return $\minfeasibletime{\socfunction{\alabel}}$\label{line:cfp:stoppingcriterion}\;%
      }%
    \BlankLine
    \tcp{handle charging stations; see Section~\ref{sec:spawning}}%
    \If{$\vertexb\in\chargingstations\setminus\{\vertexa\}$}{\label{line:cfp:csfirst}%
      \ForEach{$\chargingtime\in\switchingsequence(\slopeOld{\alabel},\slopeNew{\alabel})\setminus\{\infty\}$}{\label{line:cfp:switchingsequence}%
      $\labelset_{\operatorname{uns}}(\vertexb)$.\queueInsert{$(\triptime+\chargingtime+\arrangementtime(\vertexb),\socfunction{\alabel}(\triptime+\chargingtime),\vertexb,\socprofile_{[\vertexb]})$}\label{line:cfp:cslast}\;%
      }%
    }%
    \BlankLine
    \tcp{update priority queue}%
    \eIf{$\labelset_{\operatorname{uns}}(\vertexb)$.\queueIsNotEmpty{}}{\label{line:cfp:updatemin}%
      $\alabel'\assign\labelset_{\operatorname{uns}}(\vertexb)$.\queueMinElement{}\;%
      $\queue$.\queueUpdate{$\vertexb,\queuekey(\alabel')$}\;\label{line:cfp:updatekey}%
    }{%
      $\queue$.\queueDeleteMin{}\label{line:cfp:deletemin}\;%
    }%
    \BlankLine
    \tcp{scan outgoing arcs}%
    \ForEach{$(\vertexb,\vertexc)\in\arcs$}{\label{line:cfp:scanedges}%
      $\socprofile_{[\vertexa,\dots,\vertexc]}\assign\linkop(\socprofile_{[\vertexa,\dots,\vertexb]},\socprofile_{(\vertexb,\vertexc)})$\;
      \If{$\socprofile_{[\vertexa,\dots,\vertexc]}(\maxcharge_\vertexa)\neq-\infty$}{%
        $\alabel'\assign(\triptime+\drivingtimefunction(\vertexb,\vertexc),\soc_\vertexa,\vertexa,\socprofile_{[\vertexa,\dots,\vertexc]})$\;%
        $\labelset_{\operatorname{uns}}(\vertexc)$.\queueInsert{$\alabel'$}\;%
        \If{$\alabel'=\labelset_{\operatorname{uns}}(\vertexc)$.\queueMinElement{}}{%
         $\queue$.\queueUpdate{$\vertexc,\queuekey(\alabel')$}\;\label{line:cfp:inserthead}%
        }%
      }%
    }%
  }%
\end{algorithm}

\paragraph{Algorithm Description.}
We are now ready to describe the actual \gls*{cfp} algorithm; see Algorithm~\ref{alg:cfp} for pseudocode.
It propagates labels that are quadruples as defined above. Given two labels $\alabel$ and~$\alabel'$, we say that $\alabel$ \emph{dominates} $\alabel'$ if $\socfunction{\alabel}(\atime)\ge\socfunction{\alabel'}(\atime)$ holds for all $\atime\in\posreals$.
The \emph{key} of a label~$\alabel$, denoted~$\queuekey(\alabel):=\minfeasibletime{\socfunction{\alabel}}$, is defined as its minimum feasible trip time. This value is not stored explicitly in the label, but computed on-the-fly by evaluating the inverse of the charging function of the previous station at the minimum \gls*{soc} for which the subpath from this station to the current vertex becomes feasible (one could also store the key with the label, but this did not improve performance in our experiments).

The algorithm stores two sets $\labelset_\text{set}(\vertex)$ and $\labelset_\text{uns}(\vertex)$ for each vertex~$\vertex\in\vertices$, containing \emph{settled} (\ie, extracted) and \emph{unsettled} labels, respectively.
Sets~$\labelset_\text{uns}(\cdot)$ are organized as priority queues (implemented as binary heaps), allowing efficient extraction of the unsettled label with minimum key (breaking ties by the corresponding minimum \gls*{soc} of a label). We maintain the invariant that for each~$\vertex\in\vertices$, $\labelset_\text{uns}(\vertex)$ is empty or the minimum label~$\alabel$~(\wrt~its key) in $\labelset_\text{uns}(\vertex)$ is not dominated by any label in~$\labelset_\text{set}(\vertex)$.
Every time the minimum element of the heap changes, because an element is removed or added, we check whether the new minimum is dominated by a label in~$\labelset_\text{set}(\vertex)$ and remove it in this case (as we know that it cannot lead to an optimal solution).
For piecewise-defined \gls*{soc} functions, a dominance test requires a linear scan over the subfunctions of both \gls*{soc} functions.
By using heaps for unsettled labels, we avoid unnecessary dominance checks for labels that are never settled.
(A more straightforward variant could use a single set per vertex and follow the basic algorithm outlined in Section~\ref{sec:problem} to identify dominated labels, however, this led to slower running times in preliminary tests.)

Given a source $\source\in\vertices$, a target $\target\in\vertices$, and the initial \gls*{soc} $\soc_\source\in[0,\maxbattery]$, the algorithm is initialized in Lines~\ref{line:cfp:init}--\ref{line:cfp:initsource} with a single label $(0,\soc_\source,\vertex^*,\socprofile_{[\source]})$ at the source~$\source$, while all other label sets are empty.
Note that $\vertex^*$ is a special vertex that is (temporarily) added to the graph as a charging station with the charging function~$\chargingfunction_{\vertex^*}\equiv\soc_\source$. Thereby, we avoid explicit handling of special cases when reaching the first actual charging station.
The \gls*{soc} profile stored in the label is initialized with the identity function~$\socprofile_{[\source]}$ (\ie, the \gls*{soc} is not affected when applying this function).
The source vertex is also inserted into a priority queue (denoted $\queue$ in Algorithm~\ref{alg:cfp}). The key of a vertex $\vertex\in\vertices$ in the priority queue is the key of the minimum element in $\labelset_\text{uns}(\vertex)$, \ie, the minimum feasible trip time among the \gls*{soc} functions of all unsettled labels.

The algorithm then proceeds along the lines of the \gls*{bsp} algorithm.
In each step of the main loop, it first extracts a vertex $\vertexb\in\vertices$ with minimum key (breaking ties by~\gls*{soc}) from the priority queue and settles it; see Lines~\ref{line:cfp:extractmin}--\ref{line:cfp:settle} of Algorithm~\ref{alg:cfp}.
Note that at this point, the key of the corresponding label $\alabel=(\triptime,\soc_\vertexa,\vertexa,\socprofile_{[\vertexa,\dots,\vertexb]})$ extracted from $\labelset_\text{uns}(\vertexb)$ is not greater than that of any label that has not been settled yet.

Next, we check whether $\vertexb$ is a charging station that differs from the one stored in the current label~$\alabel$, \ie,~$\vertexb\in\chargingstations\setminus\{\vertexa\}$.
If this is the case, we create new labels to incorporate possible recharging at~$\vertexb$; see Lines~\ref{line:cfp:csfirst}--\ref{line:cfp:cslast}.
This means that we have to spawn new labels~$\alabel'$ that replace the previous charging station $\vertexa$ by~$\vertexb$.
We can do so by fixing a charging time $\chargingtime\in\posreals$ at~$\vertexa$. For the resulting \gls*{soc} at~$\vertexa$, we evaluate the \gls*{soc} profile $\socprofile_{[\vertexa,\dots,\vertexb]}$ of the $\vertexa$--$\vertexb$~path to determine the \gls*{soc} at~$\vertexb$.
We update the trip time accordingly by adding the charging time $\chargingtime$ at $\vertexa$ and the initialization time $\arrangementtime(\vertexb)$ at the new charging station~$\vertexb$. We obtain the new label
\begin{align}
 \alabel'&:=(\triptime+\chargingtime+\arrangementtime(\vertexb),\socprofile_{[\vertexa,\dots,\vertexb]}(\chargingfunction_\vertexa(\soc_\vertexa,\chargingtime)),\vertexb,\socprofile_{[\vertexb]})\notag\\
 &\hphantom{:}=(\triptime+\chargingtime+\arrangementtime(\vertexb),\socfunction{\alabel}(\triptime+\chargingtime),\vertexb,\socprofile_{[\vertexb]})\text{.}\label{eq:cfp-spawned-label}
\end{align}
However, we still face the problem that in general, there are infinitely many possible charging times~$\chargingtime$ at the previous charging station~$\vertexa$ held in $\alabel$.
In Section~\ref{sec:spawning}, we show that for realistic models of charging stations, we only have to consider a small (finite) number of relevant charging times at $\vertexa$ when charging at~$\vertexb$. Thus, spawning a limited number of new labels, each fixing a certain charging time at $\vertexa$ and setting the last charging station to~$\vertexb$, is sufficient to represent all nondominated solutions.
Note that the original label $\alabel$ is not discarded, to reflect the possibility of not stopping at the charging station~$\vertexb$.

In Lines~\ref{line:cfp:updatemin}--\ref{line:cfp:deletemin} of Algorithm~\ref{alg:cfp}, the key of~$\vertexb$ in the priority queue is updated. Since the label $\alabel$ was settled and new labels may have spawned in case $\vertexb$ is a charging station, we update the key of $\vertexb$ to the new smallest key of an unsettled label, if it exists. Otherwise, $\vertexb$ is removed from the queue.

Afterwards, we scan all outgoing arcs~$(\vertexb,\vertexc)\in\arcs$; see Lines~\ref{line:cfp:scanedges}--\ref{line:cfp:inserthead}. Given the current label~$\alabel=(\triptime,\soc_\vertexa,\vertexa,\socprofile_{[\vertexa,\dots,\vertexb]})$, traversing the arc~$(\vertexb,\vertexc)$ means to increase trip time by $\drivingtimefunction(\vertexb,\vertexc)$ and apply the (constant-time) link operation to the \gls*{soc} profile $\socprofile_{[\vertexa,\dots,\vertexb]}$ of $\alabel$ and the \gls*{soc} profile $\socprofile_{[\vertexb,\vertexc]}$ induced by the energy consumption $\consumptionfunction(\vertexb,\vertexc)$; see Section~\ref{sec:problem}.
We compute $\socprofile_{[\vertexa,\dots,\vertexc]}:=\linkop(\socprofile_{[\vertexa,\dots,\vertexb]},\socprofile_{[\vertexb,\vertexc]})$ and construct the label
\begin{align*}
 \alabel':=(\triptime+\drivingtimefunction(\vertexb,\vertexc),\soc_\vertexa,\vertexa,\socprofile_{[\vertexa,\dots,\vertexc]})\text{.}
\end{align*}
Unless the \gls*{soc} profile $\socprofile_{[\vertexa,\dots,\vertexc]}$ of $\alabel'$ indicates that the $\vertexa$--$\vertexc$~subpath is infeasible for arbitrary~\gls*{soc}, the new label $\alabel'$ is added to the label set at the vertex~$\vertexc$.
Note that we perform no dominance checks at this point (unless the minimum element in the label set $\labelset_\text{uns}(\vertexc)$ changes), exploiting the fact that labels are organized in two sets per vertex.

When extracting a label $\alabel$ at the target vertex $\target$ for the first time, we pick the least charging time at the last station such that $\target$ can be reached, \ie, the minimum feasible trip time $\minfeasibletime{\socfunction{\alabel}}$ of~$\socfunction{\alabel}$, and the algorithm terminates; see Line~\ref{line:cfp:stoppingcriterion} in Algorithm~\ref{alg:cfp}.
Correctness of the \gls*{cfp} algorithm follows from Lemma~\ref{lem:charging-station-settling} shown in Section~\ref{sec:spawning} below and the fact that the first extracted label $\alabel$ at $\target$ minimizes the feasible trip time (recall that the algorithm is label setting and minimum feasible trip time is used as key in the priority queue).
Theorem~\ref{thm:cfp-correctness} at the end of Section~\ref{sec:spawning} summarizes these insights.
The asymptotic running time of the algorithm is exponential in the input graph in the worst case (for reasonable charging models; see Section~\ref{sec:spawning}).

For path retrieval, we add two pointers to each label, storing its parent vertex and parent label. For a charging station~$\vertex\in\chargingstations$, the vertex $\vertex$ can be its own parent.
Two consecutive identical parents then imply the use of a (previous) charging station $\vertexa\in\chargingstations$, which is stored in the former label. The according charging time is the difference between the trip times of both labels.

\section{Spawning Labels at Charging Stations}\label{sec:spawning}

As described in Section~\ref{sec:approach}, the \gls*{cfp} algorithm constructs new labels at charging stations to represent all nondominated solutions. We now prove that for reasonable models of charging functions, it suffices to spawn a small number of labels that replace the previous charging station with the new one.
The key idea is that we only require labels that correspond to charging at the station that offers the better charging speed at a certain (relative) point in time. We define \emph{switching sequences} for pairs of functions, containing points at which the charging speed of the new function surpasses the old one. Lemma~\ref{lem:charging-station-settling} proves that spawning one label per element of the switching sequence suffices. Moreover, switching sequences are finite (and linear in the descriptive complexity) for typical models of charging functions, which implies that the \gls*{cfp} algorithm terminates. Before proving Lemma~\ref{lem:charging-station-settling}, we introduce helpful tools. We also formalize switching sequences and the slope of an \gls*{soc} function.

Consider a label $\alabel=(\triptime,\soc_\vertexa,\vertexa,\socprofile_{[\vertexa,\dots,\vertexb]})$ extracted at some charging station~$\vertexb\in\chargingstations$.
We want to create new labels that reflect charging at~$\vertexb$.
This requires us to fix a charging time $\chargingtime\in\posreals$ at the previous station~$\vertexa$, so that we can set $\vertexb$ as the last visited charging station of a new label~$\alabel'$; see Equation~\ref{eq:cfp-spawned-label} and Figure~\ref{fig:cfp-label-spawn}.
We denote the resulting label by~$(\distancelabelWithChargingTime{\chargingtime}):=\alabel'$, as it is obtained after setting the charging time in $\alabel$ to~$\chargingtime$.
Recall that in the label~$\distancelabelWithChargingTime{\chargingtime}$, we replace the old charging station $\vertexa$ by the new station~$\vertexb$. Moreover, we set $\triptime+\chargingtime+\arrangementtime(\vertexb)$ as its overall trip time and~$\socfunction{\alabel}(\triptime+\chargingtime)$ as the corresponding arrival \gls*{soc} at~$\vertexb$.
The \gls*{soc} function~$\socfunction{\distancelabelWithChargingTime{\chargingtime}}$ represents all tradeoffs between charging time at the new charging station $\vertexb$ and resulting \gls*{soc}.
If the label $\distancelabelWithChargingTime{\chargingtime}$ is not feasible, \ie,~$\triptime+\chargingtime<\minfeasibletime{\socfunction{\alabel}}$, we obtain~$\socfunction{\distancelabelWithChargingTime{\chargingtime}}\equiv-\infty$.

Not every charging time $\chargingtime\in\posreals$ at $\vertexa$ yields a reasonable solution. In particular, if we can find a charging time $\chargingtime'\in\posreals$ such that $\socfunction{\distancelabelWithChargingTime{\chargingtime'}}$ dominates~$\socfunction{\distancelabelWithChargingTime{\chargingtime}}$, we know that a charging time of $\chargingtime$ is never beneficial. Intuitively, if the new charging station $\vertex$ allows fast charging, it could pay off to charge less energy at the previous station $\vertexa$ and spend more time at $\vertexb$ instead, so $\socfunction{\distancelabelWithChargingTime{\chargingtime-\varepsilon}}$ dominates $\socfunction{\distancelabelWithChargingTime{\chargingtime}}$ for some~$\varepsilon>0$. Similarly, if the charging station $\vertexa$ offers better charging speed, a charging time $\chargingtime+\varepsilon$ might be the better choice.
In other words, the best choice of the value $\chargingtime$ depends on the slopes of the two \gls*{soc} functions $\socfunction{\alabel}$ and~$\socfunction{\distancelabelWithChargingTime{\chargingtime}}$.

\begin{figure}[t]
 \centering%
 \begin{tikzpicture}[figure,scale=\ChargingFunctionExampleScale]
  \ExampleDrawCoordinateSystemWithNegativeInfty{1}{8}{0}{6}{0}{0}
 
  \PlotXAxisName{8}{0}{$\atime$}
 
  \draw[helperline] (2.5,0) -- (2.5,6) node[above,edgelabel] {$\atime_1$};
  \draw[helperline] (3.5,0) -- (3.5,6) node[above,edgelabel] {$\atime_2$};
 
  \draw[<->] (0.2,0.2) -- (1.45,0.2) node [pos=0.75,above=1.5pt,edgelabel] {$\triptime$};
  \draw[<->] (1.55,0.2) -- (2.4,0.2) node [pos=1.1,above=1.5pt,edgelabel] {$\chargingtime$};
  \draw[<->] (2.6,2.5) -- (3.3,2.5) node [pos=1.25,below=3pt,edgelabel] {$\arrangementtime(\vertexb)$};
 
  \node[edgelabel,text=secondarycolor-dark] (f) at (6.05,3.1) {$\socfunction{\distancelabelWithChargingTime{\chargingtime}}(\atime)$};
 
  \begin{scope}
   \begin{pgfinterruptboundingbox}
    \clip (0,-1-\ExampleInftyOffset) rectangle (8,7);
    \draw[chargingfunction,secondarycolor-light] (0,0) -- node [pos=0.4,above,edgelabel,sloped] {$\simplechargingfunction_\vertexb(\atime)$} (4.5,6) -- (8,6);
    \draw[chargingfunction,secondarycolor-dark] (0,-\ExampleInftyOffset) -- (3.5,-\ExampleInftyOffset);
    \draw[function] (0,-\ExampleInftyOffset) -- (2,-\ExampleInftyOffset);
    \draw[function] (2,1.5) -- (3,3.5) -- (6,5) -- node [midway,above,edgelabel,text=primarycolor] {$\socfunction{\alabel}(\atime)$} (8,5);
    \draw[chargingfunction,secondarycolor-dark] (3.5,2.5) -- (6.125,6) -- (8,6);
   \end{pgfinterruptboundingbox}
  \end{scope}
 
  \begin{pgfonlayer}{foreground}
   \node[discontinuityblank] (d1) at (2,-\ExampleInftyOffset) {};
   \node[discontinuityfilled] (d2) at (2,1.5) {};
   \node[discontinuityblank,draw=secondarycolor-dark] (d1) at (3.5,-\ExampleInftyOffset) {};  
   \node[discontinuityfilled,secondarycolor,secondarycolor-dark] (d2) at (3.5,2.5) {};
  \end{pgfonlayer}
 \end{tikzpicture}
 \caption{Spawning a label at a charging station. Given the \gls*{soc} function $\socfunction{\alabel}$ of a label $\alabel=(\triptime,\soc_\vertexa,\vertexa,\socprofile_{[\vertexa,\dots,\vertexb]})$ at a charging station $\vertexb\in\chargingstations$, we can spawn a new label $\distancelabelWithChargingTime{\chargingtime}$ by picking a charging time $\chargingtime$ at the station~$\vertexa$. We compare the slopes of $\socfunction{\alabel}$ at $\atime_1=\triptime+\chargingtime$ and $\socfunction{\distancelabelWithChargingTime{\chargingtime}}$ at $\atime_2=\triptime+\chargingtime+\arrangementtime(\vertexb)$ to determine which one represents the better choice. Note that $\triptime$ is smaller than~$\minfeasibletime{\socfunction{\alabel}}$, due to battery constraints.}%
\label{fig:cfp-label-spawn}%
\end{figure}

We define the \emph{slope} of a given function~$\afunction$ at some $\atime'\in\posreals$ as the corresponding \emph{right derivative} $(\derivative\afunction(\atime)/\derivative\atime)(\atime')$, to ensure that the slope is well-defined also for piecewise-defined \gls*{soc} functions and at the minimum feasible trip time of an \gls*{soc} function.
As before, let $\alabel=(\triptime,\soc_\vertexa,\vertexa,\socprofile_{[\vertexa,\dots,\vertexb]})$ be a label at a charging station~$\vertexb\in\chargingstations$. We introduce a function $\slopeOld{\alabel}\colon\posreals\to\posreals\cup\{\infty\}$ that describes the slope of the \gls*{soc} function $\socfunction{\alabel}$ at $\triptime+\chargingtime$ as a function of the charging time~$\chargingtime\in\posreals$.
Formally, we define
\begin{align*}
\slopeOld{\alabel}(\chargingtime) :=& \begin{cases}
 \frac{\derivative\socfunction{\alabel}(x)}{\derivative x}(\triptime+\chargingtime) & \text{if}~\triptime+\chargingtime\geq\minfeasibletime{\socfunction{\alabel}}, \\
 \infty & \mbox{otherwise.}
\end{cases}
\end{align*}
Note that the slope $\slopeOld{\alabel}(\chargingtime)$ of the \gls*{soc} function $\socfunction{\alabel}$ at $\triptime+\chargingtime$ is equivalent to the slope of the charging function $\chargingfunction_\vertexa$ of the vertex $\vertexa$ for the arrival \gls*{soc} $\soc_\vertexa$ and the charging time~$\chargingtime$.
Hence, we obtain $\slopeOld{\alabel}(\chargingtime)=(\derivative \chargingfunction_\vertexa(\soc_\vertexa, x)/\derivative x)(\chargingtime)$ for all $\chargingtime\in\posreals$, unless battery constraints on the $\vertexa$--$\vertexb$~path render a charging time of $\chargingtime$ infeasible or unprofitable (in which case the slope $\slopeOld{\alabel}$ is either $\infty$ or~$0$).
Thus, $\slopeOld{\alabel}(\chargingtime)$ can be interpreted as the charging speed when continuing to charge the battery at $\vertexa$ after a charging time of~$\chargingtime$.

Alternatively, one could interrupt charging at $\vertexa$ after a charging time of $\chargingtime$, continue the journey, and switch to the new charging station upon arrival at~$\vertexb$.
The charging speed that can be achieved in this case is given by the slope of the \gls*{soc} function $\socfunction{\distancelabelWithChargingTime{\chargingtime}}(\atime)$ at~$\atime=\triptime+\chargingtime+\arrangementtime(\vertexb)$; see Figure~\ref{fig:cfp-label-spawn}.
Similar to~$\slopeOld{\alabel}$, we define the function $\slopeNew{\alabel}\colon\posreals\to\posreals\cup\{\infty\}$ as 
\begin{align*}
\slopeNew{\alabel}(\chargingtime) :=& \begin{cases}
 \frac{\derivative\socfunction{\distancelabelWithChargingTime{\chargingtime}}(\atime)}{\derivative\atime}(\triptime\!+\!\chargingtime\!+\!\arrangementtime(\vertexb)) &\!\text{if}~\triptime\!+\!\chargingtime\geq\minfeasibletime{\socfunction{\alabel}}, \\
 \infty &\!\text{otherwise.}
\end{cases}
\end{align*}
It maps total charging time $\chargingtime$ at $\vertexa$ to the slope of the \gls*{soc} function $\socfunction{\distancelabelWithChargingTime{\chargingtime}}$ at the time $\triptime+\chargingtime+\arrangementtime(\vertexb)$ that corresponds to arrival at $\vertexb$ and starting to charge the battery.
Hence, the value of $\slopeNew{\alabel}(\chargingtime)$ is equivalent to the slope of the new charging function $\chargingfunction_\vertexb$ of the vertex $\vertexb$ for the arrival \gls*{soc} $\socfunction{\alabel}(\triptime+\chargingtime)$ and the charging time~$0$.
More formally, we have $\slopeNew{\alabel}(\chargingtime)=(\derivative\chargingfunction_\vertexb(\socfunction{\alabel}(\triptime+\chargingtime),\atime)/\derivative\atime)(0)$ if $\triptime+\chargingtime$ is at least~$\minfeasibletime{\socfunction{\alabel}}$ and $\slopeNew{\alabel}(\chargingtime)=\infty$ otherwise.

Given the two functions $\slopeOld{\alabel}$ and $\slopeNew{\alabel}$ defined above, we are interested in points in time where $\slopeNew{\alabel}$ surpasses~$\slopeOld{\alabel}$, because at such points it may pay off to interrupt charging at $\vertexa$ to benefit from a better charging rate at $\vertexb$ later on. Additionally, the minimum charging time $\minfeasibletime{\socfunction{\alabel}}-\triptime$ that is necessary to reach the new charging station $\vertexb$ may be relevant in cases where the charging function of $\vertexb$ has low slope but a large minimum \gls*{soc} value (\eg, if $\vertexb$ is a swapping station). We define the \emph{switching sequence} of $\slopeOld{\alabel}$ and $\slopeNew{\alabel}$, denoted
\begin{align*}
 \switchingsequence(\slopeOld{\alabel},\slopeNew{\alabel}):=[\minfeasibletime{\socfunction{\alabel}}-\triptime=\atime_1,\atime_2,\dots,\atime_{k-1},\atime_k=\infty], 
\end{align*}
as the sequence of all candidate points in time to interrupt charging at the old station~$\vertexa$, arranged in ascending order. (We only include the value $\atime_k=\infty$ to avoid additional case distinctions in the proof of Lemma~\ref{lem:charging-station-settling} below.)
Formally, we demand for $\switchingsequence(\slopeOld{\alabel},\slopeNew{\alabel})$ that~$\atime_1=\minfeasibletime{\socfunction{\alabel}}-\triptime$,~$\atime_k=\infty$, $\atime_i<\atime_{i+1}$ for~$i\in\{1,\dots,k-1\}$, and for all~$i\in\{2,\dots,k-1\}$ there exists a value $\varepsilon>0$ such that for all $0<\delta<\varepsilon$ it holds that $\slopeOld{\alabel}(\atime_i-\delta)\ge\slopeNew{\alabel}(\atime_i-\delta)$ and~$\slopeOld{\alabel}(\atime_i+\delta)<\slopeNew{\alabel}(\atime_i+\delta)$. Moreover, we assume $\switchingsequence(\slopeOld{\alabel},\slopeNew{\alabel})$ to be \emph{maximal}, \ie, it contains all values with the above property.
In general, pairs of functions do not necessarily have a switching sequence of finite length~$k\in\naturals$. At the end of this section we argue that the length of a switching sequence is linear in the descriptive complexity (and thus, finite) for reasonable models of charging functions.

\begin{figure}[t]
 \centering
 \begin{subfigure}[b]{.5\textwidth}%
  \centering%
  \begin{tikzpicture}[figure,scale=\ChargingFunctionExampleScale]
   \ExampleDrawCoordinateSystemWithNegativeInfty{1}{8}{0}{6}{0}{0}

   \PlotXAxisName{8}{0}{$\atime$}

   \draw[helperline] (3,0) -- (3,6) node[above,edgelabel] {$\vphantom{\chargingtime}\atime_i$};
   \draw[helperline] (5,0) -- (5,6) node[above,edgelabel] {$\chargingtime$};
   \node[edgelabel] at (3,7.02) {$\triptime$};
   \node[edgelabel] at (3,6.69) {$+$};
   \node[edgelabel] at (5,7.02) {$\triptime$};
   \node[edgelabel] at (5,6.69) {$+$};

   \draw[<-] (5.2,5) -- (5.8,5) node [pos=0.8,above=2pt,edgelabel] {$\Delta$};

   \node[edgelabel,text=secondarycolor-dark] (f2) at (6.05,2.75) {$\socfunction{\distancelabelWithChargingTime{\chargingtime}}(\atime)$};
 
   \begin{scope}
    \begin{pgfinterruptboundingbox}
     \clip (0,-1-\ExampleInftyOffset) rectangle (8,7);
     \draw[chargingfunction,black50] (2,-\ExampleInftyOffset) -- (3,-\ExampleInftyOffset);
     \draw[chargingfunction,secondarycolor-dark] (3,-\ExampleInftyOffset) -- (5,-\ExampleInftyOffset);
     \draw[function] (0,-\ExampleInftyOffset) -- (2,-\ExampleInftyOffset);
     \draw[chargingfunction,black30,dashed] (1,0) -- (2,2) -- (3,3);
     \draw[chargingfunction,secondarycolor-light,dashed] (2,0) -- (3,2) -- (5,4);
     \draw[function] (2,0) -- (3,3) -- node [pos=0.75,below=2pt,edgelabel,sloped,text=primarycolor] {$\socfunction{\alabel}(\atime)$} (7,5) -- (8,5);
     \draw[chargingfunction,black50] (3,3) -- node [pos=0.45,above=2pt,edgelabel,sloped,text=black50] {$\socfunction{\distancelabelWithChargingTime{\atime_i}}(\atime)$} (6,6) -- (8,6);
     \draw[chargingfunction,secondarycolor-dark] (5,4) -- (7,6) -- (8,6);
    \end{pgfinterruptboundingbox}
   \end{scope}
 
   \begin{pgfonlayer}{foreground}
    \node[discontinuityblank] (d1) at (2,-\ExampleInftyOffset) {};
    \node[discontinuityfilled] (d2) at (2,0) {};
    \node[discontinuityblank,draw=black50] (d3) at (3,-\ExampleInftyOffset) {};
    \node[discontinuityfilled,secondarycolor,black50] (d4) at (3,3) {};
    \node[discontinuityblank,draw=secondarycolor-dark] (d5) at (5,-\ExampleInftyOffset) {};
    \node[discontinuityfilled,secondarycolor,secondarycolor-dark] (d6) at (5,4) {};
   \end{pgfonlayer}
  \end{tikzpicture}
  \caption{}%
  \label{fig:cfp-switchingsequence-proof:a}%
 \end{subfigure}%
 \begin{subfigure}[b]{.5\textwidth}%
  \centering%
  \begin{tikzpicture}[figure,scale=\ChargingFunctionExampleScale]
   \ExampleDrawCoordinateSystemWithNegativeInfty{1}{8}{0}{6}{0}{0}

   \PlotXAxisName{8}{0}{$\atime$}

   \draw[helperline] (4,0) -- (4,6) node[above,edgelabel,shift={(-0.1,0)}] {$\chargingtime$};
   \draw[helperline] (5,0) -- (5,6) node[above,edgelabel,shift={(0.1,0)}] {$\vphantom{\chargingtime}\atime_{i+1}$};
   \node[edgelabel] at (3.9,7.02) {$\triptime$};
   \node[edgelabel] at (3.9,6.69) {$+$};
   \node[edgelabel] at (5.1,7.02) {$\triptime$};
   \node[edgelabel] at (5.1,6.69) {$+$};

   \draw[<-] (2.2,2) -- (2.8,2) node [pos=0.8,above=2pt,edgelabel] {$\Delta$};

   \node[edgelabel,text=black50] (f1) at (3.25,5.5) {$\socfunction{\distancelabelWithChargingTime{\atime_{i+1}}}(\atime)$};
   \node[edgelabel,text=secondarycolor-dark] (f2) at (6.05,2.75) {$\socfunction{\distancelabelWithChargingTime{\chargingtime}}(\atime)$};

   \begin{scope}
    \begin{pgfinterruptboundingbox}
     \clip (0,-1-\ExampleInftyOffset) rectangle (8,7);
     \draw[chargingfunction,black50] (4,-\ExampleInftyOffset) -- (5,-\ExampleInftyOffset);
     \draw[chargingfunction,secondarycolor-dark] (3,-\ExampleInftyOffset) -- (4,-\ExampleInftyOffset);
     \draw[function] (0,-\ExampleInftyOffset) -- (3,-\ExampleInftyOffset);
     \draw[chargingfunction,black30,dashed] (1,0) -- (2,2) -- (5,5);
     \draw[chargingfunction,secondarycolor-light,dashed] (2,0) -- (3,2) -- (4,3);
     \draw[function] (3,1) -- (5,5) -- node [pos=0.65,below=2pt,edgelabel,text=primarycolor] {$\socfunction{\alabel}(\atime)$} (8,5);
     \draw[chargingfunction,black50] (5,5) -- (6,6) -- (8,6);
     \draw[chargingfunction,secondarycolor-dark] (4,3) -- (7,6) -- (8,6);
    \end{pgfinterruptboundingbox}
   \end{scope}
 
   \begin{pgfonlayer}{foreground}
    \node[discontinuityblank] (d1) at (3,-\ExampleInftyOffset) {};
    \node[discontinuityfilled] (d2) at (3,1) {};
    \node[discontinuityblank,draw=black50] (d3) at (5,-\ExampleInftyOffset) {};
    \node[discontinuityfilled,black50] (d4) at (5,5) {};
    \node[discontinuityblank,draw=secondarycolor-dark] (d5) at (4,-\ExampleInftyOffset) {};
    \node[discontinuityfilled,secondarycolor,secondarycolor-dark] (d6) at (4,3) {};
   \end{pgfonlayer}
  \end{tikzpicture}
  \caption{}%
  \label{fig:cfp-switchingsequence-proof:b}%
 \end{subfigure}%
 \caption
 {Illustration of dominated \gls*{soc} functions at a charging station~$\vertexb\in\chargingstations$. For simplicity, we assume that~$\arrangementtime(\vertexb)=0$. Dashed segments indicate the (shifted) charging function $\simplechargingfunction_\vertexb$ of~$\vertexb$. Note that a function dominates the area beneath it. (a) The slope of the \gls*{soc} function~$\socfunction{\alabel}$ is lower than the slope of~$\socfunction{\distancelabelWithChargingTime{\chargingtime}}$ at~$\triptime+\chargingtime=5$. Hence, it pays off to decrease charging time at the previous station to $\atime_i=3$ and charge more energy at~$\vertexb$ instead. (b)~The slope of $\socfunction{\alabel}$ is greater than the slope of $\socfunction{\distancelabelWithChargingTime{\chargingtime}}$ for~$\triptime+\chargingtime=4$. Therefore, charging more energy at the previous charging station pays off and the \gls*{soc} functions $\socfunction{\alabel}$ and $\socfunction{\distancelabelWithChargingTime{\atime_{i+1}}}$ together dominate the function~$\socfunction{\distancelabelWithChargingTime{\chargingtime}}$.}
 \label{fig:cfp-switchingsequence-proof}
\end{figure}

We now prove Lemma~\ref{lem:charging-station-settling}, stating that we only have to spawn a bounded number of new labels at charging stations. In particular, it suffices to add at most one label per element in the switching sequence induced by the last visited charging station in the current label and the new station.

\begin{lemma}
\label{lem:charging-station-settling}
For a vertex~$\vertexb\in\chargingstations$ and a label~$\alabel=(\triptime,\soc_\vertexa,\vertexa,\socprofile_{[\vertexa,\dots,\vertexb]})$ at~$\vertexb$, let the (finite) switching sequence of $\slopeOld{\alabel}$ and $\slopeNew{\alabel}$ be given as~$\switchingsequence(\slopeOld{\alabel},\slopeNew{\alabel})=[\atime_1,\atime_2,\dots,\atime_{k-1},\atime_k]$.
For every charging time~$\chargingtime\in\posreals$, there exists an $i\in\{1,\dots,k-1\}$ such that the \gls*{soc} functions $\socfunction{\alabel}$ and $\socfunction{\distancelabelWithChargingTime{\atime_i}}$ together dominate the \gls*{soc} function~$\socfunction{\distancelabelWithChargingTime{\chargingtime}}$, \ie, for all $\atime\in\posreals$ we have the inequality~$\max\{\socfunction{\alabel}(\atime),\socfunction{\distancelabelWithChargingTime{\atime_i}}(\atime)\}\geq\socfunction{\distancelabelWithChargingTime{\chargingtime}}(\atime)$.
\end{lemma}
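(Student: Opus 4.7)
The plan is to case-split on where $\chargingtime$ lies in the switching sequence. Setting aside the trivial case $\chargingtime < \atime_1 = \minfeasibletime{\socfunction{\alabel}} - \triptime$ (where $\socfunction{\distancelabelWithChargingTime{\chargingtime}} \equiv -\infty$ and $i = 1$ works vacuously), suppose $\chargingtime \in [\atime_i, \atime_{i+1})$ for some $i \in \{1,\dots,k-1\}$. The plan is to select $\atime_i$ when charging less at $\vertexa$ is preferable, $\atime_{i+1}$ (paired with $\socfunction{\alabel}$) when charging more is preferable, and to fall back on $\socfunction{\alabel}$ alone when $\atime_{i+1} = \infty$.

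The principal ingredient I would prove first is a monotonicity lemma: for each fixed $\atime$, the map $\chargingtime \mapsto \socfunction{\distancelabelWithChargingTime{\chargingtime}}(\atime)$ is monotone in the direction dictated by the sign of $\slopeOld{\alabel}(\chargingtime) - \slopeNew{\alabel}(\chargingtime)$ (whenever both endpoints are feasible at $\atime$). The cleanest route uses the univariate representation $\chargingfunction_\vertexb(\soc, t) = \simplechargingfunction_\vertexb(\simplechargingfunctionInv_\vertexb(\soc) + t)$ to write $\socfunction{\distancelabelWithChargingTime{\chargingtime}}(\atime) = \simplechargingfunction_\vertexb(\simplechargingfunctionInv_\vertexb(\socfunction{\alabel}(\triptime + \chargingtime)) + \atime - \triptime - \chargingtime - \arrangementtime(\vertexb))$; differentiating in $\chargingtime$ and using $(\simplechargingfunctionInv_\vertexb)' = 1/(\simplechargingfunction_\vertexb)'$ gives a derivative equal to a non-negative factor times $\slopeOld{\alabel}(\chargingtime)/\slopeNew{\alabel}(\chargingtime) - 1$, whose sign matches that of $\slopeOld{\alabel} - \slopeNew{\alabel}$. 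The saturation regime ($(\simplechargingfunction_\vertexb)' = 0$, both sides $\maxbattery$) and the infeasibility regime (both $-\infty$) are handled by direct inspection, so the monotonicity extends cleanly across regime boundaries.

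Next I would use maximality of the switching sequence to constrain the sign of $\slopeOld{\alabel} - \slopeNew{\alabel}$ inside each open interval $(\atime_i, \atime_{i+1})$: any ``$\geq$'' to ``$<$'' transition is itself a switching point, so by alternation at most one interior transition occurs, and it must be of type ``$<$'' to ``$\geq$'' at some $\atime_i' \in [\atime_i, \atime_{i+1}]$. The lemma then reduces to sub-cases. If $\chargingtime \in [\atime_i, \atime_i']$, the reverse monotonicity on $[\atime_i, \chargingtime]$ shows that $\socfunction{\distancelabelWithChargingTime{\atime_i}}$ dominates $\socfunction{\distancelabelWithChargingTime{\chargingtime}}$ for every $\atime$. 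If $\chargingtime \in [\atime_i', \atime_{i+1})$ and $\atime_{i+1} < \infty$, the monotonicity on $[\chargingtime, \atime_{i+1}]$ handles those $\atime$ for which $\socfunction{\distancelabelWithChargingTime{\atime_{i+1}}}$ is feasible; for smaller $\atime$, applying monotonicity on $[\chargingtime, \atime - \triptime - \arrangementtime(\vertexb)]$ (still inside the ``$\geq$'' region) yields $\socfunction{\distancelabelWithChargingTime{\chargingtime}}(\atime) \leq \socfunction{\distancelabelWithChargingTime{\atime - \triptime - \arrangementtime(\vertexb)}}(\atime) = \socfunction{\alabel}(\atime - \arrangementtime(\vertexb)) \leq \socfunction{\alabel}(\atime)$, where the equality uses that at the feasibility boundary the new label's charging time at $\vertexb$ collapses to zero. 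The case $\atime_{i+1} = \infty$ follows identically with $\socfunction{\alabel}$ alone dominating.

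I expect the main obstacle to be the careful bookkeeping across the regime transitions---saturation at $\maxbattery$, infeasibility at $-\infty$, and the interior regime---which enter differently for $\socfunction{\distancelabelWithChargingTime{\chargingtime}}$ and $\socfunction{\distancelabelWithChargingTime{\atime_j}}$ as $\atime$ varies. While the monotonicity lemma itself is pointwise in $\atime$, verifying that the upper envelope $\max\{\socfunction{\alabel}, \socfunction{\distancelabelWithChargingTime{\atime_j}}\}$ covers $\socfunction{\distancelabelWithChargingTime{\chargingtime}}$ across all three regimes requires treating the transitions explicitly rather than relying on generic interior arguments.
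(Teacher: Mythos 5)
Your proof is correct and follows the paper's own decomposition: the same trivial case, the same localization of $\chargingtime$ in $[\atime_i,\atime_{i+1})$, and the same choice of witnesses ($\atime_i$ when the new station charges faster, $\atime_{i+1}$ together with $\socfunction{\alabel}$ when the old one does). The difference lies in how the core inequality is established. The paper avoids calculus on the composite entirely: it uses the shifting property to write $\socfunction{\distancelabelWithChargingTime{\atime_i}}(\atime)=\socfunction{\distancelabelWithChargingTime{\chargingtime}}(\atime+\timegap)$ for a single time gap $\timegap=\chargingtime-\atime_i-\charginggap$, where $\charginggap$ is the time needed at $\vertexb$ to close the arrival-\gls*{soc} gap; nonnegativity of $\timegap$ is read off from the slope comparison on $[\atime_i,\chargingtime]$, and monotonicity in $\atime$ finishes the case. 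You instead differentiate $\chargingtime\mapsto\socfunction{\distancelabelWithChargingTime{\chargingtime}}(\atime)$ at fixed $\atime$ through $\simplechargingfunction_\vertexb\circ\simplechargingfunctionInv_\vertexb$ and integrate the sign of $\slopeOld{\alabel}-\slopeNew{\alabel}$; this is the infinitesimal version of the paper's time-gap identity. Your route buys a single uniform monotonicity lemma covering both cases and makes explicit the alternation structure inside $(\atime_i,\atime_{i+1})$ (at most one interior transition, necessarily from ``new faster'' to ``old faster''), which the paper uses only implicitly when asserting that the slope relation persists on $[\atime_i,\chargingtime]$ and on $[\chargingtime,\atime_{i+1})$. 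The paper's route buys freedom from exactly the bookkeeping you flag as the main obstacle: it never invokes $(\simplechargingfunctionInv_\vertexb)'=1/(\simplechargingfunction_\vertexb)'$, so saturation ($\slopeNew{\alabel}=0$) and the feasibility boundary need no special treatment. Your tail-case identity $\socfunction{\distancelabelWithChargingTime{\atime-\triptime-\arrangementtime(\vertexb)}}(\atime)=\socfunction{\alabel}(\atime-\arrangementtime(\vertexb))\le\socfunction{\alabel}(\atime)$ is a correct substitute for the paper's shifted-argument inequality $\socfunction{\alabel}(\triptime+\atime)\ge\socfunction{\distancelabelWithChargingTime{\chargingtime}}(\triptime+\atime+\arrangementtime(\vertexb))$.
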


\begin{proof}
Consider an arbitrary charging time $\chargingtime\in\posreals$ at the previous charging station $\vertexa$ and the induced label~$\distancelabelWithChargingTime{\chargingtime}$. If~$\triptime+\chargingtime<\minfeasibletime{\socfunction{\alabel}}$, the \gls*{soc} function $\socfunction{\distancelabelWithChargingTime{\chargingtime}}\equiv-\infty$ is dominated by~$\socfunction{\alabel}$. Hence, we assume in what follows that~$\chargingtime\ge\atime_1=\minfeasibletime{\socfunction{\alabel}}-\triptime$. Then there exists a unique index~$i\in\{1,\dots,k-1\}$, such that~$\atime_i\leq\chargingtime<\atime_{i+1}$ holds.
We distinguish two cases, depending on the slopes $\slopeOld{\alabel}$ and $\slopeNew{\alabel}$ at $\chargingtime$ and show that, together with~$\socfunction{\alabel}$, the function $\socfunction{\distancelabelWithChargingTime{\atime_i}}$ or the function $\socfunction{\distancelabelWithChargingTime{\atime_{i+1}}}$ dominates~$\socfunction{\distancelabelWithChargingTime{\chargingtime}}$.

\case{Case 1}
$\slopeOld{\alabel}(\chargingtime)<\slopeNew{\alabel}(\chargingtime)$.
Intuitively, this means that the new charging station~$\vertexb$ provides a better charging speed than the old station~$\vertexa$ for the considered charging time~$\chargingtime$.
Hence, leaving~$\vertexa$ earlier to charge more energy at~$\vertexb$ and benefit from faster charging pays off. Consequently, $\socfunction{\distancelabelWithChargingTime{\atime_i}}$ dominates~$\socfunction{\distancelabelWithChargingTime{\chargingtime}}$, since the charging speed of $\vertexb$ is better (or equally good) for all~$\atime\in[\atime_i,\chargingtime]$; see Figure~\ref{fig:cfp-switchingsequence-proof:a} for an example.
Since~$\atime_i\le\chargingtime$ and a charging time of $\atime_i$ is sufficient to reach~$\vertexb$, we have $\minfeasibletime{\socfunction{\distancelabelWithChargingTime{\atime_i}}}\le\minfeasibletime{\socfunction{\distancelabelWithChargingTime{\chargingtime}}}$. In other words, if the \gls*{soc} function induced by $\chargingtime$ is finite for some $\atime\in\posreals$, so is the function induced by~$\atime_i$.
Further, recall that the vertex $\vertexb$ is reached with an arrival \gls*{soc} of $\socfunction{\alabel}(\triptime+\atime_i)$ or $\socfunction{\alabel}(\triptime+\chargingtime)$ when the charging time at $\vertexa$ is set to $\atime_i$ or~$\chargingtime$, respectively.
By assumption, the slope~$\slopeNew{\alabel}(\chargingtime)$ is strictly positive, which means that energy can still be charged at $\vertexb$ after the \gls*{soc} has reached~$\socfunction{\alabel}(\triptime+\chargingtime)$, so it must hold that~$\socfunction{\alabel}(\triptime+\chargingtime)<\maxcharge_\vertexb$. Hence, we can define the value
\begin{align*}
 \charginggap:=\chargingfunction_\vertex^\inverse(\socfunction{\alabel}(\triptime+\atime_i),\socfunction{\alabel}(\triptime+\chargingtime))
\end{align*}
that corresponds to the time to recharge the gap in arrival \gls*{soc} between $\socfunction{\distancelabelWithChargingTime{\atime_i}}$ and $\socfunction{\distancelabelWithChargingTime{\chargingtime}}$ at the new station~$\vertexb$.
For arbitrary values~$\atime\ge\minfeasibletime{\socfunction{\distancelabelWithChargingTime{\chargingtime}}}$, we exploit the shifting property to obtain
\begin{align*}
 \socfunction{\distancelabelWithChargingTime{\atime_i}}(\atime)&=\chargingfunction_\vertexb(\socfunction{\alabel}(\triptime+\atime_i),\atime-\atime_i-\triptime-\arrangementtime(\vertexb))\\
 &=\chargingfunction_\vertexb(\socfunction{\alabel}(\triptime+\chargingtime),\atime-\atime_i-\triptime-\arrangementtime(\vertexb)-\charginggap)\\
 &=\socfunction{\distancelabelWithChargingTime{\chargingtime}}(\atime+\chargingtime-\atime_i-\charginggap)\text{.}
\end{align*}
We claim that $\timegap:=\chargingtime-\atime_i-\charginggap\ge0$ holds. This follows immediately from the fact that $\slopeNew{\alabel}(\atime)\ge\slopeOld{\alabel}(\atime)$ holds for all~$\atime\in[\atime_i,\chargingtime]$. Thus, the time $\charginggap$ spent at $\vertex$ cannot take longer than $\chargingtime-\atime_i$, the time to charge the same amount of energy at~$\vertexa$.
Consequently, we obtain~$\timegap\ge0$. This, in turn, implies that for all~$\atime\ge\minfeasibletime{\socfunction{\distancelabelWithChargingTime{\chargingtime}}}$, we have
\begin{align*}
 \socfunction{\distancelabelWithChargingTime{\atime_i}}(\atime)&=\socfunction{\distancelabelWithChargingTime{\chargingtime}}(\atime+\timegap)\\
 &\ge\socfunction{\distancelabelWithChargingTime{\chargingtime}}(\atime)\text{.}
\end{align*}

\case{Case 2}
$\slopeOld{\alabel}(\chargingtime)\ge\slopeNew{\alabel}(\chargingtime)$.
In this case, the charging station~$\vertexa$ offers a more (or equally) favorable charging speed, so it pays off to spend more time charging at~$\vertexa$.
Recall that $\socfunction{\alabel}(\triptime+\chargingtime)=\socfunction{\distancelabelWithChargingTime{\chargingtime}}(\triptime+\chargingtime+\arrangementtime(\vertex))$ holds by definition. Given that the slope $\slopeOld{\alabel}$ of $\socfunction{\alabel}$ is greater or equal for all values~$\atime\in[\chargingtime,\atime_{i+1})$, it follows that the inequality $\socfunction{\alabel}(\triptime+\atime)\ge\socfunction{\distancelabelWithChargingTime{\chargingtime}}(\triptime+\atime+\arrangementtime(\vertex))$ holds for arbitrary~$\atime\in[\chargingtime,\atime_{i+1})$.
For real-valued $\atime\ge\atime_{i+1}$ (which only exist if $i+1<k$), we proceed along the lines of the first case to obtain a nonnegative value $\timegap\ge0$ that equals the difference between the time to charge from $\socfunction{\alabel}(\triptime+\chargingtime)$ to $\socfunction{\alabel}(\triptime+\atime_{i+1})$ at $\vertexa$ and~$\vertexb$, respectively. Since~$\vertexa$ offers a charging speed at least as high as the one at~$\vertexb$ in the whole interval~$[\chargingtime,\atime_{i+1}]$, this difference, and therefore~$\timegap$, is again nonnegative. Thus, we can show (similar to Case~1) that
\begin{align*}
 \socfunction{\distancelabelWithChargingTime{\atime_{i+1}}}(\atime)&=\socfunction{\distancelabelWithChargingTime{\chargingtime}}(\atime+\timegap)\\
 &\ge\socfunction{\distancelabelWithChargingTime{\chargingtime}}(\atime)
\end{align*}
holds for arbitrary real-valued~$\atime\ge\atime_{i+1}$; see Figure~\ref{fig:cfp-switchingsequence-proof:b} for an illustration. Altogether, we obtain that $\max\{\socfunction{\alabel}(\atime),\socfunction{\distancelabelWithChargingTime{\atime_{i+1}}}(\atime)\}\geq\socfunction{\distancelabelWithChargingTime{\chargingtime}}(\atime)$ holds for all~$\atime\in\posreals$, which completes the proof of the second case.
\end{proof}

Given the label~$\alabel=(\triptime,\soc_\vertexa,\vertexa,\socprofile_{[\vertexa, \dots, \vertexb]})$ at the charging station~$\vertexb\in\chargingstations$, we spawn one new label $\distancelabelWithChargingTime{\atime}$ for each (finite) element in the switching sequence~$\switchingsequence(\slopeOld{\alabel},\slopeNew{\alabel})$. If the label is not dominated, it is added to the corresponding label set at~$\vertexb$; see Lines~\ref{line:cfp:switchingsequence}--\ref{line:cfp:cslast} in Algorithm~\ref{alg:cfp}.
For instance, in the special case that $\vertexa$ or $\vertexb$ is a swapping station, the switching sequence $[\minfeasibletime{\socfunction{\alabel}}-\triptime,\infty]$ consists of two values and exactly one new label is spawned at~$\vertexb$ (which corresponds to spending the minimum amount of charging time at~$\vertexa$ such that~$\vertexb$ can be reached).
Recall that the original label $\alabel$ is not thrown away, reflecting the possibility of not using the charging station~$\vertexb$.
Furthermore, Lemma~\ref{lem:charging-station-settling} implies that all nondominated solutions that extend $\alabel$ by charging at $\vertexb$ are computed by the algorithm. By induction, correctness is also maintained for routes with two or more charging stops. Therefore, we obtain Theorem~\ref{thm:cfp-correctness} given below.

\begin{theorem}
\label{thm:cfp-correctness}
If the switching sequences induced by arbitrary pairs of charging functions have finite length, the \gls*{cfp} algorithm terminates and finds the shortest feasible path between a given pair of vertices $\source\in\vertices$ and~$\target\in\vertices$ for a given initial \gls*{soc}~$\soc_\source\in[0,\maxbattery]$.
\end{theorem}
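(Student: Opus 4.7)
My plan is to prove correctness by induction on the charging stops of an optimal path---invoking Lemma~\ref{lem:charging-station-settling} at each inductive step---and termination from the finiteness hypothesis combined with the monotonicity of the queue key. Fix an optimal feasible $\source$--$\target$~path~$\apath^*$ with trip time $\atime^*$, and enumerate its charging stops along the path as $c_1, c_2, \dots, c_m$, with $\chargingtime^*_i$ denoting the charging time used at $c_i$; for convenience, identify the ``previous station'' at $\source$ with the algorithm's dummy station $\vertex^*$.

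The invariant I would carry inductively is that, before the algorithm terminates, it creates, for each vertex $\vertex$ along $\apath^*$, a label $\alabel$ at $\vertex$ whose \gls*{soc} function pointwise dominates the \gls*{soc} function that $\apath^*$ induces at $\vertex$ with its chosen charging times. The base case at $\source$ is immediate from initialization in Lines~\ref{line:cfp:init}--\ref{line:cfp:initsource}, which places at $\source$ a label whose \gls*{soc} function is the constant $\soc_\source$. Propagating along an arc $(\vertexb, \vertexc)$ of $\apath^*$ preserves dominance by the monotonicity of the arc profile $\socprofile_{(\vertexb, \vertexc)}$ in its input \gls*{soc}, so the nontrivial step is each visit to a charging station $c_{i+1}$. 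There, the label $\alabel$ just extracted has $c_i$ as its last charging station, and Lines~\ref{line:cfp:csfirst}--\ref{line:cfp:cslast} spawn one new label per element of the finite switching sequence $\switchingsequence(\slopeOld{\alabel}, \slopeNew{\alabel})$. Applying Lemma~\ref{lem:charging-station-settling} with $\chargingtime = \chargingtime^*_i$ supplies a spawned label $\distancelabelWithChargingTime{\atime_j}$ such that $\max\{\socfunction{\alabel}, \socfunction{\distancelabelWithChargingTime{\atime_j}}\}$ pointwise dominates $\socfunction{\distancelabelWithChargingTime{\chargingtime^*_i}}$, i.e., the \gls*{soc} function corresponding to $\apath^*$'s actual charging choice at $c_i$; either $\alabel$ or this spawn can then be propagated along the $c_{i+1}$--$c_{i+2}$ subpath of $\apath^*$ while preserving dominance. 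Carrying the invariant to $\target$, the algorithm places at $\target$ a label whose \gls*{soc} function is nonnegative at trip time $\atime^*$, and hence whose key $\minfeasibletime{\socfunction{\alabel}}$ is at most $\atime^*$. Because queue keys are nonnegative and monotone nondecreasing under extraction (see next paragraph), and $\atime^*$ is optimal, the first label extracted at $\target$ has key exactly $\atime^*$, and Line~\ref{line:cfp:stoppingcriterion} returns the optimum.

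For termination, each charging-station spawn generates at most $|\switchingsequence(\slopeOld{\alabel},\slopeNew{\alabel})| - 1$ new labels, which is finite by hypothesis, and each arc scan generates at most one. The trip-time component $\triptime$ of a label never decreases under either operation and strictly increases under each arc scan (as $\drivingtimefunction > 0$), while the queue key of a label is bounded below by $\triptime$. Consequently, only finitely many labels can ever have key below any finite threshold, in particular below $\atime^*$, which bounds the number of extractions before Line~\ref{line:cfp:stoppingcriterion} fires. The main obstacle I expect is making this counting argument watertight despite the algorithm's lazy dominance checking: dominated labels may linger in the queue, and cyclic subgraphs might appear to admit arbitrarily long propagation chains. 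The strict positivity of driving times (which forces any chain of arc scans to eventually exceed any fixed key) together with the finite branching at charging stations should close this gap, but the detailed verification that no infinite subsequence of extractions occurs with key $\le \atime^*$ is the technically delicate step of the proof.
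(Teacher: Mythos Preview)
Your proposal is correct and takes essentially the same approach as the paper, which also derives correctness by induction on charging stops via Lemma~\ref{lem:charging-station-settling} together with the label-setting property of the minimum-feasible-trip-time key. The paper in fact states Theorem~\ref{thm:cfp-correctness} as a summary without a separate proof environment, so your write-up is already more detailed than the original; the only minor slip is that you invoke $\drivingtimefunction > 0$ for strict increase of $\triptime$ along arc scans, whereas the paper only assumes $\drivingtimefunction \ge 0$.
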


\paragraph{Computing Switching Sequences.}
In a practical implementation of the \gls*{cfp} algorithm, we need to be able to efficiently compute the switching sequences for given labels and charging functions; \cf Line~\ref{line:cfp:switchingsequence} of Algorithm~\ref{alg:cfp}.
If certain properties of the available charging functions are known, a reasonable approach is to derive the switching sequences for such functions analytically beforehand and provide a specialized implementation for them.

To give an example, we discuss the case where all charging functions in the network (and hence, all \gls*{soc} functions) are \emph{piecewise linear} and \emph{concave}, as is the case in our experimental study (see Section~\ref{sec:experiments}).
We show how a superset of the switching sequence is easily determined in this case.
Given a label $\alabel=(\triptime,\soc_\vertexa,\vertexa,\socprofile_{[\vertexa, \dots, \vertexb]})$ at a charging station~$\vertexb\in\chargingstations$, let its piecewise linear \gls*{soc} function $\socfunction{\alabel}$ be given as a sequence $\functionbreakpoints=[(\atime_1,\soc_1),\dots,(\atime_k,\soc_k)]$ of breakpoints. In other words, we have $\socfunction{\alabel}(\atime)=-\infty$ for~$\atime<\atime_1$, $\socfunction{\alabel}(\atime)=\soc_k$ for~$\atime\ge\atime_k$, and for arbitrary values $\atime_{i}\le\atime<\atime_{i+1}$, with $i\in\{1,\dots,k-1\}$, we evaluate the function by linear interpolation between the breakpoints $(\atime_{i},\soc_{i})$ and~$(\atime_{i+1},\soc_{i+1})$. Observe that we have~$\atime_1=\minfeasibletime{\socfunction{\alabel}}$.
The following Lemma~\ref{lem:cfp-pwl-switching-sequences} shows that the switching sequence of the slopes induced by $\socfunction{\alabel}$ and the charging function $\chargingfunction_\vertexb$ of $\vertexb$ is a subsequence of~$[\atime_1-\triptime,\dots,\atime_k-\triptime,\infty]$. Note that in particular, the switching sequence does not depend on the charging function $\chargingfunction_\vertexb$ at~$\vertexb$.

\begin{lemma}
\label{lem:cfp-pwl-switching-sequences}
 Given a label $\alabel$ at a vertex~$\vertex\in\chargingstations$, let its piecewise linear and concave \gls*{soc} function $\socfunction{\alabel}$ be defined by the sequence $\functionbreakpoints=[(\atime_1,\soc_1),\dots,(\atime_k,\soc_k)]$ of breakpoints. Similarly, let the charging function $\simplechargingfunction_\vertex$ of $\vertex$ be piecewise linear and concave. The switching sequence $\switchingsequence(\slopeOld{\alabel},\slopeNew{\alabel})$ is a subsequence of $\obar{\switchingsequence}(\slopeOld{\alabel},\slopeNew{\alabel}):=[\atime_1-\triptime,\dots,\atime_k-\triptime,\infty]$, \ie, $\switchingsequence(\slopeOld{\alabel},\slopeNew{\alabel})$ contains only values that are also contained in~$\obar{\switchingsequence}(\slopeOld{\alabel},\slopeNew{\alabel})$.
\end{lemma}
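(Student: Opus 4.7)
I will show that any time $\atime^\ast \in \posreals$ that satisfies the defining property of a switching point must coincide with one of the breakpoint times $\atime_1 - \triptime, \ldots, \atime_k - \triptime$; combined with the forced first element $\minfeasibletime{\socfunction{\alabel}} - \triptime = \atime_1 - \triptime$ and the forced sentinel $\infty$, this is exactly the claim that $\switchingsequence(\slopeOld{\alabel}, \slopeNew{\alabel})$ is a subsequence of $\obar{\switchingsequence}(\slopeOld{\alabel}, \slopeNew{\alabel})$.

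The key observation is that $\slopeOld{\alabel}$ is a nonincreasing step function that changes only at the breakpoint times, while $\slopeNew{\alabel}$ is also nonincreasing. The first part is immediate from the hypothesis that $\socfunction{\alabel}$ is piecewise linear and concave with breakpoints at the $\atime_i$: its right derivative $\slopeOld{\alabel}$ is constant on every half-open interval $[\atime_i - \triptime, \atime_{i+1} - \triptime)$, equals $0$ for $\chargingtime > \atime_k - \triptime$, and equals $\infty$ for $\chargingtime < \atime_1 - \triptime$. For the second part, the shifting property yields $\slopeNew{\alabel}(\chargingtime) = \simplechargingfunction_\vertex'(\simplechargingfunctionInv_\vertex(\socfunction{\alabel}(\triptime + \chargingtime)))$ on the feasible domain $\chargingtime \ge \atime_1 - \triptime$, where derivatives are right derivatives. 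Since $\simplechargingfunction_\vertex$ is concave its right derivative is nonincreasing, $\simplechargingfunctionInv_\vertex$ is nondecreasing by definition, and $\socfunction{\alabel}(\triptime + \cdot)$ is nondecreasing because charging and the SoC profile both preserve monotonicity in their relevant arguments. Composing these three monotonic pieces shows that $\slopeNew{\alabel}$ is nonincreasing in $\chargingtime$.

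The rest is a local-neighborhood argument. Fix any $\atime^\ast \in \posreals$ that is neither a breakpoint time nor $\infty$, and pick $\varepsilon > 0$ small enough that $(\atime^\ast - \varepsilon, \atime^\ast + \varepsilon)$ contains no breakpoint time. On this neighborhood $\slopeOld{\alabel}$ is constant while $\slopeNew{\alabel}$ is nonincreasing. Therefore, if $\slopeOld{\alabel}(\atime^\ast - \delta) \ge \slopeNew{\alabel}(\atime^\ast - \delta)$ for small $\delta > 0$, then $\slopeOld{\alabel}(\atime^\ast + \delta) = \slopeOld{\alabel}(\atime^\ast - \delta) \ge \slopeNew{\alabel}(\atime^\ast - \delta) \ge \slopeNew{\alabel}(\atime^\ast + \delta)$, contradicting the strict inequality required by the switching-sequence definition. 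Hence no such $\atime^\ast$ can appear in $\switchingsequence(\slopeOld{\alabel}, \slopeNew{\alabel})$, and the lemma follows. The only bookkeeping obstacle is that $\slopeOld{\alabel}$ takes the special values $\infty$ or $0$ outside $[\atime_1 - \triptime, \atime_k - \triptime]$, but these still fit the piecewise-constant description and leave the neighborhood argument intact.
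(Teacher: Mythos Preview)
Your proof is correct and follows essentially the same line as the paper's: both argue that $\slopeOld{\alabel}$ is a nonincreasing step function constant on each interval $[\atime_i-\triptime,\atime_{i+1}-\triptime)$, that $\slopeNew{\alabel}$ is likewise nonincreasing by concavity of $\simplechargingfunction_\vertex$, and therefore that $\slopeNew{\alabel}$ can only overtake $\slopeOld{\alabel}$ where the latter drops, i.e., at the breakpoint times. Your composition formula for $\slopeNew{\alabel}$ and the explicit $\varepsilon$--$\delta$ neighborhood argument spell out a bit more detail than the paper does, but the underlying mechanism is identical.
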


\begin{proof}
Since both $\socfunction{\alabel}$ and $\simplechargingfunction_\vertexb$ are piecewise linear, their corresponding slope functions $\slopeOld{\alabel}$ and $\slopeNew{\alabel}$ are \emph{piecewise constant} functions, namely, $\slopeOld{\alabel}\equiv(\soc_{i+1}-\soc_{i})/(\atime_{i+1}-\atime_{i})$ holds in the interval $[\atime_i-\triptime,\atime_{i+1}-\triptime)$ for arbitrary~$i\in\{1,\dots,k-1\}$, and an analogous statement holds for~$\slopeNew{\alabel}$.
This implies that each value of the switching sequence must correspond to a breakpoint of $\slopeOld{\alabel}$ or~$\slopeNew{\alabel}$, because both functions are constant between these breakpoints.
Moreover, since $\socfunction{\alabel}$ and $\simplechargingfunction_\vertexb$ are both concave on their subdomain with finite image, the functions $\slopeOld{\alabel}$ and $\slopeNew{\alabel}$ are \emph{decreasing}.
Consequently, the slope $\slopeNew{\alabel}$ can surpass $\slopeOld{\alabel}$ only at the breakpoints of $\slopeOld{\alabel}$, \ie, at points where the latter function decreases. The x-coordinates of these breakpoints are exactly the finite values of~$\obar{\switchingsequence}(\slopeOld{\alabel},\slopeNew{\alabel})$.
\end{proof}

Lemma~\ref{lem:cfp-pwl-switching-sequences} implies that simply spawning one new label for each breakpoint of~$\socfunction{\alabel}$ is sufficient to maintain correctness. Unnecessary labels are detected implicitly during dominance checks.
Considering the more general case where charging functions are piecewise linear (but not necessarily concave), it is not hard to see that the length of the switching sequence must be linear in the number of breakpoints of both considered functions.
Similar observations can be made for other realistic models of charging based on, \eg, exponential functions or piecewise combinations of linear and exponential functions.

\section{A* Search}\label{sec:astar}

To accelerate basic~\gls*{cfp}, we present techniques that extend A*~search~\cite{Har68}.
The basic idea of A* search is to use vertex potentials that guide the search towards the target.
The potential of a vertex is added to the key of a label when updating the priority queue in Line~\ref{line:cfp:updatekey} or Line~\ref{line:cfp:inserthead} of Algorithm~\ref{alg:cfp}. Thereby, vertices that are closer to the target get smaller keys.
Below, we first generalize the notion of potential consistency to our setting, before we introduce different consistent potential functions.

\paragraph{Consistency of Potentials.}
We aim at potential functions that are based on backward searches from the target vertex~$\target$, providing lower bounds on the trip time from any vertex to~$\target$.
A consistent potential function is easily obtained from a single-criterion backward search as follows. It runs Dijkstra's algorithm from~$\target$, traversing arcs in backward direction and using the cost function~$\drivingtimefunction$, which represents driving time along arcs. This yields, for each vertex~$\vertex\in\vertices$, its minimum (unconstrained) driving time to reach~$\target$. These lower bounds on the remaining trip time induce a consistent potential function on the vertices~\cite{Tun92}.
However, we can do better, exploiting that the trip time from $\vertex$ to $\target$ depends on the \gls*{soc} of a label.
(Observe that both the charging time as well as the route and hence, the driving time, of an optimal solution can change for different \gls*{soc} values.)
We propose a potential function~$\potential\colon\vertices\times[0,\maxbattery]\cup\{-\infty\}\to\posreals\cup\{\infty\}$ taking the current \gls*{soc} into account, such that~$\potential(\vertex,\soc)$ yields a lower bound on the trip time from $\vertex$ to $\target$ if the \gls*{soc} at $\vertex$ is~$\soc$, for~$\soc\in[0,\maxbattery]\cup\{-\infty\}$. We define $\potential(\vertex,-\infty):=\infty$.
We now discuss how correctness of our approach is maintained in the presence of a potential function that incorporates~\gls*{soc}.

First, we generalize the notion of consistency of a potential function. We say that a potential function $\potential\colon\vertices\times[0,\maxbattery]\cup\{-\infty\}\to\posreals\cup\{\infty\}$ is \emph{consistent} if it results in nonnegative reduced driving times, \ie, we get
\begin{align*}
 \reduceddrivingtimefunction((\vertexa, \vertexb),\soc):=\drivingtimefunction(\vertexa,\vertexb)-\potential(\vertexa,\soc)+\potential(\vertexb,\socprofile_{[\vertexa,\vertexb]}(\soc))\ge0
\end{align*}
for all arcs $(\vertexa,\vertexb)\in\arcs$ and values~$\soc\in[0,\maxbattery]$, where $\socprofile_{[\vertexa,\vertexb]}$ is the \gls*{soc} profile of~$(\vertexa,\vertexb)$.
Second, for the \gls*{soc} function $\socfunction{\alabel}$ representing a label $\alabel=(\triptime,\soc_\vertexa,\vertexa,\socprofile_{[\vertexa,\dots,\vertexb]})$ at a vertex~$\vertexb\in\vertices$, we define its \emph{consistent key} (to be used in the priority queue of \gls*{cfp}) as~$\queuekey^*(\alabel):=\min_{\atime\in\posreals}\atime+\potential(\vertexb,\socfunction{\alabel}(\atime))$.
We claim that consecutive consistent keys of labels generated after arc scans are increasing if the potential function $\potential$ is consistent. To see this, consider a label $\alabel$ at a vertex $\vertexa\in\vertices$ and the label $\alabel'$ that is created after scanning an arc~$(\vertexa,\vertexb)\in\arcs$. Recall that $\socfunction{\alabel'}(\atime+\drivingtimefunction(\vertexa,\vertex))=\socprofile_{[\vertexa,\vertexb]}(\socfunction{\alabel}(\atime))$ holds by construction of~$\alabel'$, so we can substitute $\atime':=\atime+\drivingtimefunction(\vertexa,\vertexb)$ below to get
\begin{align*}
 \queuekey^*(\alabel)&=\min_{\atime\in\posreals}\atime+\potential(\vertexa,\socfunction{\alabel}(\atime))\\
 &\le\min_{\atime\in\posreals}\atime+\drivingtimefunction(\vertexa,\vertexb)+\potential(\vertexb,\socprofile_{[\vertexa,\vertexb]}(\socfunction{\alabel}(\atime)))\\
 &=\min_{\atime'\in\posreals}\atime'+\potential(\vertexb,\socfunction{\alabel'}(\atime'))\\
 &=\queuekey^*(\alabel')\text{.}
\end{align*}
Similarly, we have to ensure that labels spawned at charging stations never have a smaller consistent key than the original label.
Consider the charging function $\chargingfunction_\vertex$ of a vertex $\vertex\in\chargingstations$. We denote by $\maxchargingspeed(\vertex)$ the \emph{maximum slope} of the charging function~$\chargingfunction_\vertex$, which in most cases equals the value $\maxchargingspeed^*(\vertex):=\max_{\atime\in\strictposreals}\{\derivative\simplechargingfunction_\vertex(\atime)/\derivative\atime\}$. (As before, we use the \emph{right} derivative to ensure that slope is well-defined for piecewise-defined functions.)
However, in the special case that $\simplechargingfunction_\vertex(0)\neq0$, we incorporate initialization time to obtain a finite slope $\mincharge_\vertex/\arrangementtime(\vertex)$ for the initial \gls*{soc} gain, presuming that $\arrangementtime(\vertex)\neq0$.
In total, we obtain the maximum slope $\maxchargingspeed(\vertex):=\max\{\maxchargingspeed^*(\vertex),\mincharge_\vertex/\arrangementtime(\vertex)\}$ of~$\vertex$.
For instance, we get $\maxchargingspeed(\vertex)=\maxbattery/\arrangementtime(\vertex)$ if $\vertex$ is a swapping station.
To ensure that labels spawned at $\vertex$ do not result in decreasing keys, we demand for the slope of the potential $\potential(\vertex,\soc)$ at $\vertex$ that
\begin{align*}
 \frac{\derivative\potential(\vertex,\soc)}{\derivative\soc}\ge-\frac{1}{\maxchargingspeed(\vertex)}\text{.}
\end{align*}
We say that the potential $\potential$ \emph{overestimates charging speed} at $\vertex$ in this case. Observe that overestimation of charging speed implies that the term $\atime+\potential(\vertex,\simplechargingfunction_\vertex(\atime))$ is \emph{increasing} with $\atime\in\posreals$, so charging at $\vertex$ does not decrease the potential.
Finally, we demand that the potential at the target is $\potential(\target,\soc)=0$ for arbitrary \gls*{soc}~$\soc\in[0,\maxbattery]$.
In summary, when using consistent keys, there are the following three requirements to a potential function~$\potential$.
\begin{enumerate}
 \item The potential function $\potential$ is consistent.
 \item Potentials at charging stations overestimate charging speed.
 \item The target vertex has a potential of~$0$ (for any finite~\gls*{soc}).
\end{enumerate}
Then, the algorithm is label setting and the correct result is obtained as soon as~$\target$ is reached (since the key at $\target$ equals trip time, so any label extracted at a later point must have a higher trip time).
As a simple example, consider the plain \gls*{cfp} algorithm described in Section~\ref{sec:approach}, which uses no potential function. This is equivalent to a potential function that evaluates to $0$ at all vertices for arbitrary~\gls*{soc}.
Clearly, the smallest feasible trip time of~$\socfunction{\alabel}$ is in fact the consistent key of a label~$\alabel$.
Moreover, observe that the potential function $\potential\equiv0$ is consistent, overestimates charging speed, and equals $0$ at the target.
In what follows, we derive more sophisticated potential functions that make the search goal directed. For each potential function, we show that the requirements listed above are fulfilled.

\paragraph{Potentials Based on Single-Criterion Search.}
We introduce our first consistent potential function.
To obtain a lower bound on the necessary charging time on the path from some vertex $\vertex\in\vertices$ to the target, let $\maxchargingspeed:=\max_{\vertex\in\chargingstations}\maxchargingspeed(\vertex)$ denote the maximum slope of \emph{any} charging function in~$\chargingstations$, \ie,~the maximum charging speed available in the network.
We define a new cost function~$\omegaweightfunction\colon\arcs\to\reals$, given as $\omegaweightfunction(\arc):=\drivingtimefunction(\arc)+(\consumptionfunction(\arc)/\maxchargingspeed)$ for all~$\arc\in\arcs$. This function adds to the driving time of every arc a lower bound on the time required for charging the energy consumed on the arc. Note that the bound can become negative, due to negative consumption values.

Given the target vertex~$\target$, prior to running~\gls*{cfp}, we perform three (single-criterion) runs of Dijkstra's algorithm from~$\target$, traversing arcs in backward direction and each using one of the cost functions $\drivingtimefunction$,~$\consumptionfunction$, and~$\omegaweightfunction$.
Thereby, we obtain, for every vertex~$\vertex\in\vertices$, the distances~$\distance_{\drivingtimefunction}(\vertex,\target)$,~$\distance_{\consumptionfunction}(\vertex,\target)$, and~$\distance_{\omegaweightfunction}(\vertex,\target)$ from $\vertex$ to $\target$ \wrt the cost functions~$\drivingtimefunction$,~$\consumptionfunction$, and~$\omegaweightfunction$, respectively.
Note that computation of~$\distance_{\consumptionfunction}(\vertex,\target)$ and~$\distance_{\omegaweightfunction}(\vertex,\target)$ is label correcting, due to negative weights. We can apply potential shifting~\cite{Joh77} to remedy this issue, but the effect on overall running time is negligible in practice.
Using the obtained distances, the potential function $\omegapotential\colon\vertices\times[0,\maxbattery]\cup\{-\infty\}\to\posreals\cup\{\infty\}$ is defined by
\begin{align}
 \label{eq:cfp:omegapotential}
 \omegapotential(\vertex,\soc):=
 \begin{cases}
   \distance_{\drivingtimefunction}(\vertex,\target)                                 & \mbox{if } \soc \geq \distance_{\consumptionfunction}(\vertex,\target)\text{,} \\
   \distance_{\omegaweightfunction}(\vertex,\target)-\frac{\soc}{\maxchargingspeed}  & \mbox{otherwise,}
 \end{cases} 
\end{align}
for all $\vertex\in\vertices$ and~$\soc\in[0,\maxbattery]$.
It uses the minimum (unrestricted) driving time as lower bound on the remaining trip time in case that the current \gls*{soc} is greater or equal to the minimum energy required to reach the target without recharging.
Otherwise, we know that we have to spend some additional time for charging on the way to the target~$\target$, so we use a bound induced by the weight function~$\omegaweightfunction$.
Note that we can discard labels whose consumption exceeds $\maxbattery$ in the search that computes $\distance_{\consumptionfunction}(\vertex,\target)$ to save some time in practice.
Lemma~\ref{lem:omega-feasible-potential} formally proves that the potential function~$\omegapotential$ defined above is consistent.

\begin{lemma}
\label{lem:omega-feasible-potential}
The potential function $\omegapotential$ is consistent.
\end{lemma}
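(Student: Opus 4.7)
The plan is to verify, for every arc $(\vertexa,\vertexb)\in\arcs$ and every $\soc\in[0,\maxbattery]$, that the reduced driving time $\reduceddrivingtimefunction((\vertexa,\vertexb),\soc)=\drivingtimefunction(\vertexa,\vertexb)-\omegapotential(\vertexa,\soc)+\omegapotential(\vertexb,\soc')$ is nonnegative, where $\soc':=\socprofile_{[\vertexa,\vertexb]}(\soc)$. Arcs with $\soc<\consumptionfunction(\vertexa,\vertexb)$ are handled vacuously, because they force $\soc'=-\infty$ and hence $\omegapotential(\vertexb,\soc')=\infty$. In every other case $\soc'=\min\{\maxbattery,\soc-\consumptionfunction(\vertexa,\vertexb)\}$, which yields the elementary identities $\soc'\le\soc-\consumptionfunction(\vertexa,\vertexb)$ and $\soc-\soc'\ge\consumptionfunction(\vertexa,\vertexb)$ that the case analysis leans on.

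Two structural ingredients will be used throughout. The first is a family of triangle inequalities $\distance_{f}(\vertexa,\target)\le f(\vertexa,\vertexb)+\distance_{f}(\vertexb,\target)$ for each $f\in\{\drivingtimefunction,\consumptionfunction,\omegaweightfunction\}$, obtained by concatenating the arc $(\vertexa,\vertexb)$ with a shortest $\vertexb$--$\target$-path under the corresponding weight. The second is the pointwise bound $\distance_{\omegaweightfunction}(\vertexb,\target)\ge\distance_{\drivingtimefunction}(\vertexb,\target)+\distance_{\consumptionfunction}(\vertexb,\target)/\maxchargingspeed$, which is immediate from $\omegaweightfunction=\drivingtimefunction+\consumptionfunction/\maxchargingspeed$ together with the elementary fact that $\min_{\apath}(f(\apath)+g(\apath))\ge\min_{\apath}f(\apath)+\min_{\apath}g(\apath)$.

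Armed with these, the proof splits into four cases according to whether $\soc\ge\distance_{\consumptionfunction}(\vertexa,\target)$ and whether $\soc'\ge\distance_{\consumptionfunction}(\vertexb,\target)$, so that each appearance of $\omegapotential$ sits on a fixed branch of its piecewise definition. Three of the four are routine. If both quantities are large, both potentials collapse to $\distance_{\drivingtimefunction}$ and the claim is exactly the triangle inequality for $\drivingtimefunction$. If $\soc$ is large but $\soc'$ is small, replacing $\distance_{\omegaweightfunction}(\vertexb,\target)$ by the lower bound above and exploiting $\soc'<\distance_{\consumptionfunction}(\vertexb,\target)$ reduces the claim to that same triangle inequality with positive slack. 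If both are small, the triangle inequality for $\omegaweightfunction$ combined with $\soc-\soc'\ge\consumptionfunction(\vertexa,\vertexb)$ closes the case directly.

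The hard part is the remaining case, $\soc<\distance_{\consumptionfunction}(\vertexa,\target)$ together with $\soc'\ge\distance_{\consumptionfunction}(\vertexb,\target)$, and the plan is to show it cannot occur. If the cap at $\maxbattery$ inside $\socprofile_{[\vertexa,\vertexb]}$ is inactive, then $\soc=\soc'+\consumptionfunction(\vertexa,\vertexb)\ge\distance_{\consumptionfunction}(\vertexb,\target)+\consumptionfunction(\vertexa,\vertexb)\ge\distance_{\consumptionfunction}(\vertexa,\target)$ by the triangle inequality for $\consumptionfunction$, contradicting the assumed bound on $\soc$. If the cap is active, then $\soc-\consumptionfunction(\vertexa,\vertexb)>\maxbattery=\soc'\ge\distance_{\consumptionfunction}(\vertexb,\target)$, so again $\soc>\consumptionfunction(\vertexa,\vertexb)+\distance_{\consumptionfunction}(\vertexb,\target)\ge\distance_{\consumptionfunction}(\vertexa,\target)$, a contradiction. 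Ruling out this spurious case via the capping at $\maxbattery$ is the true nontrivial content of the lemma; the other three branches are essentially mechanical once the auxiliary bound on $\distance_{\omegaweightfunction}$ is in hand.
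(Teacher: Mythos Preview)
Your proof is correct and follows essentially the same approach as the paper: a four-way case split on which branch of $\omegapotential$ applies at $\vertexa$ and at $\vertexb$, using the triangle inequalities for $\drivingtimefunction$, $\consumptionfunction$, and $\omegaweightfunction$, the key bound $\distance_{\omegaweightfunction}(\vertexb,\target)\ge\distance_{\drivingtimefunction}(\vertexb,\target)+\distance_{\consumptionfunction}(\vertexb,\target)/\maxchargingspeed$, and the inequality $\soc'\le\soc-\consumptionfunction(\vertexa,\vertexb)$ from the battery cap. The only cosmetic differences are that you handle the infeasible case $\soc'=-\infty$ explicitly up front and split the contradictory case into capped and uncapped subcases, whereas the paper treats both at once via $\soc\ge\socprofile_{[\vertexa,\vertexb]}(\soc)+\consumptionfunction(\vertexa,\vertexb)$.
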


\begin{proof}
To prove the claim, we show that reduced costs $\reduceddrivingtimefunction(\cdot,\cdot)$ are nonnegative, \ie, the inequality $\reduceddrivingtimefunction((\vertexa,\vertexb),\soc)=\drivingtimefunction(\vertexa,\vertexb)-\omegapotential(\vertexa,\soc)+\omegapotential(\vertexb,\socprofile_{[\vertexa, \vertexb]}(\soc))\ge0$ holds for all $(\vertexa,\vertexb)\in\arcs$ and~$\soc\in[0,\maxbattery]$.
Consider an arbitrary arc $(\vertexa,\vertexb)\in\arcs$ and assume that the \gls*{soc} at $\vertexa$ is~$\soc\in[0,\maxbattery]$.
We distinguish four cases.

\case{Case 1} $\soc<\distance_{\consumptionfunction}(\vertexa, \target)$ and $\socprofile_{[\vertexa, \vertexb]}(\soc)<\distance_{\consumptionfunction}(\vertexb,\target)$.
The claim follows after a few simple substitutions. We can make use of the fact that $\soc-\consumptionfunction(\vertexa,\vertexb)\ge\socprofile_{[\vertexa, \vertexb]}(\soc)$ holds for all~$\soc\in[0, \maxbattery]$. Consistency then follows directly from the triangle inequality, after the following steps
\begin{align*}
\reduceddrivingtimefunction((\vertexa, \vertexb),\soc) &= \drivingtimefunction(\vertexa, \vertexb)-\omegapotential(\vertexa, \soc) + \omegapotential(\vertexb,\socprofile_{[\vertexa, \vertexb]}(\soc)) \\
 &=\drivingtimefunction(\vertexa, \vertexb)-\distance_{\omegaweightfunction}(\vertexa,\target)+\frac{\soc}{\maxchargingspeed}+\distance_{\omegaweightfunction}(\vertexb, \target)-\frac{\socprofile_{[\vertexa,\vertexb]}(\soc)}{\maxchargingspeed} \\
 &\ge\drivingtimefunction(\vertexa, \vertexb)+\frac{\consumptionfunction(\vertexa,\vertexb)}{\maxchargingspeed}-\distance_{\omegaweightfunction}(\vertexa,\target)+\distance_{\omegaweightfunction}(\vertexb,\target) \\
 &=\omegaweightfunction(\vertexa,\vertexb)-\distance_{\omegaweightfunction}(\vertexa,\target)+\distance_{\omegaweightfunction}(\vertexb,\target)\\
 &\ge0\text{.}
\end{align*}

\case{Case 2} $\soc<\distance_{\consumptionfunction}(\vertexa,\target)$ and $\socprofile_{[\vertexa,\vertexb]}(\soc)\ge\distance_{\consumptionfunction}(\vertexb,\target)$. We make use of both preconditions together with the fact that $\soc-\consumptionfunction(\vertexa,\vertexb)\ge\socprofile_{[\vertexa,\vertexb]}(\soc)$ holds for all~$\soc\in[0,\maxbattery]$ to obtain the inequalities
\begin{align*}
 \soc\ge\socprofile_{[\vertexa, \vertexb]}(\soc)+\consumptionfunction(\vertexa,\vertexb)\ge\distance_{\consumptionfunction}(\vertexb, \target)+\consumptionfunction(\vertexa, \vertexb)\ge\distance_{\consumptionfunction}(\vertexa, \target)>\soc\text{,}
\end{align*}
which yield a contradiction. Hence, this case cannot occur.

\case{Case 3} $\soc\ge\distance_{\consumptionfunction}(\vertexa,\target)$ and $\socprofile_{[\vertexa, \vertexb]}(\soc)<\distance_{\consumptionfunction}(\vertexb,\target)$.
We know that, due to the triangle inequality, $\drivingtimefunction(\vertexa, \vertexb)+\distance_{\drivingtimefunction}(\vertexb,\target)\ge\distance_{\drivingtimefunction}(\vertexa,\target)$ holds. Moreover, $\distance_{\consumptionfunction}(\vertexb,\target)-\socprofile_{[\vertexa, \vertexb]}(\soc)>0$ holds by assumption.
We can further exploit that $\distance_\omegaweightfunction(\vertexb,\target)$ is at least the sum of $\distance_{\drivingtimefunction}(\vertexb,\target)$ and~$\distance_{\consumptionfunction}(\vertexb,\target)/\maxchargingspeed$ (note that it is possibly greater if the shortest $\vertexa$--$\target$~paths in the graph \wrt the cost functions $\drivingtimefunction$ and $\consumptionfunction$ differ).
After some substitutions, we thus get
\begin{align*}
\reduceddrivingtimefunction((\vertexa, \vertexb),\soc)&=\drivingtimefunction(\vertexa, \vertexb)-\omegapotential(\vertexa,\soc)+\omegapotential(\vertexb,\socprofile_{[\vertexa, \vertexb]}(\soc))\\
 &=\drivingtimefunction(\vertexa, \vertexb)-\distance_{\drivingtimefunction}(\vertexa,\target)+\distance_{\omegaweightfunction}(\vertexb,\target)-\frac{\socprofile_{[\vertexa,\vertexb]}(\soc)}{\maxchargingspeed}\\
 &\ge\drivingtimefunction(\vertexa, \vertexb)-\distance_{\drivingtimefunction}(\vertexa, \target)+\distance_{\drivingtimefunction}(\vertexb,\target)+\frac{\distance_{\consumptionfunction}(\vertexb,\target)}{\maxchargingspeed}-\frac{\socprofile_{[\vertexa, \vertexb]}(\soc)}{\maxchargingspeed}\\
 &\ge0\text{.}
\end{align*}

\case{Case 4} $\soc\ge\distance_{\consumptionfunction}(\vertexa,\target)$ and $\socprofile_{[\vertexa, \vertexb]}(\soc)\ge\distance_{\consumptionfunction}(\vertexb,\target)$. This case is trivial; feasibility follows directly from the triangle inequality.
\end{proof}

The potential function $\omegapotential$ always evaluates to $0$ at the target and overestimates charging speed by construction.
As for consistent keys, defined for a label~$\alabel$ as~$\queuekey^*(\alabel)=\min_{\atime\in\posreals}\atime+\potential(\vertexb,\socfunction{\alabel}(\atime))$, observe that the corresponding terms $\atime+\distance_{\drivingtimefunction}(\vertex,\target)$ and $\atime+\distance_{\omegaweightfunction}(\vertex,\target)-(\socfunction{\alabel}(\atime)/\maxchargingspeed)$ in Equation~\ref{eq:cfp:omegapotential} increase with $\atime$ (the term $\socfunction{\alabel}(\atime)/\maxchargingspeed$ increases with a slope of at most~1).
Thus, we obtain the consistent key for the label $\alabel$ by computing the value of $\atime+\omegapotential(\vertex,\socfunction{\alabel}(\atime))$ at the minimum feasible trip time $\atime_1:=\minfeasibletime{\socfunction{\alabel}}$ of $\socfunction{\alabel}$ and at the minimum trip time $\atime_2$ with $\socfunction{\alabel}(\atime_2)\ge\distance_{\consumptionfunction}(\vertex,\target)$, if it exists.
The minimum of both values yields a consistent key, given as~$\queuekey^*(\alabel)=\min\{\atime_1,\atime_2\}$.
Together with Lemma~\ref{lem:omega-feasible-potential}, this implies correctness of \gls*{cfp} when applying potential shifting with the function~$\omegapotential$.

\paragraph{Potentials Based on Bound Function Propagation.}
Even though the potential function $\omegapotential$ incorporates~\gls*{soc}, lower bounds may be too conservative in that they presume recharging is possible at any time and with the best charging rate.
We attempt to be more precise, while keeping computational effort limited, by explicitly constructing lower bound functions.
Again, we run (at query time) a label-correcting backward search from the given target $\target$, but this time computing for each vertex $\vertex\in\vertices$ a \emph{piecewise linear function} $\convexlowerboundfunction\colon\reals\to\posreals\cup\{\infty\}$ mapping \gls*{soc} to a lower bound on the trip time from $\vertex$ to~$\target$. These piecewise linear functions are represented by sequences $\convexfunctionbreakpoints=[(\soc_1,\atime_1),\dots,(\soc_k,\atime_k)]$ of breakpoints such that $\convexlowerboundfunction(\soc)=\infty$ for~$\soc<\soc_1$, $\convexlowerboundfunction(\soc)=\atime_k$ for~$\soc\ge\soc_k$, and the function is evaluated by linear interpolation between breakpoints.
If the sequence $\convexfunctionbreakpoints$ of breakpoints is empty, denoted~$\convexfunctionbreakpoints=\emptyset$, we obtain $\convexlowerboundfunction\equiv\infty$.
During the backward search, each vertex stores a \emph{single} label consisting of such a piecewise linear function.
To simplify the search, we ignore battery constraints. Hence, domains of bounds are not restricted to~$[0,\maxbattery]$ and we compute (possibly negative) lower bounds on the \gls*{soc} necessary to reach~$\target$.
However, we maintain the invariant that all functions are \emph{decreasing} and \emph{convex} on the interval~$[\soc_1,\infty)$. This greatly simplifies label updates and the computation of functions, which we describe in detail below.

\begin{algorithm}[t]
  \caption{Pseudocode of the function-propagating potential search for \gls*{cfp}.}%
  \label{alg:cfp-potential-backward-search}%
  \BlankLine
  \tcp{initialize labels}%
  \ForEach{$\vertex\in\vertices$}{\label{line:cfp-potential-search-initialize-begin}%
    $\convexlowerboundfunction_\vertex\assign\emptyset$\;%
  }%
  $\convexlowerboundfunction_\target\assign[(0,0)]$\;%
  $\queue$.\queueInsert{$\target,0$}\;\label{line:cfp-potential-search-initialize-end}%
  \BlankLine
  \tcp{run main loop}%
  \While{$\queue$.\queueIsNotEmpty{}}{\label{line:cfp-potential-search-main-begin}%
    \BlankLine
    $\vertexb\assign\queue$.\queueDeleteMin{}\;%
    \BlankLine
    \tcp{handle charging stations}%
    \If{$\vertexb\in\chargingstations$}{%
      $\convexlowerboundfunction_\vertexb\assign\extendop(\convexlowerboundfunction_\vertexb,\simplechargingfunction_\vertexb)$\;\label{line:cfp-potential-search-link-cs}
    }%
    \BlankLine
    \tcp{scan outgoing arcs}%
    \ForEach{$(\vertexa,\vertexb)\in\arcs$}{%
      $\convexlowerboundfunction\assign\shiftop(\convexlowerboundfunction_\vertexb,[(\consumptionfunction(\vertexa,\vertexb),\drivingtimefunction(\vertexa,\vertexb))])$\;\label{line:cfp-potential-search-shift}%
      \If{$\exists \atime\in\reals\colon\convexlowerboundfunction(\atime)<\convexlowerboundfunction_\vertexa(\atime)$}{%
        $\convexlowerboundfunction_\vertexa\assign\mergeop(\convexlowerboundfunction_\vertexa,\convexlowerboundfunction)$\;\label{line:cfp-potential-search-merge}%
        $\queue$.\queueUpdate{$\vertexa,\queuekey(\convexlowerboundfunction_\vertexa)$}\;\label{line:cfp-potential-search-main-end}%
      }%
    }%
  }%
\end{algorithm}

The algorithm resembles (label-correcting) profile search~\cite{Del09b,Sch15} and its pseudocode is outlined in Algorithm~\ref{alg:cfp-potential-backward-search}. It is initialized in Lines~\ref{line:cfp-potential-search-initialize-begin}--\ref{line:cfp-potential-search-initialize-end} with a function~$\convexlowerboundfunction_\target$, represented by a single breakpoint $\convexfunctionbreakpoints_\target=[(0,0)]$ at the target vertex~$\target$, \ie, $\convexlowerboundfunction_\target(\soc)$ evaluates to $0$ for arbitrary values~$\soc\in\posreals$. All other labels are empty, so their functions always evaluate to~$\infty$. Each step of the algorithm's main loop (Lines~\ref{line:cfp-potential-search-main-begin}--\ref{line:cfp-potential-search-main-end}) scans a vertex with minimum key
(the key of a vertex $\vertex\in\vertices$ is the minimum function value $\min_{\soc\in\reals}\convexlowerboundfunction_\vertex(\soc)$ of its label~$\convexlowerboundfunction_\vertex$).

Whenever the search reaches a charging station~$\vertexb\in\chargingstations$, we have to ensure that the possibility of recharging is reflected in the label of $\vertexb$ and that the function overestimates charging speed. To this end, we \emph{extend} the (tentative) lower bound function $\convexlowerboundfunction_\vertexb$ with the charging function~$\simplechargingfunction_\vertexb$ (Line~\ref{line:cfp-potential-search-link-cs} of Algorithm~\ref{alg:cfp-potential-backward-search}). The result of this operation is a new (tentative) lower bound on the trip time that incorporates recharging at $\vertexb$ and has a slope of at least~$-1/\maxchargingspeed(\vertexb)$ (on its subdomain with finite image).
Assume we are given a piecewise linear, convex function $\convexlowerboundfunction_\vertexb$ at $\vertexb$ with breakpoints $\convexfunctionbreakpoints_\vertexb=[(\soc_1,\atime_1),\dots,(\soc_k,\atime_k)]$ that maps \gls*{soc} to trip time without recharging at~$\vertexb$. We obtain the result~$\convexlowerboundfunction$ of extending $\convexlowerboundfunction_\vertexb$ with $\simplechargingfunction_\vertexb$ as follows; see Figure~\ref{fig:cfp-backward-search-linking} for an example.
In accordance with our requirements for the potential at $\vertexb$, we approximate $\simplechargingfunction_\vertexb$ with a lower bound given by a single segment with slope~$1/\maxchargingspeed(\vertexb)$; see Figure~\ref{fig:cfp-backward-search-linking:a}.
Distributing (lower bounds on) charging time among~$\simplechargingfunction_\vertexb$ and stations represented by $\convexlowerboundfunction_\vertexb$ then corresponds to shifting this segment along the y-axis such that it intersects the graph of~$\convexlowerboundfunction_\vertexb$; see Figure~\ref{fig:cfp-backward-search-linking:b}.
To find a lower bound on the best possible distribution, let~$i\in\{2,\dots,k-1\}$ be the unique index (if it exists) such that
\begin{align*}
 \frac{\atime_{i} - \atime_{i-1}}{\soc_{i} - \soc_{i-1}} \le -\frac{1}{\maxchargingspeed(\vertexb)} < \frac{\atime_{i+1} - \atime_{i}}{\soc_{i+1} - \soc_{i}} \text{,}
\end{align*}
\ie, the (negative, inverse) maximum slope of $\simplechargingfunction_\vertexb$ is at least the slope of the segment from $\soc_{i-1}$ to~$\soc_{i}$, but lower than the slopes of all subsequent segments.
We set $i:=1$ if the maximum slope of $\simplechargingfunction_\vertexb$ is below the slope of the first segment (from $\soc_{1}$ to~$\soc_{2}$), and $i:=k$ if the maximum slope of $\simplechargingfunction_\vertexb$ is at least the slope of the last segment (from $\soc_{k-1}$ to~$\soc_{k}$).
Then, if $\soc_i\le0$, the function $\convexlowerboundfunction_\vertexb$ remains unchanged, \ie,~$\convexlowerboundfunction=\convexlowerboundfunction_\vertexb$, as charging at $\vertexb$ cannot decrease the lower bound in this case.
Otherwise, $\convexlowerboundfunction$ is defined by the sequence~$\convexfunctionbreakpoints:=[(0,\atime_i+\soc_i/\maxchargingspeed(\vertexb)),(\soc_i,\atime_i),\dots,(\soc_k,\atime_k)]$; see Figure~\ref{fig:cfp-backward-search-linking:b}.
The first segment of this function corresponds to an estimate of the time to charge to an \gls*{soc} of $\soc_i$ plus the remaining trip time to~$\target$. 
By construction, $\convexlowerboundfunction$ is convex and overestimates charging speed (provided that the same holds for~$\convexlowerboundfunction_\vertexb$).

\begin{figure}[t]
 \centering%
 \begin{subfigure}[b]{.5\textwidth}%
  \centering%
  \begin{tikzpicture}[figure,scale=\ChargingFunctionExampleScale]
   \ExampleDrawCoordinateSystemWithTicks{1}{5}{0}{5}{0}{0}
 
   \PlotXAxisName{5}{0}{$\soc$}
   \PlotYAxisName{0}{5}{$\simplechargingfunctionInv_\vertexb(\soc)$}
 
   \begin{scope}
    \begin{pgfinterruptboundingbox}
     \clip (0,0) rectangle (5,5);
     \draw[chargingfunction,secondarycolor-light,dashed] (3,2) -- (6,4);
     \draw[chargingfunction,secondarycolor-dark] (0,0) -- node [pos=0.9,below=2pt,sloped,edgelabel,text=secondarycolor] {$\frac{1}{\maxchargingspeed(\vertexb)}$} (3,2) -- (5,5);
    \end{pgfinterruptboundingbox}
   \end{scope}
  \end{tikzpicture}
  \caption{}%
  \label{fig:cfp-backward-search-linking:a}%
 \end{subfigure}%
 \begin{subfigure}[b]{.5\textwidth}%
  \centering%
  \begin{tikzpicture}[figure,scale=\ChargingFunctionExampleScale]
   \ExampleDrawCoordinateSystemWithPositiveInfty{1}{5}{0}{5}{0}{0}

   \begin{scope}
    \begin{pgfinterruptboundingbox}
     \clip (0,5) -- (5,5) -- (5,1.5) -- (0,2.75) -- cycle;
     \draw[black30,line cap=rect] (0,5) -- (6,1);
     \draw[black30,line cap=rect] (0,6) -- (6,2);
     \draw[black30,line cap=rect] (0,7) -- (6,3);
     \draw[black30,line cap=rect] (0,8) -- (6,4);
    \end{pgfinterruptboundingbox}
   \end{scope}
   
   \PlotXAxisName{5}{0}{$\soc$}
   \PlotYAxisName{0}{5}{$\convexlowerboundfunction_\vertexb(\soc)$}
   
   \begin{scope}
    \begin{pgfinterruptboundingbox}
     \clip (0,0) rectangle (5,6+\ExampleInftyOffset);
     \draw[function,primarycolor-light,dashed] (0,5+\ExampleInftyOffset) -- (2,5+\ExampleInftyOffset);
     \draw[function,primarycolor-light,dashed] (2,4) -- (3,2);
     \draw[function,primarycolor-dark] (0,4) -- node [pos=0.5,below=2pt,sloped,edgelabel] {$-\frac{1}{\maxchargingspeed(\vertexb)}$} (3,2)  -- (5,1.5);
    \end{pgfinterruptboundingbox}
   \end{scope}

   \begin{pgfonlayer}{foreground}
    \node[discontinuityblank,draw=primarycolor-light] (d1) at (2,5+\ExampleInftyOffset) {};
    \node[discontinuityfilled,primarycolor-light] (d2) at (2,4) {};
   \end{pgfonlayer}
  \end{tikzpicture}
  \caption{}%
  \label{fig:cfp-backward-search-linking:b}%
 \end{subfigure}%
 \caption{Extending functions in the function propagating backward search. (a) The (expanded) inverse charging function $\simplechargingfunctionInv_\vertexb$ of a charging station $\vertexb\in\chargingstations$ maps \gls*{soc} to charging time. The dashed line indicates its (inverse) maximum slope, which we use to approximate the (inverse) charging function with a single segment. (b) The lower bound function $\convexlowerboundfunction_\vertexb$ (dashed) and the result $\convexlowerboundfunction$ of extending it with the charging function~$\simplechargingfunction_\vertexb$ (dark blue)}%
\label{fig:cfp-backward-search-linking}%
\end{figure}

Since we ignore battery constraints, scanning an incoming arc $(\vertexa,\vertexb)\in\arcs$ of $\vertexb$ boils down to shifting all breakpoints of the current function $\convexlowerboundfunction_\vertexb$ by $\consumptionfunction(\vertexa,\vertexb)$ and $\drivingtimefunction(\vertexa,\vertexb)$ on the x- and y-axis, respectively. More formally, given the breakpoints $\convexfunctionbreakpoints_\vertexb=[(\soc_1,\atime_1),\dots,(\soc_k,\atime_k)]$ of~$\convexlowerboundfunction_\vertexb$, we compute a function $\convexlowerboundfunction$ with
\begin{align*}
 \convexfunctionbreakpoints:=[(\soc_1+\consumptionfunction(\vertexa,\vertexb),\atime_1+\drivingtimefunction(\vertexa,\vertexb)),\dots,(\soc_k+\consumptionfunction(\vertexa,\vertexb),\atime_k+\drivingtimefunction(\vertexa,\vertexb))]\text{.}
\end{align*}
Then, we check whether the function $\convexlowerboundfunction$ is smaller than the function $\convexlowerboundfunction_\vertexa$ in the label of $\vertexa$ for at least one~$\soc\in\reals$.
If this is the case, we \emph{merge} $\convexlowerboundfunction$ and~$\convexlowerboundfunction_\vertexa$, \ie, we compute the function defined as their pointwise minimum $\convexlowerboundfunction_\vertexa(\soc):=\min\{\convexlowerboundfunction(\soc),\convexlowerboundfunction_\vertexa(\soc)\}$ for all~$\soc\in\reals$.
This operation requires a linear-time scan over the breakpoints of both involved functions, similar to label-correcting profile searches in time-dependent route planning~\cite{Bat13,Bau16b,Del09b}.
The resulting function $\convexlowerboundfunction_\vertexb$ is again piecewise linear, but may no longer be convex. Therefore, we compute, during each merge operation, the \emph{convex lower hull} of the result using Graham's scan~\cite{Gra72}. While (slightly) deteriorating the quality of the bound, this reduces the number of breakpoints and simplifies handling of charging stations.
We obtain a convex function, which is stored in the label of~$\vertexb$.
The vertex $\vertexb$ is also updated in the priority queue.

Again, it is easy to see that the resulting potential function $\convexpotential\colon\vertices\times[0,\maxbattery]\cup\{-\infty\}\to\posreals\cup\{\infty\}$ is consistent when using the computed bounds by setting $\convexpotential(\vertex,\soc):=\convexlowerboundfunction_\vertex(\soc)$ for $\vertex\in\vertices$ and~$\soc\in[0,\maxbattery]$.
Lemma~\ref{lem:convex-feasible-potential} proves this formally.
The consistent key $\queuekey^*(\alabel)=\min_{\atime\ge0}\atime+\convexpotential(\vertex,\socfunction{\alabel}(\atime))$ of a label $\alabel$ at some vertex~$\vertex\in\vertices$ is computed in a linear scan over $\socfunction{\alabel}$ and the breakpoints of the piecewise linear function~$\convexlowerboundfunction_\vertex$.

\begin{lemma}
\label{lem:convex-feasible-potential}
The potential function $\convexpotential$ is consistent.
\end{lemma}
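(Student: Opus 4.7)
My plan is to argue, just as in Lemma~\ref{lem:omega-feasible-potential}, that the reduced driving time $\reduceddrivingtimefunction((\vertexa,\vertexb),\soc)=\drivingtimefunction(\vertexa,\vertexb)-\convexlowerboundfunction_\vertexa(\soc)+\convexlowerboundfunction_\vertexb(\socprofile_{[\vertexa,\vertexb]}(\soc))$ is nonnegative for every arc $(\vertexa,\vertexb)\in\arcs$ and every $\soc\in[0,\maxbattery]$. The cases where either potential is infinite are either vacuous (if $\convexlowerboundfunction_\vertexa(\soc)=\infty$, there is nothing to show when $\convexlowerboundfunction_\vertexb$ is finite, and if $\socprofile_{[\vertexa,\vertexb]}(\soc)=-\infty$ we have $\convexpotential(\vertexb,-\infty)=\infty$ by definition), so I can restrict attention to the case in which both function values are finite.

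The central step will be to identify an invariant that the backward search maintains upon termination: for every arc $(\vertexa,\vertexb)\in\arcs$, the function $\convexlowerboundfunction_\vertexa$ stored at $\vertexa$ satisfies
\begin{align*}
 \convexlowerboundfunction_\vertexa(\soc)\;\le\;\convexlowerboundfunction_\vertexb(\soc-\consumptionfunction(\vertexa,\vertexb))+\drivingtimefunction(\vertexa,\vertexb)
\end{align*}
for all $\soc\in\reals$. This is exactly the inequality enforced in Line~\ref{line:cfp-potential-search-merge} of Algorithm~\ref{alg:cfp-potential-backward-search} through the merge: whenever a shifted candidate $\convexlowerboundfunction=\shiftop(\convexlowerboundfunction_\vertexb,[(\consumptionfunction(\vertexa,\vertexb),\drivingtimefunction(\vertexa,\vertexb))])$ lies below $\convexlowerboundfunction_\vertexa$ at any point, $\vertexa$ is reinserted into the queue and its label is replaced by the merge. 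Upon termination, no such improving candidate remains, which yields the desired inequality. I also need to note that subsequently taking the convex lower hull in the merge operation only decreases $\convexlowerboundfunction_\vertexa$ pointwise, preserving the inequality.

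Having this invariant, the remainder is a short computation. Using that $\socprofile_{[\vertexa,\vertexb]}(\soc)\le\soc-\consumptionfunction(\vertexa,\vertexb)$ (with equality unless recuperation caps the result at $\maxbattery$) and that $\convexlowerboundfunction_\vertexb$ is decreasing on its subdomain with finite image (an invariant maintained by all three operations: shift trivially, merge because the pointwise minimum of decreasing functions is decreasing and the convex lower hull of a decreasing function is decreasing, and extend by construction in Line~\ref{line:cfp-potential-search-link-cs}), I conclude
\begin{align*}
 \convexlowerboundfunction_\vertexb(\soc-\consumptionfunction(\vertexa,\vertexb))\;\le\;\convexlowerboundfunction_\vertexb(\socprofile_{[\vertexa,\vertexb]}(\soc))\text{,}
\end{align*}
which combined with the invariant gives $\convexlowerboundfunction_\vertexa(\soc)\le\drivingtimefunction(\vertexa,\vertexb)+\convexlowerboundfunction_\vertexb(\socprofile_{[\vertexa,\vertexb]}(\soc))$, establishing nonnegativity of the reduced driving time.

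The main obstacle I expect is a careful justification of the preserved invariants, especially the monotonicity of $\convexlowerboundfunction_\vertexb$ after the extend operation at charging stations. There, one has to check that prepending a segment with slope $-1/\maxchargingspeed(\vertexb)$ (and dropping the breakpoints to the left of the tangency index $i$) indeed yields a function that is still decreasing, convex, and bounds the true trip time from below; this follows from the description of $\extendop$ together with the fact that $\simplechargingfunction_\vertexb$ has maximum slope $\maxchargingspeed(\vertexb)$, so replacing a prefix of $\simplechargingfunctionInv_\vertexb$ by the single segment of inverse slope underestimates charging time and hence trip time. Once this is verified, the inequality chain above closes the proof.
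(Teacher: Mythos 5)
Your proposal is correct and follows essentially the same route as the paper's proof: the termination invariant $\convexlowerboundfunction_\vertexa(\soc)\le\convexlowerboundfunction_\vertexb(\soc-\consumptionfunction(\vertexa,\vertexb))+\drivingtimefunction(\vertexa,\vertexb)$ enforced by the merge step, combined with $\socprofile_{[\vertexa,\vertexb]}(\soc)\le\soc-\consumptionfunction(\vertexa,\vertexb)$ and monotonicity of the bound functions. You are in fact somewhat more careful than the paper, which takes the decreasing/convexity invariants of $\shiftop$, $\mergeop$, and $\extendop$ for granted and does not separately treat the infinite cases.
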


\begin{proof}
For an arbitrary arc~$(\vertexa,\vertexb)\in\arcs$, consider the piecewise linear functions $\convexlowerboundfunction_\vertexa$ and $\convexlowerboundfunction_\vertexb$ at $\vertexa$ and~$\vertexb$, respectively, after the backward search has terminated.
We show that the reduced costs $\reduceddrivingtimefunction((\vertexa,\vertexb),\soc)$ are nonnegative for all~$\soc\in[0,\maxbattery]$.
We know that $\convexlowerboundfunction_\vertexa$ is upper bounded by the result of shifting~$\convexlowerboundfunction_\vertexb$ by the costs $\consumptionfunction(\vertexa, \vertexb)$ and $\drivingtimefunction(\vertexa, \vertexb)$ of the arc $(\vertexa, \vertexb)$ traversed in backward direction, since this function was merged with $\convexlowerboundfunction_\vertexa$ during the search.
This implies that $\convexpotential(\vertexb,\soc-\consumptionfunction(\vertexa, \vertexb))+\drivingtimefunction(\vertexa,\vertexb)\ge\convexpotential(\vertexa,\soc)$ holds for arbitrary $\soc\in[0,\maxbattery]$.
Moreover, $\socprofile_{[\vertexa,\vertexb]}(\soc)$ is a lower bound on $\soc-\consumptionfunction(\vertexa,\vertexb)$, so we have $\convexpotential(\vertexb,\soc-\consumptionfunction(\vertexa, \vertexb))\le\convexpotential(\vertexb,\socprofile_{[\vertexa, \vertexb]}(\soc))$ because $\convexpotential$ decreases with increasing~\gls*{soc}. Plugging this into the above inequality, we obtain $\drivingtimefunction(\vertexa, \vertexb)-\convexpotential(\vertexa, \soc)+\convexpotential(\vertexb,\socprofile_{[\vertexa, \vertexb]}(\soc))\ge0$ for all $\soc\in[0,\maxbattery]$, which proves the claim.
\end{proof}

\paragraph{Potential Search on Demand.}
Computing the potential function for every vertex in the graph is wasteful for short-range queries.
To speed up such queries, we run the backward searches that compute vertex potentials \emph{on demand}: Whenever the \gls*{cfp} search requires the potential of some vertex $\vertex\in\vertices$ that was not scanned by the backward search yet, the backward search is executed until $\vertex$ is reached. It is then suspended and only resumed if the potential of another vertex is required that the backward search has not visited.
This procedure yields consistent potentials if the backward search is label setting (otherwise, there is no guarantee that a lower bound was computed when a vertex is scanned for the first time). In case of the potential function $\omegapotential$, this can be ensured by applying Johnson's shifting technique~\cite{Joh77} to its backward searches.
For the potential function~$\convexpotential$, however, the function-propagating search is label correcting, so computing $\convexpotential$ on demand becomes more involved. We describe modifications to the search and the lower bounds to ensure that $\convexpotential$ is indeed a consistent potential, even if the search is suspended before it terminates.

First, we can ensure that the minimum key in the priority queue of the backward search is nondecreasing during the course of the algorithm: After scanning an arc $(\vertexa,\vertexb)\in\arcs$, consider two functions~$\convexlowerboundfunction$ and $\convexlowerboundfunction_\vertexa$ corresponding to the result of scanning the arc $(\vertexa,\vertexb)$ in backward direction and the current label at~$\vertexa$, respectively, before merging these two functions (see Line~\ref{line:cfp-potential-search-merge} in Algorithm~\ref{alg:cfp-potential-backward-search}). Let $\convexlowerboundfunction_\vertexa^*$ denote the result after merging. Since $\convexlowerboundfunction_\vertexa^*$ is the convex lower hull of the minimum of $\convexlowerboundfunction$ and~$\convexlowerboundfunction_\vertexa$, every breakpoint in $\convexlowerboundfunction_\vertexa^*$ must also be contained in $\convexlowerboundfunction$ or~$\convexlowerboundfunction_\vertexa$. Let $(\soc,\atime)$ be the breakpoint with minimum trip time $\atime\in\posreals$ contained in the corresponding sequence $\ubar{\convexfunctionbreakpoints}_\vertexa^*$ of~$\convexlowerboundfunction_\vertexa^*$, but not in the sequence $\ubar{\convexfunctionbreakpoints}_\vertexa$ of~$\convexlowerboundfunction_\vertexa$, \ie, $(\soc,\atime)\in\ubar{\convexfunctionbreakpoints}_\vertexa^*$ and $(\soc,\atime)\notin\ubar{\convexfunctionbreakpoints}_\vertexa$. If the result of merging improves the label at~$\vertexa$, such a point must exist.
We set the key of $\vertexa$ in the priority queue to the minimum of $\atime$ and its current key.
Since driving time is nonnegative, scanning an arc may only increase the driving time of newly added breakpoints.
As a result, propagating breakpoints never decreases the minimum key in the priority queue.

Assume that the backward search is suspended at some point and let $\atime^*\in\posreals$ be current minimum key of the priority queue. For each vertex~$\vertex\in\vertices$, consider its current label~$\convexlowerboundfunction_\vertex$. Let $\convexlowerboundfunction_\vertex^*$ denote the function obtained after applying Graham's scan to the result of merging $\convexlowerboundfunction_\vertex$ and the function induced by the single breakpoint~$[(0,\atime^*)]$. We claim that the potential function $\convexpotential^*$ with $\convexpotential^*(\vertex,\soc):=\convexlowerboundfunction_\vertex^*(\soc)$ for all $\vertex\in\vertices$ and $\soc\in[0,\maxbattery]$ is consistent.
To see this, consider a (multi-)graph $\graph'=(\vertices,\arcs')$ constructed from the input graph $\graph$ by adding, for every~$\vertex\in\vertices$, an arc $(\vertex,\target)$ with driving time $\drivingtimefunction(\vertex,\target):=\atime^*$ and consumption~$\consumptionfunction(\vertex,\target):=0$. It is easy to verify that the potential function $\convexpotential^*$ on $\graph$ is equivalent to the potential function $\convexpotential$ on~$\graph'$.
Hence, $\convexpotential^*$ is a consistent potential function for~$\graph'$. Observe that this immediately implies that $\convexpotential^*$ is also consistent on~$\graph$, since reduced arc costs must be nonnegative for the subset $\arcs\subseteq\arcs'$.
Lemma~\ref{lem:potentialsondemand} follows immediately from this observation and Lemma~\ref{lem:convex-feasible-potential}.

\begin{lemma}
 \label{lem:potentialsondemand}
 The potential function $\convexpotential^*$ is consistent.
\end{lemma}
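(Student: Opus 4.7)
My plan is to carry out the graph-reduction sketched in the paragraph immediately preceding the lemma. I will introduce the multigraph $\graph' = (\vertices, \arcs')$ with $\arcs' := \arcs \cup \{(\vertex, \target) : \vertex \in \vertices\}$, giving each added arc driving time $\atime^*$ and consumption $0$. Let $\convexlowerboundfunction_\vertex^{\graph'}$ denote the bound function obtained at $\vertex$ by running Algorithm~\ref{alg:cfp-potential-backward-search} to termination from $\target$ on $\graph'$, and set $\convexpotential^{\graph'}(\vertex, \soc) := \convexlowerboundfunction_\vertex^{\graph'}(\soc)$. Lemma~\ref{lem:convex-feasible-potential} then guarantees that $\convexpotential^{\graph'}$ is consistent on $\graph'$.

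The heart of the argument is the pointwise identity $\convexpotential^*(\vertex, \soc) = \convexpotential^{\graph'}(\vertex, \soc)$ for every $\vertex \in \vertices$ and every $\soc \in [0, \maxbattery]$. I plan to establish this in two complementary steps, both anchored on $\atime^*$. In the first step, I will show that $\convexlowerboundfunction_\vertex^{\graph'}$ equals the convex lower hull of the pointwise minimum of $\atime^*$ and the function that the same algorithm would compute on $\graph$ alone if allowed to terminate: the extra arc $(\vertex, \target)$ directly contributes the breakpoint $[(0, \atime^*)]$, and any backward path from $\vertex$ that later uses a different extra arc $(\vertexc, \target)$ instead has driving time at least $\atime^*$ because all arcs in $\arcs$ have nonnegative driving time. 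In the second step, I will use the monotonicity observation from the paragraph preceding the lemma: since the modified key update makes the priority-queue minimum of the suspended search nondecreasing, every breakpoint that could still be added to some $\convexlowerboundfunction_\vertex$ has trip-time coordinate at least $\atime^*$. Hence the suspended $\convexlowerboundfunction_\vertex$ already agrees with its hypothetical terminated version on every $\soc$ whose function value is strictly below $\atime^*$, and capping both at $\atime^*$ produces the same function. By the very definition of $\convexlowerboundfunction_\vertex^*$, this chain yields $\convexlowerboundfunction_\vertex^* = \convexlowerboundfunction_\vertex^{\graph'}$.

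Consistency of $\convexpotential^*$ on $\graph$ then follows immediately: for every arc $(\vertexa, \vertexb) \in \arcs \subseteq \arcs'$ and every $\soc \in [0, \maxbattery]$, the reduced cost $\reduceddrivingtimefunction((\vertexa, \vertexb), \soc)$ computed from $\convexpotential^*$ equals the one computed from $\convexpotential^{\graph'}$, which is nonnegative by Lemma~\ref{lem:convex-feasible-potential}. The main obstacle I anticipate lies in the first step, namely pinning down the interaction between Graham's convex-hull operation inside Algorithm~\ref{alg:cfp-potential-backward-search} and the pointwise cap at $\atime^*$: I must verify that breakpoints discarded by the convex hull on $\graph'$ above the cap are not needed to shape the function correctly below the cap, and, conversely, that charging-station extensions applied during the $\graph'$-run at vertices whose labels happen to already lie at the cap do not spuriously lower the bound. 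Both issues should dissolve once one uses convexity of the involved functions together with nonnegativity of driving times on $\arcs$, but a careful case analysis is needed at the seam where the terminated bound on $\graph$ crosses the threshold~$\atime^*$.
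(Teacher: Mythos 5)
Your overall route is the paper's: the same auxiliary multigraph $\graph'$ with arcs $(\vertex,\target)$ of driving time $\atime^*$ and consumption $0$, the same reduction to Lemma~\ref{lem:convex-feasible-potential} applied on $\graph'$, and the same observation that nonnegativity of reduced costs on $\arcs'\supseteq\arcs$ implies it on $\arcs$. The paper states the central identity (``$\convexpotential^*$ on $\graph$ is equivalent to $\convexpotential$ on $\graph'$'') without proof, and it is exactly in your attempt to supply that proof that a gap opens up, in your step~2.

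From ``every breakpoint that can still be added has trip time at least $\atime^*$'' you infer that the suspended $\convexlowerboundfunction_\vertex$ already agrees with its terminated version wherever its value is below $\atime^*$, so that capping both at $\atime^*$ yields the same function. This inference fails because labels are stored as \emph{convex lower hulls} over a domain that extends to negative \gls*{soc}: a late breakpoint $(\soc_0,\atime_0)$ with $\atime_0\ge\atime^*$ but $\soc_0<0$ enlarges the domain to the left, and Graham's scan then replaces part of the old function by a chord from $(\soc_0,\atime_0)$ to a breakpoint far to the right; that chord can lie well below $\atime^*$ at \gls*{soc} values inside $[0,\maxbattery]$. Concretely, suppose $\vertexb$ is last extracted with label $[(10,10)]$, so that over an arc $(\vertexa,\vertexb)$ with $\drivingtimefunction=1$ and $\consumptionfunction=0$ the vertex $\vertexa$ only ever receives $[(10,11)]$; afterwards the contributions $[(0,50)]$ and $[(-10,50)]$ reach $\vertexb$, the search is suspended with $\atime^*=50$ before $\vertexb$ is re-extracted, and $\convexlowerboundfunction_\vertexb$ becomes the chord $[(-10,50),(10,10)]$ with value $30$ at $\soc=0$. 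Then $\convexlowerboundfunction^*_\vertexa(0)=50$ while $\convexlowerboundfunction^*_\vertexb(0)=30$, and the reduced cost of $(\vertexa,\vertexb)$ at $\soc=0$ is $1-50+30<0$. The cap at $\atime^*$ absorbs pending improvements whose values stay at or above $\atime^*$ (these are dominated by the breakpoint $(0,\atime^*)$), but not improvements that the hull transports below $\atime^*$; hence your claimed identity $\convexpotential^*=\convexpotential^{\graph'}$ need not hold pointwise on $[0,\maxbattery]$, and since $\convexpotential^*$ can only be \emph{larger} than $\convexpotential^{\graph'}$ at the tail of an arc, consistency does not follow from consistency of $\convexpotential^{\graph'}$. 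The seam you flagged in step~1 is comparatively harmless; the real work is to rule out the above effect, e.g.\ by keying vertices on the minimum trip time at which the \emph{hulled} function changes rather than on the trip times of newly added breakpoints, or by restricting the merged breakpoints to nonnegative \gls*{soc} before Graham's scan.
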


Note that we have to keep the original function $\convexlowerboundfunction_\vertex$ after suspending the backward search, in case it is resumed later.
Hence, we do not store $\convexpotential^*$ explicitly, but perform the necessary merge operation and Graham's scan on demand when the potential is requested.
Given that keys of labels are consistent and by Lemma~\ref{lem:omega-feasible-potential}, Lemma~\ref{lem:convex-feasible-potential}, and Lemma~\ref{lem:potentialsondemand}, we obtain Theorem~\ref{thm:cfp-feasible-potentials}, which summarizes our findings on A*~search.

\begin{theorem}
 \label{thm:cfp-feasible-potentials}
 The \gls*{cfp} algorithm computes the correct output when using either of the potential functions~$\omegapotential$,~$\convexpotential$, or~$\convexpotential^*$.
\end{theorem}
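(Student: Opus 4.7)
The plan is to observe that the theorem follows by checking, for each of the three potential functions, the three requirements spelled out earlier in the section: (i) consistency, (ii) overestimation of charging speed at every charging station, and (iii) $\potential(\target,\soc)=0$ for finite $\soc$. Once those three conditions hold, the framework discussion preceding the theorem already shows that consistent keys are nondecreasing both along arc scans (by (i)) and across labels spawned at charging stations (by (ii)), and that at $\target$ the consistent key equals trip time (by (iii)). The label-setting argument of \gls*{cfp} then carries over, so that the first label extracted at $\target$ is optimal, yielding correctness via Theorem~\ref{thm:cfp-correctness}.

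Consistency I would dispatch immediately by citing Lemmas~\ref{lem:omega-feasible-potential}, \ref{lem:convex-feasible-potential}, and~\ref{lem:potentialsondemand} for $\omegapotential$, $\convexpotential$, and $\convexpotential^*$, respectively. The target-zero requirement is essentially bookkeeping: for $\omegapotential$, both branches of Equation~\ref{eq:cfp:omegapotential} collapse to $0$ at $\target$ since $\distance_{\drivingtimefunction}(\target,\target)=\distance_{\consumptionfunction}(\target,\target)=\distance_{\omegaweightfunction}(\target,\target)=0$; for $\convexpotential$ and $\convexpotential^*$, the label at $\target$ is initialized with breakpoint sequence $[(0,0)]$ (so $\convexlowerboundfunction_\target(\soc)=0$ for $\soc\in[0,\maxbattery]$), and since driving times and charging times are nonnegative, no subsequent merge or extend (nor the extra merge with $[(0,\atime^*)]$ and Graham scan defining $\convexpotential^*$) can push this below $0$.

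What is left is charging-speed overestimation, which is the place I expect to spend the most care. For $\omegapotential$ it reduces to a slope calculation on Equation~\ref{eq:cfp:omegapotential}: the derivative of $\omegapotential(\vertex,\soc)$ in $\soc$ is either $0$ (first branch) or $-1/\maxchargingspeed$ (second branch), and by definition $\maxchargingspeed=\max_{\vertex\in\chargingstations}\maxchargingspeed(\vertex)\ge\maxchargingspeed(\vertex)$, so $-1/\maxchargingspeed\ge-1/\maxchargingspeed(\vertex)$. For $\convexpotential$, I would exploit that each time a charging station $\vertex$ is scanned in Algorithm~\ref{alg:cfp-potential-backward-search} its label is replaced by $\extendop(\convexlowerboundfunction_\vertex,\simplechargingfunction_\vertex)$, whose leftmost segment has slope exactly $-1/\maxchargingspeed(\vertex)$ and whose remaining segments are shallower by convexity; since the backward search runs until the priority queue is empty, each station is ultimately scanned after all of its label-updating merges, so the bound survives to termination.

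The genuinely delicate step is $\convexpotential^*$, where the backward search may be suspended with pending updates. Here I would not argue directly on $\graph$, but instead invoke the graph-augmentation trick already used for Lemma~\ref{lem:potentialsondemand}: $\convexpotential^*$ on $\graph$ coincides with $\convexpotential$ on the augmented graph $\graph'$ obtained by adding, for every $\vertex\in\vertices$, a direct arc $(\vertex,\target)$ with driving time $\atime^*$ and consumption $0$. On $\graph'$, the very argument above (that every charging station is re-scanned after all merges) applies at termination of a fully run backward search, and the charging-speed requirement at every $\vertex\in\chargingstations$ therefore transfers to $\convexpotential^*$ on $\graph$. Combining all three checks with the framework observation completes the proof.
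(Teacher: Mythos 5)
Your proposal is correct and follows essentially the same route as the paper, which proves the theorem by combining the three stated requirements with Lemmas~\ref{lem:omega-feasible-potential}, \ref{lem:convex-feasible-potential}, and~\ref{lem:potentialsondemand}; in fact you spell out the target-zero and charging-speed-overestimation checks in more detail than the paper does (the paper asserts them ``by construction''). One cosmetic slip: at $\target$ only the first branch of Equation~\ref{eq:cfp:omegapotential} is ever taken (since $\soc\ge 0=\distance_{\consumptionfunction}(\target,\target)$ for finite $\soc$), so you need not claim the second branch also evaluates to~$0$ there.
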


\paragraph{Implementation Details.}
When using potentials on demand, we can suspend the backward search at any time and derive lower bounds for \gls*{cfp}. However, bound quality of $\convexpotential^*$ may deteriorate if the search is suspended too early, as it depends on the current minimum key in the priority queue of the backward search.
Therefore, we do not abort the search immediately when a vertex $\vertex\in\vertices$ in question is scanned for the first time, but continue until the minimum key $\atime^*$ in the priority queue is significantly greater than the key induced by the first breakpoint $(\soc_1,\atime_1)$ of $\convexlowerboundfunction_\vertex$ (in our experiments, we suspend the search if $\atime^*>\min\{2\atime_1,\atime_1+3\,600\}$, where time is measured in seconds).

\section{Contraction Hierarchies}\label{sec:ch}

During an offline step, \gls*{ch} preprocessing~\cite{Gei12b} iteratively contracts the vertices of the input graph and adds \emph{shortcuts} in the remaining graph to retain correct distances. These shortcuts then help reducing the search space in online queries.
When adapting \gls*{ch} to our scenario, vertex contraction becomes more expensive, as each shortcut represents a pair consisting of driving time and an \gls*{soc} profile. Moreover, we need a shortcut for every nondominated path. Hence, the resulting search graph may contain multi-arcs.

We compute a \emph{partial}~\gls*{ch}, \ie, we contract only some vertices~(the \emph{component}), leaving an uncontracted \emph{core} graph---a common approach in complex scenarios~\cite{Dib15,Fun13,Sto12c}.
As in Storandt~\cite{Sto12c}, we keep all charging stations in the graph uncontracted.
Thus, complexity induced by charging stations only is contained within the core (simplifying the search in the component).
Shortcuts store the driving time and the \gls*{soc} profile (represented by three values as described in Section~\ref{sec:problem}) of the path they represent.

\paragraph{Witness Search.}
During preprocessing, we perform \emph{witness searches} when contracting a vertex, to test whether all shortcut candidates are necessary to maintain distances in the current overlay. Given a shortcut candidate $(\vertexa,\vertexb)$ with $\vertexa\in\vertices$ and~$\vertexb\in\vertices$, we run a variant of the \gls*{bsp} algorithm that propagates labels consisting of the driving time and the \gls*{soc} profile of a path (represented by three values), starting from~$\vertexa$.
A label \emph{dominates} another label in this search if its driving time is smaller or equal to the driving time of the other label and its \gls*{soc} profile dominates that of the other label.
Keys in the priority queue follow a lexicographic order of the labels. The witness search stops if either the shortcut candidate is dominated by a label at $\vertexb$ or the minimum key in the priority queue exceeds the key induced by the shortcut candidate.

In order to reduce preprocessing time, we simplify the witness search as follows.
First, we only search for single witnesses that dominate a shortcut candidate. In other words, we only perform pairwise comparisons between labels at the head of a shortcut candidate and the candidate itself. Thereby, we might insert an unnecessary shortcut whose \gls*{soc} profile is dominated by the upper envelope of multiple \gls*{soc} profiles corresponding to labels with lower driving time.
Second, during the witness search, we limit the number of labels per vertex to a small constant (10 in our experiments).
Whenever this size is exceeded, we identify (in a linear scan over the sorted labels) the pair of labels that has the minimum difference in terms of driving time. Of these two labels, we remove the one with smaller difference to its next closest label (in order to keep the gap between the remaining labels small).
Finally, following Geisberger~et~al.~\cite{Gei12b}, we prune the search after a fixed hop limit (20 in our experiments).
Taking these measures, we may possibly insert unnecessary shortcuts. Thus, queries may slow down slightly, but correctness is not affected.

\paragraph{Queries.}
Since we compute a partial~\gls*{ch}, the query algorithm consists of two phases. Given a source~$\source\in\vertices$, a target~$\target\in\vertices$, and the initial \gls*{soc}~$\soc\in[0,\maxbattery]$, the first phase runs a backward \gls*{ch} search from~$\target$, scanning only upward arcs \wrt the vertex order. This search operates on the component, so it is pruned at core vertices (\ie, core vertices are not inserted into the priority queue). As the component contains no charging stations, a basic variant of the \gls*{bsp} algorithm suffices.
Note, however, that the \gls*{soc} at~$\target$ is yet unknown and therefore, the search algorithm computes \gls*{soc} profiles instead of \gls*{soc} values (as in the witness search).
For every nondominated label at any vertex visited by the search, we add a (temporary) shortcut from this vertex to the target.
The second phase runs \gls*{cfp} from~$\source$ and is restricted to upward arcs, core arcs, and the temporary arcs added by the backward search.

\paragraph{Implementation Details.}
During preprocessing of~\gls*{ch}, the next vertex in the contraction order is determined from the measures~\gls*{ed},~\gls*{dn}, and~\gls*{cq} according to Geisberger~et~al.~\cite{Gei12b}.
The priority of a vertex (higher priority means higher importance) is then set to $64 \operatorname{\gls*{ed}}+\operatorname{\gls*{dn}}+\operatorname{\gls*{cq}}$.
To reduce the number of necessary witness searches, we cache shortcuts computed during the computation of~\gls*{ed}. This requires a simulated contraction; see Geisberger~et~al.~\cite{Gei12b}.
Whenever witness searches for multiple shortcuts with the same source are required during contraction of some vertex, we run a single multi-target search instead.
Further, to improve query times, we reorder vertices after preprocessing, such that core vertices are in consecutive memory.

\section{CHArge}\label{sec:charge}

Combining \gls*{ch} and A*~search (restricting A*~search to the core), we obtain our fastest exact algorithm, \emph{\acrshort*{charge}\glsunset{charge} (\acrlong*{charge})}.
The query algorithm consists of three phases, namely, a unidirectional (backward) phase from $\target$ in the component to add temporary shortcuts, a backward search in the (much smaller) core enriched with temporary shortcuts to compute a potential function (either $\omegapotential$ or~$\convexpotential^*$), and a forward phase running \gls*{cfp} (augmented with A*~search) from~$\source$, which is restricted to upward arcs, core arcs, and temporary arcs.
Potentials of component vertices are set to 0 for this search. Observe that consistency of the resulting potential function is not violated, since there are no arcs pointing from the core into the component.
As described in Section~\ref{sec:astar}, potentials in the core can be computed on demand, in which case the second and third phase are interweaved and their searches are executed alternately.

\paragraph{Computing Potentials in the Core.}
To decrease running time of the second phase, we precompute lower bounds of core shortcuts for the potential function~$\convexpotential^*$, \ie, for each (ordered) pair $\vertexa\in\vertices$ and $\vertexb\in\vertices$ of vertices connected by at least one shortcut, we compute a decreasing and convex piecewise linear function that yields a lower bound on driving time from $\vertexa$ to $\vertexb$ for a given \gls*{soc}.
To this end, we perform Graham's scan~\cite{Gra72} on all pairs $[(\soc,\atime)]$ of minimum required \gls*{soc} $\soc\in[0,\maxbattery]$ and driving time $\atime\in\posreals$ corresponding to some shortcut arc between $\vertexa$ and $\vertexb$ (recall that the component may contain multi-arcs).
However, this also requires us to adapt the function-propagating search in the second phase, since scanning an arc no longer consists of simply shifting a function by two constant values (\cf Line~\ref{line:cfp-potential-search-shift} of Algorithm~\ref{alg:cfp-potential-backward-search}).
Instead, piecewise linear functions of labels have to be \emph{linked} with shortcut arcs, which are represented by piecewise linear functions as well.
In what follows, we describe how this is done in time linear in the number of breakpoints of both functions.

\begin{figure}[t]
 \centering%
 \begin{subfigure}[b]{.28\textwidth}%
 \centering%
 \begin{tikzpicture}[figure,scale=\ChargingFunctionBoundLinkScale]
  \ExampleDrawCoordinateSystemWithTicks{1}{4}{0}{5}{0}{0}
 
  \begin{scope}
   \begin{pgfinterruptboundingbox}
    \clip (0,0) rectangle (5,6+\ExampleInftyOffset);
    \draw [function] (1,3.5) -- (1.5,1.5) -- (3.5,1) -- (4,1);
   \end{pgfinterruptboundingbox}
  \end{scope}

  \begin{pgfonlayer}{foreground}
   \node[breakpoint] (d2) at (1,3.5) {};
   \node[breakpoint] (b1) at (1.5,1.5) {};
   \node[breakpoint] (b2) at (3.5,1) {};
  \end{pgfonlayer}
 
  \PlotXAxisName{4}{0}{$\soc$}
  \PlotYAxisName{0}{5}{$\convexfunctiona(\soc)$}
 \end{tikzpicture}
 \caption{}%
 \label{fig:charge-potential-bound-linking:first}%
 \end{subfigure}%
 \begin{subfigure}[b]{.28\textwidth}%
 \centering%
 \begin{tikzpicture}[figure,scale=\ChargingFunctionBoundLinkScale]
  \ExampleDrawCoordinateSystemWithTicks{1}{4}{0}{5}{0}{0}
 
  \begin{scope}
   \begin{pgfinterruptboundingbox}
    \clip (0,0) rectangle (5,6+\ExampleInftyOffset);
    \draw [function] (1,4) -- (3,2) -- (4,2);
   \end{pgfinterruptboundingbox}
  \end{scope}

  \begin{pgfonlayer}{foreground}
   \node[breakpoint] (d2) at (1,4) {};
   \node[breakpoint] (b1) at (3,2) {};
  \end{pgfonlayer}
 
  \PlotXAxisName{4}{0}{$\soc$}
  \PlotYAxisName{0}{5}{$\convexfunctionb(\soc)$}
 \end{tikzpicture}
 \caption{}%
 \label{fig:charge-potential-bound-linking:second}%
 \end{subfigure}%
 \begin{subfigure}[b]{.42\textwidth}%
 \centering%
 \begin{tikzpicture}[figure,scale=\ChargingFunctionBoundLinkScale]
  \node (p) at (4.5,3.5) {};
  \node (q) at (4,5.5) {};
  \node (r) at (4.5,5) {};
  \node (s) at (2.5,5.5) {};
  \coordinate (x) at (intersection of p--q and r--s);
  \fill[draw=black30,fill=black7] (2,7.5) -- (2.5,5.5) -- (4.5,3.5) -- (6.5,3) -- (4.5,5) -- (x) -- (4,5.5) -- cycle;

  \ExampleDrawCoordinateSystemWithTicks{2}{7}{2}{8}{1}{2}
 
  \begin{scope}
   \begin{pgfinterruptboundingbox}
    \clip (1,2) rectangle (7,9+\ExampleInftyOffset);
    \draw [function] (2,7.5) -- (2.5,5.5) -- (4.5,3.5) -- (6.5,3) -- (7,3);
   \end{pgfinterruptboundingbox}
  \end{scope}
 
  \begin{pgfonlayer}{foreground}
   \node[breakpoint] (d2) at (2,7.5) {};
   \node[breakpoint] (b1) at (2.5,5.5) {};
   \node[breakpoint] (b2) at (4.5,3.5) {};
   \node[breakpoint] (b3) at (6.5,3) {};
  \end{pgfonlayer}
 
  \PlotXAxisName{7}{2}{$\soc$}
  \PlotYAxisName{1}{8}{$\convexfunctionc(\soc)$}
 \end{tikzpicture}
 \caption{}%
 \label{fig:charge-potential-bound-linking:result}%
 \end{subfigure}%
 \caption{Linking piecewise linear lower bound functions. Linear segments between indicated breakpoints show (finite) values of two functions and the result of linking them. (a)~The function $\convexfunctiona$ is defined by three breakpoints. (b)~The function $\convexfunctionb$ is defined by two breakpoints. (c)~Linking $\convexfunctiona$ and $\convexfunctionb$ yields the function~$\convexfunctionc$. It is the lower envelope of the shaded area, which corresponds to (finite) values of $\convexfunctiona(\soc^*)+\convexfunctionb(\soc-\soc^*)$ for different choices of~$\soc^*\in\reals$.}%
\label{fig:charge-potential-bound-linking}%
\end{figure}

Consider two piecewise linear functions $\convexfunctiona\colon\reals\to\posreals\cup\{\infty\}$ and $\convexfunctionb\colon\reals\to\posreals\cup\{\infty\}$ mapping \gls*{soc} to lower bounds on the trip time along two paths in the graph, such that both are decreasing and convex on their subdomains with finite image.
Let the functions be given by their respective sequences $\convexfunctionbreakpointsa=[(\soc_\functionasubscript^1,\atime_\functionasubscript^1),\dots,(\soc_\functionasubscript^k,\atime_\functionasubscript^k)]$ and $\convexfunctionbreakpointsb=[(\soc_\functionbsubscript^1,\atime_\functionbsubscript^1),\dots,(\soc_\functionbsubscript^\ell,\atime_\functionbsubscript^\ell)]$ of breakpoints.
The \emph{link operation} takes the functions $\convexfunctiona$ and $\convexfunctionb$ as input and computes a function $\convexfunctionc$ that reflects the concatenation of both paths.
Hence, given an arbitrary \gls*{soc}~$\soc\in[0,\maxbattery]$, the value $\convexfunctionc(\soc)$ is a lower bound on the trip time when traversing the paths represented by $\convexfunctiona$ and~$\convexfunctionb$, such that overall consumption does not exceed the \gls*{soc}~$\soc$.
To this end, the link operation identifies values $\soc_1\in\reals$ and $\soc_2\in\reals$, such that $\soc_1+\soc_2=\soc$ and $\convexfunctiona(\soc_1)+\convexfunctionb(\soc_2)$ is minimized.
Hence, we seek to compute the function $\convexfunctionc\colon\reals\to\posreals\cup\{\infty\}$ with
\begin{align}
 \convexfunctionc(\soc):=\min_{\soc^*\in\reals}\convexfunctiona(\soc^*)+\convexfunctionb(\soc-\soc^*)\text{,}\label{eq:charge-linked-function}
\end{align}
which yields the desired lower bound on the trip time for traversing $\convexfunctiona$ and~$\convexfunctionb$.
Figure~\ref{fig:charge-potential-bound-linking} shows an example.
Below, we describe an algorithm that computes a sequence $\convexfunctionbreakpointsc$ of breakpoints that represents this function~$\convexfunctionc$. Afterwards, we prove its correctness.

Starting with an empty sequence $\convexfunctionbreakpointsc=\emptyset$, the link operation iteratively appends breakpoints to~$\convexfunctionbreakpointsc$, each of which is the sum of two breakpoints from $\convexfunctionbreakpointsa$ and~$\convexfunctionbreakpointsb$.
For~$\soc=\soc^1_\functionasubscript+\soc^1_\functionbsubscript$, there exists exactly one value $\soc^*=\soc^1_\functionasubscript$ in Equation~\ref{eq:charge-linked-function} that yields a finite trip time.
We obtain $\convexfunctionc(\soc^1_\functionasubscript+\soc^1_\functionbsubscript)=\atime_\functionasubscript^1+\atime^1_\functionbsubscript$ and the first breakpoint of $\convexfunctionbreakpointsc$ is $(\soc_\functionasubscript^1+\soc_\functionbsubscript^1,\atime_\functionasubscript^1+\atime_\functionbsubscript^1)$.
For subsequent breakpoints, the basic idea is to follow the function that offers the better (\ie, lower) slope.
Assume that the previous breakpoint added to $\convexfunctionbreakpointsc$ is $(\soc_\functionasubscript^i + \soc_\functionbsubscript^{\smash{j}},\atime_\functionasubscript^i + \atime_\functionbsubscript^{\smash{j}})$ for some $i\in\{1,\dots,k\}$ and~$j\in\{1,\dots,\ell\}$. Consider the slope
\begin{align*}
 \convexfunctionslope^i_1 := \begin{cases} \frac{\soc^{i+1}_1 - \soc^i_1}{\atime^{i+1}_1 - \atime^i_1 } &\mbox{if } i < k,\\ 0 &\mbox{otherwise,}  \end{cases}
\end{align*}
of the next segment of the function $\convexfunctiona$. Let the slope~$\convexfunctionslope_\functionbsubscript^{\smash{j}}$ be defined symmetrically. Then the next breakpoint is $(\soc^{i+1}_\functionasubscript + \soc_\functionbsubscript^{\smash{j}},\atime^{i+1}_\functionasubscript + \atime_\functionbsubscript^{\smash{j}})$ if $\convexfunctionslope_\functionasubscript^i< \convexfunctionslope_\functionbsubscript^{\smash{j}}$ and $(\soc_\functionasubscript^i + \soc^{\smash{j+1}}_\functionbsubscript,\atime_\functionasubscript^i + \atime^{\smash{j+1}}_\functionbsubscript)$ if $\convexfunctionslope_\functionasubscript^i> \convexfunctionslope_\functionbsubscript^{\smash{j}}$.
In other words, we pick the next breakpoint of the currently steeper function (keeping the same point as before for the other function).
In the special case~$\convexfunctionslope_\functionasubscript^i = \convexfunctionslope_\functionbsubscript^{\smash{j}}$ we obtain three collinear points, so the next breakpoint is~$(\soc^{i+1}_\functionasubscript + \soc_\functionbsubscript^{\smash{j+1}},\atime^{i+1}_\functionasubscript + \atime_\functionbsubscript^{\smash{j+1}})$.
The scan is stopped as soon as the last point $(\soc_\functionasubscript^k + \soc_\functionbsubscript^\ell,\atime_\functionasubscript^k + \atime_\functionbsubscript^\ell)$ is reached and added to~$\convexfunctionbreakpointsc$.

Clearly, the scan described above runs in linear time in the number $k+\ell$ of breakpoints of $\convexfunctiona$ and~$\convexfunctionb$. Moreover, as segments are appended in increasing order of original slope, the resulting function $\convexfunctionc$ is also decreasing and convex. Lemma~\ref{lem:charge-linking-convex-lower-bounds} formally proves that the link operation in fact computes the correct result.

\begin{lemma}
\label{lem:charge-linking-convex-lower-bounds}
 Let $\convexfunctiona\colon\reals\to\posreals\cup\{\infty\}$ and $\convexfunctionb\colon\reals\to\posreals\cup\{\infty\}$ be piecewise linear functions defined by sequences $\convexfunctionbreakpointsa=[(\soc_\functionasubscript^1,\atime_\functionasubscript^1),\dots,(\soc_\functionasubscript^k,\atime_\functionasubscript^k)]$ and $\convexfunctionbreakpointsb=[(\soc_\functionbsubscript^1,\atime_\functionbsubscript^1),\dots,(\soc_\functionbsubscript^\ell,\atime_\functionbsubscript^\ell)]$ of breakpoints, respectively, such that both functions are decreasing and convex on their subdomains $[\soc_\functionasubscript^1,\infty)$ and $[\soc_\functionbsubscript^1,\infty)$ with finite image.
 The link operation described above computes a sequence $\convexfunctionbreakpointsc$ of breakpoints that corresponds to a function $\convexfunctionc\colon\reals\to\posreals\cup\{\infty\}$ with~$\convexfunctionc(\soc)=\min_{\soc^*\in\reals}\convexfunctiona(\soc^*)+\convexfunctionb(\soc-\soc^*)$.
\end{lemma}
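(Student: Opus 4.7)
The plan is to prove the lemma by induction on the breakpoints that the algorithm appends to $\convexfunctionbreakpointsc$, invoking the classical fact that the infimal convolution of two convex piecewise linear functions is itself convex piecewise linear, with a slope sequence equal to the nondecreasing merge of the input slope sequences. Since $\convexfunctiona$ and $\convexfunctionb$ are convex, decreasing on their finite-image domains, and identically $\infty$ below $\soc_\functionasubscript^1$ and $\soc_\functionbsubscript^1$ respectively, the epigraph of the function $\convexfunctionc$ defined by Equation~\ref{eq:charge-linked-function} equals the Minkowski sum of the epigraphs of $\convexfunctiona$ and $\convexfunctionb$; this immediately yields that $\convexfunctionc$ is convex and decreasing on its finite-image domain $[\soc_\functionasubscript^1 + \soc_\functionbsubscript^1, \infty)$, so to identify $\convexfunctionc$ it suffices to determine its breakpoints.

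For the base case I handle the first emitted breakpoint $(\soc_\functionasubscript^1 + \soc_\functionbsubscript^1, \atime_\functionasubscript^1 + \atime_\functionbsubscript^1)$: for any $\soc < \soc_\functionasubscript^1 + \soc_\functionbsubscript^1$ and any $\soc^* \in \reals$, either $\soc^* < \soc_\functionasubscript^1$ (so $\convexfunctiona(\soc^*) = \infty$) or $\soc - \soc^* < \soc_\functionbsubscript^1$ (so $\convexfunctionb(\soc - \soc^*) = \infty$), forcing $\convexfunctionc(\soc) = \infty$, while at $\soc = \soc_\functionasubscript^1 + \soc_\functionbsubscript^1$ the only feasible split is $\soc^* = \soc_\functionasubscript^1$, which yields exactly $\atime_\functionasubscript^1 + \atime_\functionbsubscript^1$. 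For the inductive step, I assume the algorithm has emitted a prefix of breakpoints that agrees with $\convexfunctionc$ up to some $(\soc_\functionasubscript^i + \soc_\functionbsubscript^{\smash{j}}, \atime_\functionasubscript^i + \atime_\functionbsubscript^{\smash{j}})$ and show the next emitted breakpoint extends the agreement. The slope-merge property says that the slope of $\convexfunctionc$ immediately to the right of this breakpoint equals the smallest (most negative) among the two next-available input slopes, namely the ordinary slopes $1/\convexfunctionslope_\functionasubscript^i$ and $1/\convexfunctionslope_\functionbsubscript^{\smash{j}}$; the algorithm's case analysis on $\convexfunctionslope_\functionasubscript^i$ versus $\convexfunctionslope_\functionbsubscript^{\smash{j}}$ selects this segment (with the reciprocal relationship handled in the inequality direction), and the coordinates of the new breakpoint are correct because pinning the unadvanced index at its current breakpoint and consuming the advanced segment corresponds to a feasible split $\soc^*$ that attains the infimum in Equation~\ref{eq:charge-linked-function} throughout the emitted segment.

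The main obstacle will be making the slope-merge characterization rigorous and linking it cleanly to the algorithm's branching criterion. Concretely, one has to verify that on each affine piece of $\convexfunctionc$ between consecutive emitted breakpoints, the minimizer $\soc^*$ in Equation~\ref{eq:charge-linked-function} varies linearly along the advanced input function's segment while the other input's contribution stays pinned at its current breakpoint; this is where the convexity of $\convexfunctiona$ and $\convexfunctionb$ does the real work, ruling out interior optima that would break the ``follow-the-smaller-slope-first'' pattern. Once this is established, the collinear tie $\convexfunctionslope_\functionasubscript^i = \convexfunctionslope_\functionbsubscript^{\smash{j}}$ (simultaneous advancement, corresponding to freely redistributing $\soc^*$ between two segments of equal marginal benefit) and the sentinel-slope conventions that drive termination at $(\soc_\functionasubscript^k + \soc_\functionbsubscript^\ell, \atime_\functionasubscript^k + \atime_\functionbsubscript^\ell)$ with $\convexfunctionc$ constant beyond it are routine bookkeeping.
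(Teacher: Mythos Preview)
Your approach is essentially the paper's: induction on emitted breakpoints, the same base case, and an inductive step driven by convexity. The paper is just more self-contained—instead of invoking the slope-merge characterization of infimal convolution as a black box, it writes out the supporting-line inequalities $\convexfunctiona(\soc)\ge\atime_\functionasubscript^i+\convexfunctionslope_\functionasubscript^i(\soc-\soc_\functionasubscript^i)$ and $\convexfunctionb(\soc)\ge\atime_\functionbsubscript^{j}+\convexfunctionslope_\functionbsubscript^{j}(\soc-\soc_\functionbsubscript^{j})$, sums them, and uses $\convexfunctionslope_\functionasubscript^i\le\convexfunctionslope_\functionbsubscript^{j}$ to conclude that the algorithm's linear segment lower-bounds $\convexfunctiona(\soc^*)+\convexfunctionb(\soc-\soc^*)$ for every split $\soc^*$.

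One caution: in the proof the symbol $\convexfunctionslope_\functionasubscript^i$ is used as the ordinary slope $(\atime_\functionasubscript^{i+1}-\atime_\functionasubscript^{i})/(\soc_\functionasubscript^{i+1}-\soc_\functionasubscript^{i})$ of $\convexfunctiona$ (the displayed definition just before the lemma has numerator and denominator transposed), so your remark about ``ordinary slopes $1/\convexfunctionslope$'' and handling a reciprocal in the inequality direction is unnecessary and, taken literally, would reverse the advancement rule. With $\convexfunctionslope$ read as the function's slope, $\convexfunctionslope_\functionasubscript^i<\convexfunctionslope_\functionbsubscript^{j}$ already says $\convexfunctiona$'s next segment is the steeper one, which is exactly what slope-merge requires.
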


\begin{proof}
 For~$\soc<\soc^1_\functionasubscript+\soc^1_\functionbsubscript$, there exists no value $\soc^*\in\reals$ such that $\convexfunctiona(\soc^*)$ and $\convexfunctionb(\soc-\soc^*)$ are both finite, so the result of linking equals~$\infty$.
 For $\soc=\soc^1_\functionasubscript + \soc^1_\functionbsubscript$ there is exactly one such value $\soc^*\in\reals$ that yields a finite trip time and we obtain $\convexfunctionc(\soc^1_\functionasubscript+\soc^1_\functionbsubscript)=\atime_\functionasubscript^1+\atime^1_\functionbsubscript$, which corresponds to the first breakpoint of~$\convexfunctionbreakpointsc$.

 Let $(\soc_\functionasubscript^i+\soc_\functionbsubscript^{\smash{j}},\atime_\functionasubscript^i+\atime_\functionbsubscript^{\smash{j}})$ with $i\in\{1,\dots,k\}$ and $j\in\{1,\dots,\ell\}$ denote the last point that was added to~$\convexfunctionbreakpointsc$.
 Without loss of generality, assume that $i<k$ and the next segment of the function $\convexfunctiona$ is at least as steep as the next segment of~$\convexfunctionb$, \ie, $\convexfunctionslope_\functionasubscript^i\le\convexfunctionslope_\functionbsubscript^{\smash{j}}$.
 Hence, our algorithm sets~$(\soc_\functionasubscript^{i+1} + \soc_\functionbsubscript^{\smash{j}}, \atime_\functionasubscript^{i+1} + \atime_\functionbsubscript^{\smash{j}})$ as the next breakpoint of~$\convexfunctionbreakpointsc$ (or adds a segment that contains this point in the special case~$\convexfunctionslope_\functionasubscript^i=\convexfunctionslope_\functionbsubscript^{\smash{j}}$).
 Thus, the slope of $\convexfunctionc$ is $\convexfunctionslope_\functionasubscript^i$ for all~$\soc\in[\soc_\functionasubscript^i+\soc_\functionbsubscript^{\smash{j}},\soc_\functionasubscript^{i+1}+\soc_\functionbsubscript^{\smash{j}}]$.
 To prove the claim, we show that
 \begin{align*}
  \convexfunctionc(\soc)=\atime_\functionasubscript^i+\atime_\functionbsubscript^j+\convexfunctionslope_\functionasubscript^i(\soc-\soc_\functionasubscript^i-\soc_\functionbsubscript^j )\le\convexfunctiona(\soc^*)+\convexfunctionb(\soc-\soc^*)
 \end{align*}
 holds for all $\soc\in[\soc_\functionasubscript^i+\soc_\functionbsubscript^{\smash{j}},\soc_\functionasubscript^{i+1}+\soc_\functionbsubscript^{\smash{j}}]$ and $\soc^*\in\reals$.
 Since both $\convexfunctiona$ and $\convexfunctionb$ are convex, we know that $\convexfunctiona(\soc)\ge\atime_\functionasubscript^i+\convexfunctionslope_\functionasubscript^i(\soc-\soc_\functionasubscript^i)$ and $\convexfunctionb(\soc)\ge\atime_\functionbsubscript^{\smash{j}}+\convexfunctionslope^{\smash{j}}_\functionbsubscript(\soc-\soc_\functionbsubscript^{\smash{j}})$ hold for all~$\soc\in\reals$.
 This immediately yields
 \begin{align*}
  \convexfunctiona(\soc^*)+\convexfunctionb(\soc-\soc^*)&\ge\atime_\functionasubscript^i+\convexfunctionslope_\functionasubscript^i(\soc^*-\soc_\functionasubscript^i)+\atime_\functionbsubscript^j+\convexfunctionslope_\functionbsubscript^j(\soc-\soc^*-\soc_\functionbsubscript^j)\\
  &\ge\atime_\functionasubscript^i+\convexfunctionslope_\functionasubscript^i(\soc^*-\soc_\functionasubscript^i)+\atime_\functionbsubscript^j+\convexfunctionslope_\functionasubscript^i(\soc-\soc^*-\soc_\functionbsubscript^j)\\
  &=\atime_\functionasubscript^i+\atime_\functionbsubscript^j+\convexfunctionslope_\functionasubscript^i(\soc-\soc_\functionasubscript^i-\soc_\functionbsubscript^j)\\
  &=\convexfunctionc(\soc)\text{.}
 \end{align*}
 Since $\convexfunctionc(\soc)\ge\min_{\soc^*\in\reals}\convexfunctiona(\soc^*)+\convexfunctionb(\soc-\soc^*)$ must hold for all $\soc\in\reals$ by construction, this completes our proof.
\end{proof}

\section{Heuristic Approaches}\label{sec:heuristics}

With an \NP-hard problem at hand, we propose heuristic approaches based on~\gls*{charge}, which drop optimality to reduce query times.
Their basic idea is as follows. During the third phase of \gls*{charge} (running the \gls*{cfp} algorithm on the core graph), whenever the search scans multiple shortcuts $(\vertexa,\vertexb)$ between two vertices $\vertexa\in\vertices$ and~$\vertexb\in\vertices$, at most one new label is added to~$\labelset_\text{uns}(\vertexb)$.
This saves time for dominance checks and label insertion in the label set~$\labelset_\text{uns}(\vertexb)$.
We use the potential at $\vertexb$ to determine a shortcut that minimizes the key of the new label (\ie, the trip time from the source $\source$ to $\vertexb$ plus a lower bound on the trip time from $\vertexb$ to the target~$\target$), and add only this label to~$\labelset_\text{uns}(\vertexb)$.
Recall that the potential depends on the \gls*{soc} at~$\vertexb$, hence scanning different shortcuts may result in different potentials at~$\vertexb$.

Our first heuristic, denoted~\gls*{charge}-$\algoheuconvex$, uses the potential function~$\convexpotential$ to determine the best shortcut.
Given a label $\alabel$ at some vertex $\vertexa\in\vertices$, scanning an outgoing shortcut $(\vertexa,\vertexb)$ to a vertex $\vertexb\in\vertices$ results in some label $\alabel'$ at $\vertexb$. We compute its key, which minimizes the sum $\atime+\convexpotential(\vertexb,\socfunction{\alabel'}(\atime))$ for arbitrary~$\atime\in\posreals$ (as described in Section~\ref{sec:astar}). However, the label is only added to $\labelset_\text{uns}(\vertexb)$ afterwards if this key is minimal among all labels at $\vertexb$ constructed within the current vertex scan.

The idea of our second heuristic, denoted~\gls*{charge}-$\algoheuomega$, is to use the potential function $\omegapotential$ instead of~$\convexpotential$.
Additionally, when identifying the only label to be inserted into the set $\labelset_\text{uns}(\vertexb)$ of a vertex~$\vertexb\in\vertices$, we ignore battery constraints and presume that we are not close to the target, \ie, that we are in the case $\soc<\distance_{\consumptionfunction}(\vertexb,\target)$ of Equation~\ref{eq:cfp:omegapotential} for arbitrary \gls*{soc}~$\soc\in[0,\maxbattery]$.
Then the best shortcut $(\vertexa,\vertexb)$ does not depend on the \gls*{soc} at~$\vertexb$, but only on the distance~$\distance_\omega(\vertexb,\target)$.
Hence, we can precompute the optimal shortcut for each pair of neighbors $\vertexa\in\vertices$ and~$\vertexb\in\vertices$, namely, the one that minimizes the cost $\omegaweightfunction(\vertexa,\vertexb)$ of the shortcut.
During a query, instead of scanning all shortcuts, we always use the precomputed shortcut for each neighbor $\vertexb$ of~$\vertexa$.

A third, even more aggressive variant, which we denote by~\gls*{charge}-$\algoheuomegaaggressive$, uses the same idea as in \gls*{charge}-$\algoheuomega$ already during vertex contraction for~\gls*{ch}, keeping only the \emph{optimal} shortcut \wrt the cost function~$\omegaweightfunction$ for each pair of vertices. Thus, we no longer allow the creation of multi-arcs during preprocessing.
This significantly reduces the total number of shortcuts in the core graph, allowing the contraction of further vertices.
While the resulting search graph can no longer be used for exact queries, \gls*{charge}-$\algoheuomegaaggressive$ is capable of answering heuristic queries much faster.
The query algorithm of \gls*{charge}-$\algoheuomegaaggressive$ is identical to~\gls*{charge}-$\algoheuomega$, however, solutions may differ as it operates on a sparser graph.

\paragraph{An Optimality Criterion.}
Despite their heuristic nature, it is actually possible to formally grasp under which circumstances the heuristics \gls*{charge}-$\algoheuomega$ and \gls*{charge}-$\algoheuomegaaggressive$ use an optimal shortcut~\cite{Zue15}.
Basically, we know that if charging is inevitable and charging at a rate of $\maxchargingspeed$ is possible when needed, then $\distance_\omega(\cdot,\cdot)$ yields a \emph{tight} bound on the remaining trip time.
The following Lemma~\ref{lem:cfp-opt-shortcut} formalizes this insight.
Recall that when computing shortcut arcs~$(\vertexa,\vertexb)$ for two vertices $\vertexa\in\vertices$ and~$\vertexb\in\vertices$, the only possible charging stations on the underlying $\vertexa$--$\vertexb$~path are $\vertexa$ and~$\vertexb$ (see Section~\ref{sec:ch}).

\begin{lemma}
\label{lem:cfp-opt-shortcut}
 Given two vertices $\vertexa\in\vertices$ and~$\vertexb\in\vertices$, a $\vertexa$--$\vertexb$~path $\apath$ that contains no charging stations (except possibly $\vertexa$ and~$\vertexb$) and minimizes the cost $\omegaweightfunction(\apath)$ \wrt the function $\omegaweightfunction$ among all $\vertexa$--$\vertexb$~paths is a subpath of a fastest feasible $\source$--$\target$~path if the following conditions are met.
 \begin{enumerate}
  \item The fastest feasible $\source$--$\target$~path contains $\vertexa$ and $\vertexb$ in this order, but no charging station on the subpath from $\vertexa$ to $\vertexb$ (except $\vertexa$ and~$\vertexb$).
  \item The \gls*{soc} at $\vertexa$ is not sufficient to reach $\target$ without recharging.
  \item The \gls*{soc} never reaches the capacity $\maxbattery$ (such that battery constraints need to be applied) on the path from $\vertexa$ to the next charging station that is used.
  \item There is a charging station available on the subpath from $\vertexb$ to $\target$ before the battery runs out and the uniform charging speed at this station is~$\chargingfunction_{\max}$.
\end{enumerate}
\end{lemma}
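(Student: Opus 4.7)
The plan is to start from a fastest feasible $\source$--$\target$ path $\mathcal{Q}^*$, which by condition~1 traverses $\vertexa$ and $\vertexb$ in that order via some $\vertexa$--$\vertexb$ subpath $P^*$ containing no charging stations other than possibly its endpoints. Since $P$ minimizes $\omegaweightfunction$ over all $\vertexa$--$\vertexb$ paths, we immediately have $\omegaweightfunction(P)\le\omegaweightfunction(P^*)$. The goal is then to build a second feasible $\source$--$\target$ path $\mathcal{Q}$ from $\mathcal{Q}^*$ that uses $P$ in place of $P^*$ and to show that its trip time is no larger than that of $\mathcal{Q}^*$; this exhibits a fastest feasible $\source$--$\target$ path containing $P$ as a subpath.

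Let $\vertexc$ be the first charging station used by $\mathcal{Q}^*$ strictly after $\vertexa$. By condition~2 such a station must exist, by condition~1 it lies strictly after $\vertexb$, and by condition~4 it charges at the uniform rate $\maxchargingspeed$ and is reached before the battery runs out. I would then define $\mathcal{Q}$ by replacing $P^*$ in $\mathcal{Q}^*$ by $P$ and by changing the charging time at $\vertexc$ by $\Delta:=(\consumptionfunction(P)-\consumptionfunction(P^*))/\maxchargingspeed$ (charging longer if $\Delta>0$, shorter if $\Delta<0$), while leaving all other driving and charging decisions of $\mathcal{Q}^*$ unchanged. Because charging at $\vertexc$ happens at the uniform rate $\maxchargingspeed$, this adjustment shifts the departure \gls*{soc} at $\vertexc$ by exactly $\consumptionfunction(P)-\consumptionfunction(P^*)$, which is precisely what is needed to cancel the differing consumption between $P$ and $P^*$. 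The trip-time change is then immediate:
\[
 \mathrm{trip}(\mathcal{Q}) - \mathrm{trip}(\mathcal{Q}^*) = \bigl(\drivingtimefunction(P) - \drivingtimefunction(P^*)\bigr) + \Delta = \omegaweightfunction(P) - \omegaweightfunction(P^*) \le 0.
\]

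Feasibility of $\mathcal{Q}$ is where conditions~3 and~4 do the heavy lifting. Condition~3 guarantees that along the subpath of $\mathcal{Q}^*$ from $\vertexa$ to $\vertexc$ the \gls*{soc} evolves linearly (it never clips at $\maxbattery$), so pointwise differences in consumption caused by the swap propagate unaltered to the \gls*{soc} at every vertex downstream of the replacement, up to $\vertexc$. I would use this first to argue that $P$ itself is feasible when departing $\vertexa$ with the same \gls*{soc} that $\mathcal{Q}^*$ has there, and second to argue that the arrival \gls*{soc} at $\vertexc$ on $\mathcal{Q}$ remains nonnegative. The uniform charging rate from condition~4 then lets me conclude that after the adjustment the departure \gls*{soc} at $\vertexc$ on $\mathcal{Q}$ coincides with that of $\mathcal{Q}^*$, so the portion of the trip after $\vertexc$ proceeds identically and remains feasible.

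The main obstacle I anticipate is handling the corner cases of the feasibility argument: namely, when $\consumptionfunction(P)-\consumptionfunction(P^*)$ is so negative that a single adjustment at $\vertexc$ would drive the charging time below zero, or so positive that the \gls*{soc} along $P$ itself or at $\vertexc$ on $\mathcal{Q}$ would drop below zero. For the former I would argue that any surplus \gls*{soc} left at $\vertexc$ can be carried downstream without violating battery constraints, again using condition~4 (there is headroom before the battery would run out), so that having extra \gls*{soc} at $\vertexc$ never worsens the trip time compared to the $\omegaweightfunction$-based bookkeeping. For the latter, I would combine the feasibility of $P^*$ from the \gls*{soc} at which $\mathcal{Q}^*$ reaches $\vertexa$ with the $\omegaweightfunction$-optimality of $P$ and the linearity supplied by condition~3 to rule out infeasibility of $P$ via a direct arithmetic check on the \gls*{soc} profile of $P$. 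Every case should ultimately reduce to exploiting the inequality $\omegaweightfunction(P)-\omegaweightfunction(P^*)\le 0$ to certify that the constructed $\mathcal{Q}$ is no slower than $\mathcal{Q}^*$.
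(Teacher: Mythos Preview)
Your proposal is correct and takes essentially the same approach as the paper: both replace the $\vertexa$--$\vertexb$ subpath of an optimal $\source$--$\target$ trip by the $\omegaweightfunction$-minimizer $P$, compensate for the consumption difference at the next charging station (whose rate is $\maxchargingspeed$ by condition~4), and read off that the trip-time change equals $\omegaweightfunction(P)-\omegaweightfunction(P^*)\le 0$, with conditions~2 and~3 ensuring the SoC bookkeeping is exact. The only cosmetic differences are that the paper phrases this as a proof by contradiction and splits into two cases according to the sign of $\drivingtimefunction(P)-\drivingtimefunction(P^*)$, whereas you argue directly and handle both signs uniformly via your~$\Delta$; the corner cases you flag (feasibility of $P$ and nonnegativity of the adjusted charging time) are treated at the same level of informality in the paper.
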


\begin{proof}
For the sake of simplicity, we assume that paths are unique \wrt the weight function~$\omegaweightfunction$.
Assume that all four conditions are met and let~$\apath$ be the $\vertexa$--$\vertexb$~path that minimizes~$\omegaweightfunction(\apath)$. Furthermore, let~$\atime_{\apath}$ be the minimal trip time from~$\source$ to~$\target$ when using~$\apath$ as the subpath from~$\vertexa$ to~$\vertexb$. Assume for contradiction that there exists another path~$\apath'$ from~$\vertexa$ to~$\vertexb$ (without a charging station except~$\vertexa$ and~$\vertexb$) inducing a total trip time from~$\source$ to~$\target$ of~$\atime_{\apath'} < \atime_{\apath}$.

\case{Case 1} We first consider the case of~$\drivingtimefunction(\apath') \le \drivingtimefunction(\apath)$, \ie, the driving time of~$\apath'$ is less than the driving time of~$\apath$.
Since $\apath$ minimizes the cost~$\omegaweightfunction(\apath)$, it follows that~$\consumptionfunction(\apath') > \consumptionfunction(\apath)$ (where $\consumptionfunction(\apath)$ denotes the sum of the consumption values of all arcs in the path~$\apath$).
We replace~$\apath'$ in the fastest~$\source$--$\target$~path with~$\apath$.
This increases the total trip time of the $\source$--$\target$~path by~$\drivingtimefunction(\apath)-\drivingtimefunction(\apath')\ge0$. It remains feasible, since the energy consumption does not increase.
By Condition 2,~$\target$ is not reachable without recharging after reaching~$\vertexa$.
Hence, there has to be a charging station on the~$\vertexb$--$\target$~path that is used.
Due to Condition 3, the battery limit is never reached between~$\vertexb$ and this charging station. Therefore, when replacing $\apath'$ with~$\apath$, the \gls*{soc} at the charging station increases by~$\consumptionfunction(\apath')-\consumptionfunction(\apath)>0$.
Consequently, we do not have to recharge this amount of energy, which reduces the charging time by~$(\consumptionfunction(\apath')-\consumptionfunction(\apath))/\chargingfunction_{\max}$.
Thus, the trip time when replacing $\apath'$ with $\apath$ is given by
\begin{align*}
 \atime_{\apath'}+\drivingtimefunction(\apath)-\drivingtimefunction(\apath')-\frac{\consumptionfunction(\apath')-\consumptionfunction(\apath)}{\chargingfunction_{\max}} \geq \atime_{\apath}\text{.}
\end{align*}
This value cannot be less than~$\atime_{\apath}$, since~$\atime_{\apath}$ is the minimal trip time when using~\apath. We use that~$\atime_{\apath}-\atime_{\apath'}>0$ holds per assumption, which leads to
\begin{align*}
&\drivingtimefunction(\apath)-\drivingtimefunction(\apath')-\frac{\consumptionfunction(\apath')-\consumptionfunction(\apath)}{\chargingfunction_{\max}} \geq \atime_{\apath}-\atime_{\apath'} > 0\\
\Leftrightarrow~~&\drivingtimefunction(\apath)+\frac{\consumptionfunction(\apath)}{\chargingfunction_{\max}} > \drivingtimefunction(\apath')+\frac{\consumptionfunction(\apath')}{\chargingfunction_{\max}}\\
\Leftrightarrow~~&\omega(\apath) > \omega(\apath')\text{,}
\end{align*}
contradicting the fact that~$\apath$ is the~$\vertexa$--$\vertexb$~path that minimizes the cost $\omega$ when going from $\vertexa$ to~$\vertexb$.

\case{Case 2} We consider the case of~$\drivingtimefunction(\apath') > \drivingtimefunction(\apath)$. Hence, we have~$\consumptionfunction(\apath') \le \consumptionfunction(\apath)$.
Again, we replace~$\apath'$ with~$\apath$ in the~$\source$--$\target$~path, which reduces the driving time by~$\drivingtimefunction(\apath')-\drivingtimefunction(\apath)$ and the \gls*{soc} at~$\vertexb$ by~$\consumptionfunction(\apath)-\consumptionfunction(\apath')$.
Condition 4 ensures that there is a charging station available on the path from~$\vertexb$ to~$\target$, which we use to recharge the missing energy.
Condition 3 states that the \gls*{soc} is always below the battery limit between~$\vertexa$ and this charging station.
Therefore, when reaching the charging station, the amount of energy missing is~$\consumptionfunction(\apath)-\consumptionfunction(\apath')$ compared to the~$\source$--$\target$~path using~$\apath'$. We use the charging station to recharge this amount of energy, to ensure that the path to~$\target$ remains feasible.
Altogether, the trip time for reaching~$\target$ when replacing~$\apath'$ with~$\apath$ is given by
\begin{align*}
 \atime_{\apath'}-\left(\drivingtimefunction(\apath')-\drivingtimefunction(\apath)\right)+\frac{\consumptionfunction(\apath)-\consumptionfunction(\apath')}{\chargingfunction_{\max}} \geq \atime_{\apath}\text{.}
\end{align*}
As in the first case, this is equivalent to~$\omega(\apath) > \omega(\apath')$, which is a contradiction to our assumption that $\apath$ minimizes the cost~$\omega$. This completes the proof.
\end{proof}

\section{Experiments}\label{sec:experiments}

All implementations are in~C++ using~g++ 4.8.3 (-O3) as compiler. Experiments were conducted on a single core of a 4-core Intel Xeon E5-1630v3 clocked at 3.7\,GHz, 128\,GiB of DDR4-2133 RAM, 10\,MiB of L3 cache, and 256\,KiB of L2 cache.

\begin{figure} [t]
 \centering
 \includegraphics[width=0.4\textwidth]{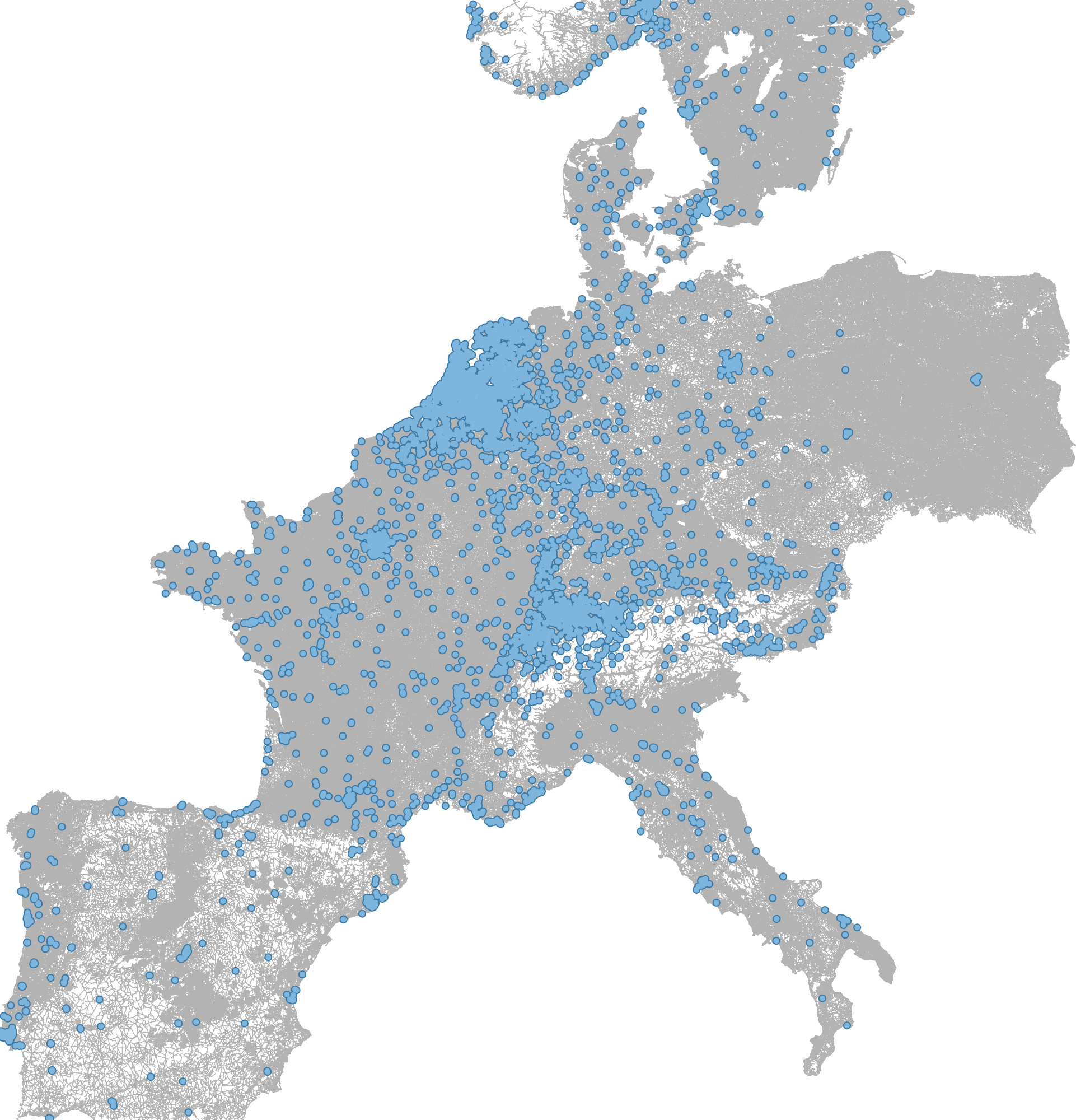}
 \caption{Distribution of charging stations in Europe according to ChargeMap data. We see differences in the distribution of charging stations, which is very dense in the Netherlands and Switzerland. On the other hand, Spain, Italy, and Poland contain relatively few charging stations in our data set. Note that the United Kingdom is not contained in our instance, because we extracted the largest strongly connected component of the input (in which the United Kingdom is only reachable via ferry).}
 \label{fig:chargemap}
\end{figure}

\begin{table}[b]
\caption{Instances. We report the number of vertices (including charging stations), arcs, negative consumption arcs (as a fraction of total arcs), as well as charging stations alone~(CS) obtained from ChargeMap. Note that the Southern Germany instance based on \acrshort*{osm} data has many degree-2 vertices, meant for visualization.}
\label{tbl:graphs}
\setlength{\tabcolsep}{1.5ex}
\centering\small
\begin{tabular}{lrrrrr}
\toprule
Instance & \#\,Vertices & \#\,Arcs & \multicolumn{2}{r}{\#\,Arcs with $\consumptionfunction < 0$} & \#\,CS \\ 
\midrule
Germany          &   4\,692\,091 & 10\,805\,429 & 1\,119\,710 &\hspace{-10pt}(10.36\%) & 1\,966 \\
Europe           &  22\,198\,628 & 51\,088\,095 & 6\,060\,648 &\hspace{-10pt}(11.86\%) & 13\,810 \\
Southern Germany &   5\,588\,146 & 11\,711\,088 & 1\,142\,391 &\hspace{-10pt}(9.75\%) & 643\\ 
 \bottomrule
\end{tabular}
\end{table}

\paragraph{Instances.}
Our main instances are based on the road network of Europe~(Eur) and the subnetwork of Germany~(Ger), kindly provided by~PTV~AG (\url{ptvgroup.com}). As in our previous work~\cite{Bau13a}, we extracted average speeds and categories of road segments, augmented by elevation data from the \gls*{srtm},~v4.1 (\url{srtm.csi.cgiar.org}). We derived realistic energy consumption from a detailed micro-scale emission model~\cite{Hau09}, calibrated to a real Peugeot~iOn. It has a battery capacity of 16\,kWh, but we also evaluate for 85\,kWh, as in high-end Tesla models (with a range of 400--500\,km).
Moreover, we located charging stations on ChargeMap (\url{chargemap.com}).
Figure~\ref{fig:chargemap} shows the distribution of charging stations on the Europe instance according to ChargeMap data.
We also evaluate the largest instance considered by Storandt~\cite{Sto12c}. It uses \gls*{osm} data of Southern Germany enriched with \gls*{srtm}, a basic physical consumption model, and 100--1\,000 randomly chosen charging stations, but we also test on ChargeMap stations.
See Table~\ref{tbl:graphs} for an overview of instances and their sizes.
We constructed different charging functions to model certain types of stations, namely,~\gls*{bss},~superchargers (charging an empty battery to 80\,\% in 34\,min for both vehicle models), as well as regular stations with fast~(44\,kW), medium~(22\,kW), or slow~(11\,kW) charging power. For the three latter types of functions, we used the physical model of Uhrig~et~al.~\cite{Uhr15} and approximated the corresponding charging functions with a piecewise linear function (using six breakpoints at~0\,\%,~80\,\%,~85\,\%,~\,90\,\%,~95\,\%, and~100\,\% \gls*{soc}).
See Figure~\ref{fig:csp:experiments-charging-functions} for a plot showing the resulting charging functions for a battery capacity of~85\,kWh.
We set initialization time to 3\,min for~\gls*{bss} and 1\,min for all other types of charging stations.
Both the road network and the energy consumption data mentioned above are based on proprietary data sources, but this allowed us to perform experiments on highly detailed and realistic input data.

\begin{figure}[t]
 \centering
 \begin{tikzpicture}[figure]
  \pgfplotsset{
   grid style = {dash pattern = on 1pt off 1pt, black15,line width = 0.5pt  }
  }

  \begin{axis}[
   name=border,
   height=5.2cm,
   width=0.6\textwidth,
   clip marker paths=true,
   xmin=0,
   xmax=10,
   ymin=0,
   ymax=100,
   /pgf/number format/.cd,
   use comma,
   1000 sep={\,},
   xlabel={Time [h]},
   ylabel={SoC [\%]},
   ymode=linear,
   grid=major,
   legend entries={supercharger, 44\,kWh, 22\,kWh, 11\,kWh},
   legend cell align=left,
   legend style={at={(0.98,0.04)},
   anchor=south east}
  ]

  \addlegendimage{line width=1pt,thesisgreen}
  \addlegendimage{line width=1pt,thesisred}
  \addlegendimage{line width=1pt,thesisyellow}
  \addlegendimage{line width=1pt,thesisblue}
  
  \addplot [color=thesisgreen,line width=1pt] table {
          0   0
      0.567  80
     10      80
  };

  \addplot [color=thesisblue,line width=1pt] table {
          0   0
      6.231  80
      6.677  85
      7.284  90
      8.178  95
     10.829 100
  };
  
  \addplot [color=thesisyellow,line width=1pt] table {
          0   0
      3.116  80
      3.334  85
      3.642  90
      4.089  95
      5.414 100
  };
    
  \addplot [color=thesisred,line width=1pt] table {
          0   0
      1.558  80
      1.670  85
      1.821  90
      2.045  95
      2.710 100
  };
  \end{axis}
  \draw (border.north east) rectangle (border.south west);
 \end{tikzpicture}
 \caption{Charging functions we use in our experiments for a battery capacity of~85\,kWh. Swapping stations (which yield a departure \gls*{soc} of 100\% for any charging time) are not shown in the plot.}
 \label{fig:csp:experiments-charging-functions}
\end{figure}
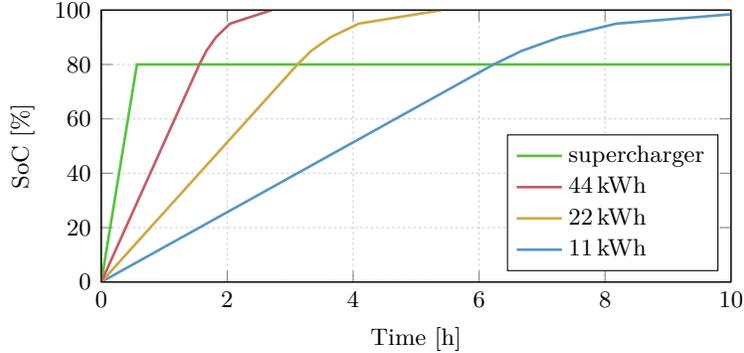

\begin{table}[b]
\caption{Types of charging stations used in the different scenarios.}
\label{tbl:charging-stations}
\centering\small
\begin{tabular}{lrrrrr}
\toprule
Scenario  & 11\,kWh & 22\,kWh & 44\,kWh & Sup.\,Ch. &     BSS \\ 
\midrule
BSS       &     --- &     --- &     --- &       --- & 100\,\% \\
Mixed     &  30\,\% &  30\,\% &  20\,\% &    10\,\% &  10\,\% \\
Realistic &  50\,\% &  40\,\% &  10\,\% &       --- &     --- \\
\bottomrule
\end{tabular}
\end{table}

If not stated otherwise, queries are always generated by picking source and target vertices uniformly at random and an initial \gls*{soc} of~$\soc_\source=\maxbattery$.
In our evaluation, we consider three different distributions of charging stations; see Table~\ref{tbl:charging-stations} for an overview. In the first, all charging stations are~\gls*{bss}. This is the simplest scenario we evaluate, which allows us to compare our algorithm to previous approaches based on simpler models of charging functions~\cite{Sto12c}.
The second uses a mixed composition of chargers, for which we randomly pick 10\,\% of all stations as~\gls*{bss}, 10\,\% as superchargers, 20\,\% as fast chargers, and each 30\,\% as medium and slow chargers. Containing various types of charging stations, this scenario is much more difficult to solve for our approach.
Finally, the third scenario reflects a more realistic distribution of charging stations, where 10\,\% of all stations are fast, 40\,\% are medium, and 50\,\% are slow chargers. This roughly corresponds to the distribution of charging stations in real-world road networks.
In each scenario, the composition is fixed, \ie, every charging station retains its type for all queries.

\paragraph{Evaluating Queries.}
We discuss preprocessing and query performance of our algorithms. We only report the fastest exact method (\gls*{charge} with the potential~$\convexpotential$) and our heuristic approaches---for results on plain \gls*{cfp} see below.

\begin{table}[b]
\caption{Impact of core size on performance (\acrshort*{charge},~Germany,~85\,kWh). We stopped contraction if the average degree in the core graph exceeded a certain threshold~(\O\,Deg). We report the core size (\#\,Vertices), preprocessing time, and average query times of 1\,000 queries answered by~\acrshort*{charge}.}
\label{tbl:csp:core-size-charge}
\setlength{\tabcolsep}{1.5ex}
\centering\small
\begin{tabular}{rrrrrr}
 \toprule
 \multicolumn{2}{c}{Core size} & Prepr. & \multicolumn{3}{c}{Query [ms]} \\
 \cmidrule(lr){1-2}\cmidrule(lr){3-3}\cmidrule(lr){4-6}
 \,\O\,Deg. & \#\,Vertices & [h:m:s] & BSS & Mixed & Realistic \\ \midrule
      8     & 344\,066 (7.33\%) &      2:58 &    4\,461.9 & 203\,576.7 & 179\,430.6 \\
     16     & 116\,917 (2.49\%) &      4:01 &    1\,773.4 &   6\,739.0 &   4\,423.1 \\
     32     &  65\,375 (1.39\%) &      5:03 &    1\,020.1 &   3\,037.3 &   2\,425.5 \\
     64     &  43\,036 (0.91\%) &      7:07 &    1\,062.2 &   3\,184.5 &   2\,533.9 \\
    128     &  30\,526 (0.65\%) &     11:16 &    1\,217.2 &   3\,728.3 &   2\,901.2 \\
    256     &  22\,592 (0.48\%) &     20:22 &    1\,566.0 &   4\,987.5 &   4\,144.4 \\
    512     &  17\,431 (0.37\%) &     37:11 &    2\,285.8 &   6\,936.1 &   5\,498.0 \\
\bottomrule
\end{tabular}
\end{table}

Table~\ref{tbl:csp:core-size-charge} shows details on preprocessing effort and query performance for different core sizes on Germany (for a battery capacity of 16\,kWh). In this experiment, vertex contraction during preprocessing was stopped as soon as the average vertex degree in the core reached the threshold shown in the first column of the table.
We report resulting core graph sizes, preprocessing times, and average query times of~\gls*{charge} for 1\,000 random queries on the \gls*{bss}, mixed, and realistic station composition, respectively.
Apparently, preprocessing effort increases with decreasing core size. We achieve best query performance at an average core degree of~32. At higher degrees, the rather dense core causes query times to increase. Therefore, we use a core degree of 32 as threshold to stop vertex contraction in all further experiments. This results in relative core sizes of 1.3--1.7\,\% on Germany and Europe.
Regarding query times, we observe that the \gls*{bss} composition is the easiest to solve, because vertex potentials are very accurate with only one available charging station type.

\begin{table}[t]
\caption{Preprocessing and query performance for different instances, charging station types, and battery capacities. We report regular preprocessing times for \gls*{charge} (which also applies to the heuristics \gls*{charge}-\algoheuconvex and \gls*{charge}-\algoheuomega) and preprocessing times for~\gls*{charge}-\algoheuomegaaggressive, the percentage of feasible queries, as well as average exact and heuristic query times of 1\,000 queries.}
\label{tbl:csp:charge-performance}
\centering\small
\begin{tabular}{lclrrcrrrrr}
\toprule
&&& \multicolumn{2}{c}{Pr.\,[m:s]} &&&  \multicolumn{4}{c}{Query\,[ms]}\\
\cmidrule(lr){4-5}\cmidrule(lr){8-11}
Ins. & $\maxbattery$ & CS & reg. & \algoheuomegaaggressive  && Feas. & Exact & \algoheuconvex & \algoheuomega & \algoheuomegaaggressive \\
\midrule
\addlinespace
\multirow{4}{*}{\vspace{-65pt}\hspace{10pt}\vertical{Germany}}
 & 16\,kWh & BSS       &  5:03 &  4:33 && 100\,\% &   1\,607.6 &   1\,278.3 &      548.8 &      23.5 \\
 & 16\,kWh & Mixed     &  5:03 &  4:33 && 100\,\% &  12\,191.0 &   4\,217.4 &   1\,781.9 &     189.2 \\
 & 16\,kWh & Realistic &  5:03 &  4:33 && 100\,\% &   6\,177.5 &   2\,613.0 &   1\,119.3 &     103.6 \\
\addlinespace[3pt]
 & 85\,kWh & BSS       &  4:59 &  5:31 && 100\,\% &   1\,020.1 &      988.1 &      174.3 &      31.6 \\
 & 85\,kWh & Mixed     &  4:59 &  5:31 && 100\,\% &   3\,037.3 &   2\,346.4 &      555.3 &      51.7 \\
 & 85\,kWh & Realistic &  4:59 &  5:31 && 100\,\% &   2\,425.5 &   1\,405.7 &      358.7 &      45.6 \\
\addlinespace
\multirow{4}{*}{\vspace{-55pt}\hspace{10pt}\vertical{Europe}}
 & 16\,kWh & BSS       & 30:32 & 28:38 &&  68\,\% &   5\,958.7 &   3\,050.8 &   4\,046.7 &     153.0 \\
 & 16\,kWh & Mixed     & 30:32 & 28:38 &&  68\,\% &  35\,532.8 &  16\,830.7 &  19\,338.9 &  3\,436.2 \\
 & 16\,kWh & Realistic & 30:32 & 28:38 &&  68\,\% &  20\,310.0 &  11\,650.4 &  13\,986.7 &  2\,146.5 \\
\addlinespace[3pt]
 & 85\,kWh & BSS       & 30:16 & 29:47 && 100\,\% &  31\,548.3 &  22\,602.6 &   1\,571.3 &      52.8 \\
 & 85\,kWh & Mixed     & 30:16 & 29:47 && 100\,\% & 100\,448.7 &  54\,279.8 &  38\,520.0 &     718.3 \\
 & 85\,kWh & Realistic & 30:16 & 29:47 && 100\,\% &  74\,362.2 &  40\,828.3 &  13\,596.2 &     514.9 \\
\bottomrule
\end{tabular}
\end{table}

Table~\ref{tbl:csp:charge-performance} shows detailed timings on performance for 1\,000 queries on each considered instance.
We evaluate three scenarios (\gls*{bss} only, mixed, and realistic compositions of charging stations) and use our instances Germany and Europe with typical battery capacities (16\,kWh and 85\,kWh).
Preprocessing times are quite practical, considering the problem at hand, ranging from about 5--30\,minutes. 
As before, the mixed and realistic scenarios are harder to solve. As a general observation, increasing the maximum battery capacity leads to faster queries. This can be explained by the fact that less charging is required, so goal direction provided by the potential functions of A* search is more helpful. A notable exception is the capacity of 16\,kWh on Europe. In this case, the number of feasible queries drops significantly, due to a highly non-uniform distribution of charging stations (sparse in parts of Southern and Eastern Europe; see Figure~\ref{fig:chargemap}).
This benefits approaches based on the potential function~$\convexpotential$, as we can often detect infeasibility already during potential computation (the lower bound on trip time evaluates to~$\infty$). On the other hand, infeasible queries often deteriorate the performance when using the potential function~$\omegapotential$: Because the target is never reached, large parts of the graph are explored until the queue runs empty.

All in all, running times of the exact algorithm are below 15~seconds on average on Germany and at most 100~seconds on Europe, which is quite notable given that we could not even run a single (long-distance) \gls*{cfp} query on this instance in several hours.
Note that the mixed composition is a rather difficult configuration for our algorithms, due to large differences in the charging speeds of the available charging station types (which results in less accurate vertex potentials).
When using the potential function $\omegapotential$ for (exact) \gls*{charge} (not reported in the table), running times increase by up to an order of magnitude, depending on the scenario. Hence, plugging in the more sophisticated potential function $\convexpotential$ pays off.
For heuristic approaches, we see that---in contrast to \gls*{charge}---those based on the potential function $\omegapotential$ are faster (except for instances with many infeasible~queries).

\begin{table}[t]
\caption{Detailed query performance of \gls*{charge} (Germany,~85\,kWh) for mixed and realistic charging stations~(CS). For exact \gls*{charge} (Ex.) and the different heuristics (\algoheuconvex,~\algoheuomega, and~\algoheuomegaaggressive), we report average values of the number of settled labels (\#\,Labels), pairwise dominance checks~(\#\,Dom.), and running times for 1\,000 queries. For the resulting trips, we report the percentage of feasible and optimal paths, average and maximum increase in trip time compared to an optimal solution, as well as average trip time and average total charging time on the resulting routes.}
\label{tbl:csp:charge-detailed-queries}
\centering\small
\setlength{\tabcolsep}{0.98ex}
\begin{tabular}{llrrrcrrrrcrr}
 \toprule
 & &\multicolumn{3}{c}{Query} && \multicolumn{4}{c}{Trip Quality} && \multicolumn{2}{c}{Trip Times}\vspace{1pt}\vspace{1pt}\\
 \cmidrule(lr){3-5}\cmidrule(lr){7-10}\cmidrule(lr){12-13}
 CS & Algo.  & \#\,Labels & \#\,Dom. &  Time\,[ms] && Feas.  & Opt. & Avg. & Max.\vspace{2pt} && $\triptime$ & $\chargingtime$ \\
 \midrule
 \multirow{4}{*}{\vspace{-24pt}\hspace{10pt}\vertical{Mixed}}
 & Ex.                     & 490\,421 & 61\,183\,835 & 3\,037.3 && 100\,\% & 100\,\% & 1.0000 & 1.0000 && 3:25:45 & 2:01 \\
 & \algoheuconvex          & 468\,977 &     161\,103 & 2\,346.4 && 100\,\% &  84\,\% & 1.0012 & 1.0519 && 3:26:00 & 2:01 \\
 & \algoheuomega           & 313\,018 &  7\,864\,494 &    555.3 && 100\,\% &  82\,\% & 1.0004 & 1.0198 && 3:25:50 & 2:01 \\
 & \algoheuomegaaggressive &  16\,110 &      45\,471 &     51.7 && 100\,\% &  54\,\% & 1.0198 & 1.1833 && 3:29:49 & 2:02\vspace{2pt}\\
 \addlinespace
 \multirow{4}{*}{\vspace{-32pt}\hspace{10pt}\vertical{Realistic}}
 & Ex.                     & 406\,961 & 49\,901\,425 & 2\,425.5 && 100\,\% & 100\,\% & 1.0000 & 1.0000 && 3:43:58 & 23:11 \\
 & \algoheuconvex          & 345\,397 &     112\,669 & 1\,405.7 && 100\,\% &  84\,\% & 1.0006 & 1.0362 && 3:44:05 & 23:11 \\
 & \algoheuomega           & 219\,620 &  4\,061\,188 &    358.7 && 100\,\% &  69\,\% & 1.0010 & 1.0524 && 3:44:12 & 23:11 \\
 & \algoheuomegaaggressive &  16\,837 &      38\,691 &     45.6 && 100\,\% &  59\,\% & 1.0153 & 1.1575 && 3:47:24 & 23:29 \\
 \bottomrule
\end{tabular}
\end{table}

Table~\ref{tbl:csp:charge-detailed-queries} reports detailed figures (for the same set of queries as in Table~\ref{tbl:csp:charge-performance}) on Germany, using a battery capacity of~85\,kWh. We argue that this results in the most sensible queries: Due to a reasonably dense distribution of charging stations in Germany (see Figure~\ref{fig:chargemap}), all queries are feasible and the target is always reachable with a small number of charging stops.
Consequently, we obtain an average trip time of~3\,h\,25\,min\,45\,sec and an average charging time of 2\,min\,1\,sec on Germany for the mixed composition of charging stations. In contrast to that, queries across Europe evaluated in Table~\ref{tbl:csp:charge-performance} (which we analyze rather to show scalability of our algorithms) yield an average trip time of over~10\,h.
We also add figures for the realistic scenario, which yields an average trip time of 3\,h\,43\,min\,58\,sec and an average charging time of~23\,min\,11\,sec.
Independent of the scenario and the utilized algorithm, the number of required charging stops is $0.57$ on average and no path has more than four charging stops in total.
Note that our algorithm does not exhibit any limit on the number of charging stops. In theory, it is therefore easy to construct examples where an optimal route contains a large number of charging stops (even in cases where the target is reachable with only a few or no stops at all). However, this is unlikely to happen in practice, since a charging stop typically involves a detour and we also take initialization time into account.
Our experiments confirm this: The number of charging stops is small despite the fact that we assess a rather moderate initialization time of only one minute.

Regarding the performance of our algorithms, we observe that the number of settled labels is a good indicator of the running time.
Moreover, all approaches are quite practical. Computing optimal routes takes only a few seconds.
The heuristic \gls*{charge}-\algoheuomega provides a good trade-off between running times (300--600\,ms) and resulting errors (0.1\,\% on average, 5\,\% in the worst case).
Our aggressive approach \gls*{charge}-\algoheuomegaaggressive allows query times of about 50\,ms on average, which is fast enough even for interactive applications. However, we see that solution quality is up to~18\,\% off the optimum in the worst case. Still, the average error of all heuristics is very small, and the optimal solution is found in more than half of the cases.

\begin{figure}[t]
 \centering
 \input{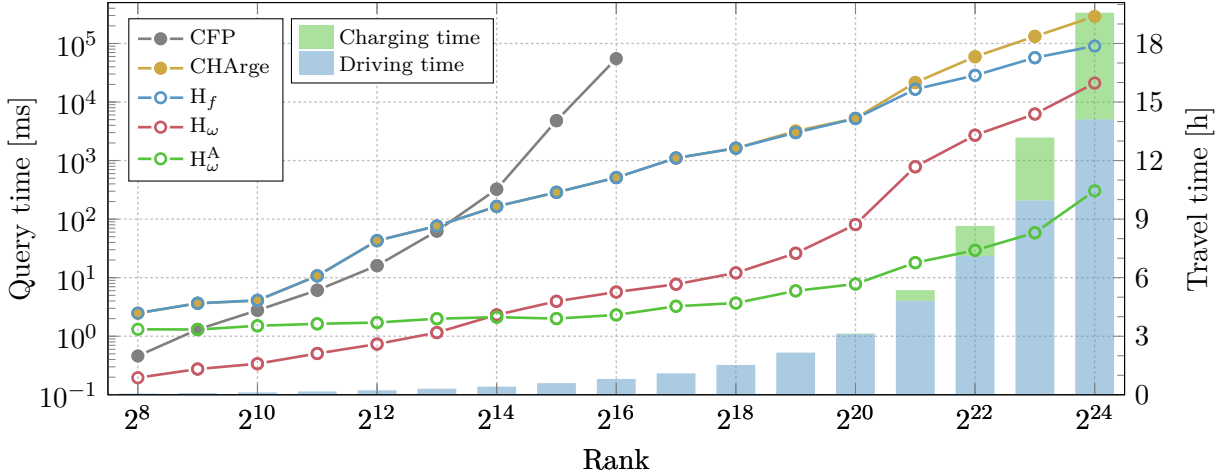}
 \caption{Running times of \acrshort*{charge} subject to Dijkstra rank. Each point in the plot corresponds to the median running time of 100 random queries of the indicated Dijkstra rank on Europe with a 85\,kWh battery and the realistic composition of charging stations. Each bar shows the average driving time (blue) and charging time (green) that an optimal solution of the respective rank requires (computed by \acrshort*{charge}).}
 \label{fig:csp:charge-rankplot-combined}
\end{figure}

\paragraph{Evaluating Scalability.}
In our evaluations above, we considered queries with source and target vertices picked uniformly at random. Note that this typically results in long-distance queries (in relation to the graph diameter). To distinguish local and long-distance queries, we consider the \emph{Dijkstra rank} of a query, which equals the number of queue extractions when running Dijkstra's algorithm from the source to the target with unconstrained driving time as arc costs~\cite{Bas14,San05}. Thus, higher ranks reflect queries that are more difficult to solve. Moreover, charging en route is only required for queries of high rank.

Figure~\ref{fig:csp:charge-rankplot-combined} shows median query running times, as well as average driving and charging times, on our Europe instance with realistic charging stations, subject to their Dijkstra rank.
We ran 100 queries per rank. Query times for \gls*{cfp} are only reported up to a rank of $2^{16}$, because for higher ranks at least one query did not terminate within a predefined limit of one hour.
The colored bars show that higher Dijkstra ranks correspond to routes with longer travel time.
The green part of each bar shows the portion of the overall travel time that is spent at charging stations.
Given a relatively large battery capacity of~85\,kWh, charging stops are only necessary for queries of the highest Dijkstra ranks, starting from rank~$2^{21}$.
Consequently, running times of our faster approaches increase significantly at this rank. It also becomes evident that all approaches have exponential running times, with timings of \gls*{charge} in the order of minutes for the highest ranks. Nevertheless, performance remains practical for ranks beyond~$2^{20}$. Running times of our fastest heuristic (\gls*{charge}-\algoheuomegaaggressive) are well below a second for the highest ranks.

\begin{table}[t]
\caption{Comparison of \acrshort*{charge} with existing work~\cite{Sto12c}. For different distributions of charging stations and ranges, we report preprocessing times of the existing approach~\cite{Sto12c},~\gls*{charge}, and the heuristic~\gls*{charge}-\algoheuomegaaggressive. Regarding queries, we show the percentage of feasible paths, as well as average exact and heuristic query times of 1\,000 queries. Results from the literature~\cite{Sto12c} are reported as is from a Core~i3-2310M,~2.1\,GHz.}
\label{tbl:csp:charge-comparison}
\centering\small
\setlength{\tabcolsep}{0.98ex}
\begin{tabular}{lcrrrcrrrrrr}
\toprule
&& \multicolumn{3}{c}{Prepr.\,[m:s]} &&& \multicolumn{5}{c}{Query\,[ms]}\\
\cmidrule(lr){3-5}\cmidrule(lr){8-12}
CS & $\maxbattery$ & \cite{Sto12c} & \acrshort*{charge} & \algoheuomegaaggressive  && Feas. & \cite{Sto12c} & \acrshort*{charge} & \algoheuconvex & \algoheuomega & \algoheuomegaaggressive\\
\midrule
r1000 & 100\,km & 51:41 & 11:37 & 2:30 && 100\,\% &    539.0 & 122.1 &   93.7 & 38.1 & 2.7\\
r100  & 150\,km & 16:21 & 11:10 & 2:15 &&  99\,\% & 1\,150.0 & 206.0 &  116.5 & 26.9 & 1.4\\
c643  & 100\,km &   --- & 11:21 & 2:32 &&  98\,\% &      --- & 326.3 &  282.6 & 48.3 & 3.3\\
c643  & 150\,km &   --- & 11:28 & 2:29 &&  99\,\% &      --- & 308.3 &  270.0 & 22.7 & 2.9\\
\bottomrule
\end{tabular}
\end{table}

\paragraph{Comparison with Related Work.}
We also consider the largest instance that was used to evaluate the fastest approach for routes with charging stops in the literature~\cite{Sto12c}.
It utilizes a simple physical consumption model, where energy consumption of an arc is a linear combination of its horizontal distance and slope.
All charging stations in the original work are~\gls*{bss}, picked uniformly at random from all vertices of the input graph.
Table~\ref{tbl:csp:charge-comparison} shows detailed figures, using 100--1\,000 randomly placed charging stations~(r100,~r1000) and battery capacities that translate to a certain range in the physical model.
Query times are average values of 1\,000 random queries.
For completeness, we also run tests on ChargeMap stations (only \gls*{bss}), 643 of which are included in the considered road network. As before, query times are average values of 1\,000 random queries.
We observe that our approach is faster \wrt both preprocessing and queries, even when taking differences in hardware into account. At the same time, \gls*{charge} is more general and not inherently restricted to~\gls*{bss}.
Furthermore, note that a denser distribution of random charging stations enables faster query times: Although preprocessing effort increases slightly (more charging stations are kept in the core), the potential functions more often rightly assume that a station will be available close to the remaining path.
We see that using charging station locations from ChargeMap~(c643) gives a slightly harder instance.
In conclusion, we are able to solve instances that are harder than those evaluated in the literature.

\section{Conclusion}\label{sec:conclusion}

We introduced \gls*{cfp}, a novel approach for \gls*{ev} route planning that computes shortest feasible paths with charging stops, minimizing overall trip time. Combining vertex contraction and charging-aware goal-directed search, we developed \gls*{charge}, which solves this \NP-hard problem optimally and with practical performance even on large realistic inputs.
For reasonable distances, our algorithm computes solutions within seconds, making \gls*{charge} the first practical exact approach---with running times similar to previous methods that are inexact or use less accurate models~\cite{Goo14,Sto12c}.
Finally, we proposed heuristic variants that enable even faster queries, while offering high-quality solutions.

Future work includes allowing multiple driving speeds~(and consumption values) per road segment, \eg, in order to save energy on the highway by driving at reasonable lower speeds~\cite{Bau17a,Har14}.
Moreover, we are interested in more detailed models for energy consumption on turns.
In preliminary experiments, we modified our input instances according to a known arc-based approach for integrating turns~\cite{Gei11a}. Using a simple model that takes consumption overhead for acceleration and deceleration along turns (or when speed limits change) into account, we observed that preprocessing took slightly longer but also resulted in smaller cores, while query times remained nearly the same.
This can be explained by the fact that graph size increases when it is enriched with turn costs, but at the same time the number of nondominated solutions decreases (minor detours no longer allow energy savings). This clearly indicates that our techniques can be readily extended to handle turn costs and turn restrictions.

For integration of historic and live traffic, the adaptation of customizable techniques appears to be a natural extension of our approaches. While preliminary experiments showed that our techniques based on \gls*{ch} perform equally fast on vertex orders required by Customizable \gls*{ch}~\cite{Dib16}, a major challenge is the design of fast customization algorithms that can handle the dynamic data structures required by label sets in our approach.
On the other hand, preprocessing time for \gls*{charge} is already in the order of minutes and can even be reduced by increasing core size at the cost of higher query times; see Table~\ref{tbl:csp:core-size-charge} in Section~\ref{sec:experiments}. This is fast enough for many applications that require frequent updates of the underlying graph.

Furthermore, we are interested in extending our algorithms to more complex settings, \eg, by taking into account that charging stations might currently be in use by other customers. In such a scenario, some charging stations could also be reserved in advance to keep waiting times low~\cite{Qin11}.
For fast heuristic variants, it might also be useful to precompute potentials for A*~search, as in \gls*{alt}~\cite{Gol05b}.
From a more theoretical perspective, approximability of both problem settings considered in this chapter is an open question. Results by Strehler~et~al.~\cite{Mer15} include an FPTAS for a very similar problem concerning routes with charging stops, which might extend to our setting.

\paragraph{Acknowledgements.}
The authors would like to thank Raphael Luz for providing consumption data, Martin Uhrig for charging functions, and Sabine Storandt for benchmark data.

%
%
%


\bibliographystyle{plain} 
\bibliography{references-trsc} 


\end{document}